\theoremstyle{theorem}
\newtheorem{proposition}{Proposition}[section]
\newtheorem{theorem}{Theorem}[section]
\theoremstyle{definition}
\newtheorem{df}{Definition}[section]
\newtheorem{definition}[df]{Definition}
\newtheorem{example}{Example}[section]
\newcommand{\CL}{\ensuremath{\mathrm{CL}}}
\newcommand{\PR}{\ensuremath{\mathrm{PR}}}
\newcommand{\cl}{\ensuremath{\mathrm{cl}}}
\newcommand{\pr}{\ensuremath{\mathrm{pr}}}
\newcommand{\LK}{\ensuremath{\mathbf{LK}}}
\newcommand{\LKS}{\ensuremath{\mathbf{LKS}_\Ecal}}
\newcommand{\LKSI}{\LKIE}
\newcommand{\seq}{\vdash}
\newcommand{\bigor}{\bigvee}
\newcommand{\la}{\leftarrow}
\newcommand{\ra}{\rightarrow}
\newcommand{\impl}{\supset}
\newcommand{\goesto}{\leftarrow}
\newcommand{\CERESs}{{\rm CERES}_s}
\newcommand{\Acal}{{\cal A}}
\newcommand{\Ecal}{{\cal E}}
\newcommand{\Rcal}{{\cal R}}
\newcommand{\Scal}{{\cal S}}
\newcommand{\LKE}{\ensuremath{\mathbf{LK_\Ecal}}}
\newcommand{\LKIE}{\ensuremath{\mathbf{LKI_\Ecal}}}
\newcommand{\union}{\cup}
\newcommand{\fhat}{\hat{f}}
\newcommand{\ES}{{\it ES}}
\newcommand{\res}{{\it res}}
\newcommand{\N}{\mathbb{N}}
\newcommand{\Ccal}{{\cal C}}
\newcommand{\IN}{\subseteq}
\newcommand{\Null}{\bar{0}}
\newcommand{\ioi}{\leftrightarrow}
\newcommand{\ass}{\leftarrow}
\newcommand{\Vso}{V_2}
\newcommand{\Vcs}{V_{{\rm clset}}}
\newcommand{\CST}{{\rm CST}}
\newcommand{\Plus}{\oplus}
\newcommand{\Times}{\otimes}
\newcommand{\vcst}{v_{{\rm cst}}}
\newcommand{\vc}{v_{{\rm c}}}
\newcommand{\dom}{{\it dom}}
\newcommand{\clauses}{{\it CLAUSES}}
\newcommand{\CS}{{\it CS}}
\newcommand{\vphi}{\varphi}
\newcommand{\all}{\forall}
\newcommand{\ex}{\exists}
\newcommand{\Or}{\vee}
\newcommand{\Pb}{\bar{P}}
\newcommand{\Qb}{\bar{Q}}
\newcommand{\nats}{\mathbb{N}}
\newcommand{\rw}{\twoheadrightarrow} 
\newcommand{\ind}{\mathrm{IND}} 
\newcommand{\vars}{\mathrm{V}} 
\def\RCS$#1: #2 ${\expandafter\def\csname RCS#1\endcsname{#2}}
\title{CERES for First-Order Schemata\thanks{Supported by the project I383 of the Austrian Science Fund.}}
\author{Cvetan Dunchev \and Alexander Leitsch \and Mikheil Rukhaia \and Daniel Weller}
\begin{document}

\maketitle


\begin{abstract}
The cut-elimination method CERES (for first- and higher-order classical logic) is based on the notion of a characteristic clause set, which is extracted from an {\LK}-proof and is always unsatisfiable. A resolution refutation of this clause set can be used as a skeleton for a proof with atomic cuts only (atomic cut normal form). This is achieved by replacing clauses from the resolution refutation by the corresponding projections of the original proof.

We present a generalization of CERES (called $\CERESs$) to first-order proof schemata and define a schematic version of the sequent calculus called \LKS, and a notion of proof schema based on primitive recursive definitions. A method is developed to extract schematic characteristic clause sets and schematic projections from these proof schemata. We also define a schematic resolution calculus for refutation of schemata of clause sets, which can be applied to refute the schematic characteristic clause sets. Finally the projection schemata and resolution schemata are plugged together and a schematic representation of the atomic cut normal forms is obtained. A major benefit of $\CERESs$ is the extension of cut-elimination to inductively defined proofs: we compare $\CERESs$ with standard calculi using induction rules and demonstrate that $\CERESs$ is capable of performing cut-elimination where traditional methods fail. The algorithmic handling of $\CERESs$ is supported by a recent extension of the CERES system.

\smallskip
\noindent \textbf{Keywords:} Cut-elimination, induction, schemata, resolution.
\end{abstract}

\section{Introduction}

Cut-elimination was originally introduced by G.~Gentzen in~\cite{Gentzen1935}
as a theoretical tool from which results like decidability and consistency could
be proven. Cut-free proofs are computationally explicit objects from which interesting
information such as Herbrand disjunctions and interpolants
can be easily extracted. When viewing formal proofs as a model for mathematical proofs,
cut-elimination corresponds to the removal of lemmas, which leads to interesting applications
(such as one described below).

For such applications to mathematical proofs, the cut-elimination method CERES (cut-elimination
by resolution) was developed. It essentially reduces cut-elimination for a proof $\pi$
to a theorem proving problem: the refutation of the {\em characteristic clause set $\CL(\pi)$}.
Given a resolution refutation $\gamma$ of $\CL(\pi)$, an essentially cut-free proof
can be constructed by a proof-theoretic transformation.

It is well-known that cut-elimination in standard calculi of arithmetic, containing an induction rule, is impossible in general~\cite{Takeuti1975} (see also~\cite{McDowell2000,Baelde2007,Tait1968} for other approaches to inductive reasoning using induction
rules). In fact, if $\varphi$ is a proof of a sequent $S\colon \Gamma \seq \forall x.A(x)$, where an induction rule occurs over a cut, the cut cannot be shifted over the induction rule and thus cannot be eliminated. This is not a feature of a specific cut-elimination method, but ,even in principle, inductive proofs require lemmata which cannot be eliminated. When we consider, on the other hand, the infinite sequence of proofs $\varphi_n$ of $S_n\colon \Gamma \seq A(n)$, every of these proofs enjoys cut-elimination. One could hope that, with a sufficiently nice finite description of the infinite sequence $\varphi_n$, a finite description of a sequence of corresponding cut-free proofs comes within reach. The subject of this paper is to find appropriate finite representations of such proof sequences and to develop a formalism to represent sequences of corresponding cut-free proofs. It turned out that, to this aim, the method CERES (cut-elimination by resolution) is more suitable than the traditional reductive method. Via the above transformation we obtain a method of cut-elimination for inductive proofs, which is impossible in ordinary arithmetic calculi.
The approach to describing infinite sequences of proofs 
we take will be based on {\em proof links} which serve as formal placeholders for proofs. Related
approaches are found in the literature on {\em cyclic proofs}~\cite{Sprenger2003,Brotherston2005}.

The present work was also motivated by an application of CERES to (a formalization of) a mathematical proof:
F\"urstenberg's proof of the infinity of primes~\cite{Aigner1999,Baaz2008}. The proof was formalized as a sequence of proofs $\varphi_n$ 
showing that the assumption that there exist exactly $n$ primes is contradictory. The application was
performed in a semi-automated way: $\CL(\varphi_n)$ was computed for some small values of $n$ and from
this, a general schema $\CL(\varphi_n)$ was constructed and subsequently analyzed by hand. The analysis
finally showed that from F\"urstenberg's proof, which makes use of topological concepts,
Euclid's elementary proof could be obtained by cut-elimination.  

The analysis of the F\"urstenberg proof described above reveals the need for the development of a formal language for handling schemata. First of all one can compute the schema $\CL(\varphi_n)$ directly from the specification of the proofs $\varphi_n$, thus paving the way for formal verification of schematic cut-elimination. Formal schematic resolution calculi could provide a tool for interactively developing and verifying refutations of clause schemata. Furthermore, on the basis of these definitions, software tools for cut-elimination in the presence of induction can be developed.

This paper is structured in the following way: In Section~\ref{sec:ceres} we give a short description of the method CERES  for first-order logic. In Section~\ref{sec:motivation} we define an inductive proof $\varphi$ of a sequent $S\colon \Gamma \seq \forall x.A(x)$ not admitting cut-elimination and show informally, how we can obtain a uniform description of the proofs $\varphi_n$ and of the corresponding sequence of cut-free proofs. The rest of the paper is devoted to the development of a formal machinery realizing the methodology described above. In Sections~\ref{sec:schemlang} and~\ref{sec:schemproof} we introduce schematic first-order languages for formulas and proofs. In Section~\ref{sec:resschema} we introduce the concept of clause-schemata and clause-set schemata and develop a schematic resolution calculus. In Section~\ref{sec:schemceres} we extend the first-order CERES method to proof schemata.    

\section{Notations and Definitions}

Our notions of proof will all be based on the usual classical sequent calculus {\LK}: An 
expression of the form $\Gamma \seq \Delta$, where $\Gamma$ and $\Delta$ are
multisets of formulas, is called a {\em sequent}.

We define some simple operations on sequents: let $S\colon \Gamma \seq \Delta$ and $S'\colon \Pi \seq \Lambda$ be sequents; we define $S \circ S'$ (the merge of $S$
and $S'$) as $\Gamma,\Pi \seq \Delta,\Lambda$. Let $\Scal$ and $\Scal'$ be sets of sequents then
$$\Scal \times \Scal' = \{S \circ S' \mid S \in \Scal, S' \in \Scal'\}.$$
%
%
%
The rules of the sequent calculus {\LK} for first-order logic are the following:
\begin{enumerate}
\item Logical rules:
\begin{itemize}
\item $\neg$ introduction \\
\begin{center}
\AxiomC{$\Gamma \vdash \Delta, A$}
\RightLabel{$\neg \colon l$} 
\UnaryInfC{$\neg A, \Gamma \vdash \Delta$} 
\DisplayProof 
$\qquad$ and $\qquad$ 
\AxiomC{$A, \Gamma \vdash \Delta$} 
\RightLabel{$\neg \colon r$} 
\UnaryInfC{$\Gamma \vdash \Delta, \neg A$} 
\DisplayProof \newline
\end{center}

\item $\land$ introduction \\
\begin{center}
\AxiomC{$A, \Gamma \vdash \Delta$} 
\RightLabel{$\land \colon l1$} 
\UnaryInfC{$A \land B, \Gamma \vdash \Delta$} 
\DisplayProof 
$\qquad$ and $\qquad$ 
\AxiomC{$B, \Gamma \vdash \Delta$} 
\RightLabel{$\land \colon l2$} 
\UnaryInfC{$A \land B, \Gamma \vdash \Delta$} 
\DisplayProof \newline
\end{center}
\begin{center}
\AxiomC{$\Gamma \vdash \Delta, A$} 
\AxiomC{$\Pi \vdash \Lambda, B$} 
\RightLabel{$\land \colon r$} 
\BinaryInfC{$\Gamma, \Pi \vdash \Delta, \Lambda, A \land B$}
\DisplayProof \newline
\end{center}

\item $\lor$ introduction \\
\begin{center}
\AxiomC{$A, \Gamma \vdash \Delta$} 
\AxiomC{$B, \Pi \vdash \Lambda$} 
\RightLabel{$\lor \colon l$} 
\BinaryInfC{$A \lor B, \Gamma, \Pi \vdash \Delta, \Lambda$}
\DisplayProof \newline
\end{center}
\begin{center}
\AxiomC{$\Gamma \vdash \Delta, A$} 
\RightLabel{$\lor \colon r1$} 
\UnaryInfC{$\Gamma \vdash \Delta, A \lor B$} 
\DisplayProof 
$\qquad$ and $\qquad$ 
\AxiomC{$\Gamma \vdash \Delta, B$} 
\RightLabel{$\lor \colon r2$} 
\UnaryInfC{$\Gamma \vdash \Delta, A \lor B$} 
\DisplayProof \newline
\end{center}

\item $\impl$ introduction \\
\begin{center}
\AxiomC{$\Gamma \vdash \Delta, A$} 
\AxiomC{$B, \Pi \vdash \Lambda$} 
\RightLabel{$\impl \colon l$} 
\BinaryInfC{$A \impl B, \Gamma, \Pi \vdash \Delta, \Lambda$}
\DisplayProof 
$\quad$ and $\quad$ 
\AxiomC{$A, \Gamma \vdash \Delta, B$} 
\RightLabel{$\impl \colon r$} 
\UnaryInfC{$\Gamma \vdash \Delta, A \impl B$} 
\DisplayProof \newline
\end{center}

\item $\forall$ introduction \\
\begin{center}
\AxiomC{$\Gamma \vdash \Delta, A(\alpha)$} 
\RightLabel{$\forall \colon r$} 
\UnaryInfC{$\Gamma \vdash \Delta, \forall x\;A(x)$} 
\DisplayProof 
$\qquad$ and $\qquad$
\AxiomC{$A(t), \Gamma \vdash \Delta$} 
\RightLabel{$\forall \colon l$} 
\UnaryInfC{$\forall x\;A(x),\Gamma \vdash \Delta$} 
\DisplayProof \newline
\end{center}
where $\alpha$ is a variable of appropriate sort not occuring in $\Gamma,\Delta,A(x)$.

\item $\exists$ introduction \\
\begin{center}
\AxiomC{$\Gamma \vdash \Delta, A(t)$} 
\RightLabel{$\exists \colon r$} 
\UnaryInfC{$\Gamma \vdash \Delta, \exists x\;A(x)$} 
\DisplayProof 
$\qquad$ and $\qquad$
\AxiomC{$A(\alpha), \Gamma \vdash \Delta$} 
\RightLabel{$\exists \colon l$} 
\UnaryInfC{$\exists x\;A(x),\Gamma \vdash \Delta$} 
\DisplayProof \newline
\end{center}
where $\alpha$ is a variable of appropriate sort not occuring in $\Gamma,\Delta,A(x)$.
\end{itemize}

\item Structural rules:
\begin{itemize}
\item Weakening rules: \\
\begin{center}
\AxiomC{$\Gamma \vdash \Delta$} 
\RightLabel{$w \colon l$} 
\UnaryInfC{$A, \Gamma \vdash \Delta$} 
\DisplayProof 
$\qquad$ and $\qquad$ 
\AxiomC{$\Gamma \vdash \Delta$}
\RightLabel{$w \colon r$} 
\UnaryInfC{$\Gamma \vdash \Delta, A$} 
\DisplayProof \newline
\end{center}

\item Contraction rules: \\
\begin{center}
\AxiomC{$A, A, \Gamma \vdash \Delta$} 
\RightLabel{$c \colon l$} 
\UnaryInfC{$A, \Gamma \vdash \Delta$} 
\DisplayProof 
$\qquad$ and $\qquad$ 
\AxiomC{$\Gamma \vdash \Delta, A, A$} 
\RightLabel{$c \colon r$} 
\UnaryInfC{$\Gamma \vdash \Delta, A$} 
\DisplayProof \newline
\end{center}

\item Cut rule:
\begin{center}
\AxiomC{$\Gamma \vdash \Delta, A$} 
\AxiomC{$A, \Pi \vdash \Lambda$} 
\RightLabel{$cut$} 
\BinaryInfC{$\Gamma, \Pi \vdash \Delta, \Lambda$} 
\DisplayProof
\end{center}
\end{itemize}
\end{enumerate}
{\LK}-proofs are endowed with an {\em ancestor relation} on occurrences
of formulas in a natural way (for a definition see~\cite{Baaz2000}).
We now consider two extensions of {\LK}; the first one contains an equality rule which makes the notation of mathematical proofs more practical, the second extension
is by the induction rule.

Let $\Ecal$ be an equational theory (i.e. a finite set of equations). We define the rule
%
%
\[
\AxiomC{$S[t]$} \RightLabel{$\Ecal$} \UnaryInfC{$S[t']$}
  \DisplayProof
\] with the condition that $\Ecal \models t = t'$ and call the corresponding extension of {\LK} {\LKE}. Note that, without restrictions on $\Ecal$, the applicability
of the rule $\Ecal$ is undecidable in general. However, in our paper, the equational theories consist of equations which can be oriented to terminating and confluent
rewrite systems and thus are decidable.

Finally we extend the language of $\LKE$ by the language of arithmetic and write $\omega$ for the sort of natural numbers. The {\em induction rule} is defined as
\begin{center}
\AxiomC{$A(k),\Gamma\seq\Delta,A(k+1)$} \RightLabel{$\ind$} \UnaryInfC{$A(0), \Gamma\seq\Delta, A(t)$} \DisplayProof
\end{center}
where $k$ is a variable of sort $\omega$, $t$ is a term of sort $\omega$, and $k$ does not occur in $\Gamma,\Delta, A(0)$. $A(k)$ is called the {\em induction invariant}.
The resulting calculus is denoted by $\LKIE$.

Consider one of the calculi defined above. A proof is a tree where the nodes are labeled by sequent occurrences and edges are labeled by rules in the usual way. A
proof of $S$ is a proof with root node $S$. Let $\Acal$ be a set of sequents; a proof $\varphi$ of $S$ from $\Acal$ is a proof of $S$ where all leaves of $\varphi$
belong to $\Acal$. If not stated otherwise $\Acal$ is defined as the set of sequents of the form $A \seq A$ for atomic formulas $A$ over the underlying syntax. Atomic
sequents are called {\em clauses}.

A proof $\varphi$ is called {\em cut-free} if the cut rule does not occur in $\varphi$. $\varphi$ is called an ACNF (atomic cut normal form) if all cuts are on atomic
formulas only.

\section{The CERES Method for First-Order Logic}\label{sec:ceres}

The cut-elimination method defined by G. Gentzen in his famous paper~\cite{Gentzen1935} is based on proof rewriting. This rewriting takes place locally (on a
cut-inference in the proof) without taking into account the overall structure of the proof. As a consequence, the method (though elegant and of theoretical
importance) is redundant and inefficient as a tool for analyzing mathematical proofs.

In \cite{Baaz2000,Baaz2006a} the method CERES (Cut-Elimination by RESolution) was defined which takes into account the global structure of a proof $\varphi$ with cut;
this global structure is represented as a quantifier-free formula generally represented as a clause-set term $\Theta(\varphi)$ (evaluating to sets of clauses
$\CL(\varphi)$. It can be shown that $\CL(\varphi)$ is always unsatisfiable. A resolution refutation $\rho$ of $\CL(\varphi)$ then defines a skeleton of an ACNF of
the proof $\varphi$. The final step consists in inserting so-called proof projections into $\rho$ to obtain an ACNF of $\varphi$. The single steps of the method are
illustrated in more detail below.

\begin{df}[clause-set term]\label{def.clsterm}
Clause set terms are binary terms defined as
\begin{itemize}
\item If $C$ is a clause then $[C]$ is a clause-set term.
\item If $t_1$ and $t_2$ are clause-set terms then $t_1 \Plus t_2$ and $t_1 \Times t_2$ are clause-set terms.
\end{itemize}
\end{df}

\begin{df}[semantics of clause-set terms]\label{def.clstermsem}
The mapping $|\mbox{ }|$ maps clause-set terms into sets of clauses by
\begin{itemize}
\item $|[C]| = \{C\}$,
\item $|t_1 \Plus t_2| = |t_1| \union |t_2|$,
\item $|t_1 \Times t_2| = |t_1| \times |t_2|$.
\end{itemize}
\end{df}

The first step of CERES consists in the definition of a clause-set term corresponding to the sub-derivations of an {\LK}-derivation ending in a cut. In particular we
focus on derivations of the cut formulas themselves, i.e. on the derivation of formulas having no successors in the end-sequent.
\begin{df}[characteristic term]\label{def.CHCT}
Let $\varphi$ be an {\LK}-derivation of $S$ and let $\Omega$ be the set of all occurrences of cut formulas in $\varphi$. 
Let $\rho$ be an inference in $\varphi$. We define the clause-set term $\Theta_\rho(\varphi)$ inductively:
  \begin{itemize}
    \item if $\rho$ is an axiom $S'$, let $S''$ be the subsequent of $S'$ consisting of all atoms which are ancestors of an occurrence in $\Omega$, then $\Theta_\rho(\varphi) = [S'']$.
    \item if $\rho$ is a unary rule with immediate predecessor $\rho'$, then $\Theta_\rho(\varphi) = \Theta_{\rho'}(\varphi).$
    \item if $\rho$ is a binary rule with immediate predecessors $\rho_1, \rho_2$, then
      \begin{itemize}
       \item if the auxiliary formulas of $\rho$ are ancestors of $\Omega$, then $\Theta_\rho(\varphi) = \Theta_{\rho_1}(\varphi) \oplus \Theta_{\rho_2}(\varphi)$,
       \item otherwise $\Theta_\rho(\varphi) = \Theta_{\rho_1}(\varphi) \otimes \Theta_{\rho_2}(\varphi).$
      \end{itemize}
  \end{itemize}
%
 Note that, in a binary inference, either both auxiliary formulas
 are ancestors of $\Omega$ or none of them.

 Finally the {\em characteristic term} $\Theta(\vphi)$ is defined
 as $\Theta_{\rho_0}(\vphi)$ where $\rho_0$ is the last inference of $\varphi$.
\end{df}

\begin{df}[characteristic clause set]\label{def.CHCS}
Let $\vphi$ be an {\LK}-derivation and $\Theta(\vphi)$ be the
characteristic term of $\vphi$. Then $\CL(\vphi)$, for $\CL(\vphi)
= |\Theta(\vphi)|$, is called the {\em characteristic clause set}
of $\vphi$.
\end{df}

\begin{example}\label{ex.CL}
Let $\vphi$ be the derivation (for $u,v$ free variables, $a$ a
constant symbol)
\[
  \infer[cut]{(\all x)(\neg P(x) \Or Q(x)) \seq (\ex y)Q(y)}{
    \varphi_1\ \ \ & \ \ \ \varphi_2}
\]
where $\vphi_1$ is the {\LK}-derivation:
\[
\infer[\all:r]{(\all x)(\neg P(x) \Or Q(x)) \seq (\all x)(\ex y)
        (\neg P(x) \Or Q(y))^\star}
  { \infer[\all:l]{(\all x)(\neg P(x) \Or Q(x)) \seq (\ex y)
        (\neg P(u) \Or Q(y))^\star}
   { \infer[\ex:r]{ \neg P(u) \Or Q(u) \seq (\ex y)
            (\neg P(u) \Or Q(y))^\star}
    { \infer[c:r]{ \neg P(u) \Or Q(u) \seq (\neg P(u) \Or Q(u))^\star}
     { \infer[\Or:r]{ \neg P(u) \Or Q(u) \seq (\neg P(u) \Or Q(u))^\star,
            (\neg P(u) \Or Q(u))^\star}
      { \infer[\Or:r]{ \neg P(u) \Or Q(u) \seq (\neg P(u) \Or Q(u))^\star,
                Q(u)^\star}
       { \infer[\neg:r]{ \neg P(u) \Or Q(u) \seq \neg P(u)^\star, Q(u)^\star}
        { \infer[\Or:l]{P(u)^\star, \neg P(u) \Or Q(u) \seq Q(u)^\star}
          { \infer[\neg :r]{\neg P(u), P(u)^\star \seq Q(u)^\star}{P(u)^\star \seq
                    Q(u)^\star,P(u)}
             &
            Q(u), P(u)^\star \seq Q(u)^\star
           }
         }
        }
       }
      }
     }
    }
   }
\]
and $\vphi_2$ is:
\[
\infer[\all:l]{(\all x)(\ex y)(\neg P(x) \Or Q(y))^\star \seq
    (\ex y)Q(y)}
 { \infer[\ex:l]{(\ex y)(\neg P(a) \Or Q(y))^\star \seq
     (\ex y)Q(y)}
   { \infer[\ex:r]{(\neg P(a) \Or Q(v))^\star \seq (\ex y)Q(y)}
     { \infer[\Or:l]{(\neg P(a) \Or Q(v))^\star \seq Q(v)}
       { \infer[\neg:l]{\neg P(a)^\star \seq Q(v)}{\seq Q(v),P(a)^\star}
         &
         Q(v)^\star \seq Q(v)
       }
     }
   }
 }
\]

Let $\Omega$ be the set of the two occurrences of the cut formula in $\vphi$. The ancestors of $\Omega$ are marked by $\star$. We compute the characteristic term
$\Theta(\vphi)$:

From the $\star$-marks in $\vphi$  we first get the clause-set terms corresponding to the initial sequents:
$$X_1 = [P(u) \seq Q(u)],\ X_2 = [P(u) \seq Q(u)],\
  X_3 = [\seq P(a)],\ X_4 = [Q(v) \seq].$$
The leftmost-uppermost inference in $\vphi_1$ is unary and thus
the clause term $X_1$ corresponding to this position does not
change. The first binary inference in $\vphi_1$ (it is $\Or:l$)
takes place on non-ancestors of $\Omega$ -- the auxiliary formulas
of the inference are not marked by $\star$. Consequently we obtain
the term
$$Y_1 = [P(u) \seq Q(u)] \Times [P(u) \seq Q(u)].$$
The following inferences in $\vphi_1$ are all unary and so we
obtain
$$\Theta(\vphi)/\nu_1 = Y_1$$
for $\nu_1$ being the position of the end sequent of $\vphi_1$ in
$\vphi$.

Again the uppermost-leftmost inference in $\vphi_2$ is unary and
thus $X_3$ does not change. The first binary inference in
$\vphi_2$ takes place on ancestors of $\Omega$ (the auxiliary
formulas are $\star$-ed) and we have to apply the $\Plus$ to $X_3,
X_4$. So we get
$$Y_2 = [\seq P(a)] \Plus [Q(v)\seq].$$
 Like in $\vphi_1$ all following inferences in
$\vphi_2$ are unary leaving the clause-set term unchanged. Let $\nu_2$ be the occurrence of the end-sequent of $\vphi_2$ in $\vphi$. Then the corresponding clause
term is
$$\Theta(\vphi)/\nu_2 = Y_2.$$
The last inference (cut) in $\vphi$ takes place on ancestors of
$\Omega$ and we have to apply $\Plus$ again. This eventually
yields the characteristic term
\begin{eqnarray*}
 \Theta(\vphi) &=& Y_1 \Plus Y_2 =\\
 \mbox{} && ([P(u) \seq Q(u)] \Times [P(u) \seq Q(u)]) \Plus
    ([\seq P(a)] \Plus [Q(v)\seq]).
\end{eqnarray*}
For the characteristic clause set we obtain
$$\CL(\vphi) = |\Theta(\vphi)| = \{P(u),P(u) \seq Q(u),Q(u);\
    \seq P(a);\ Q(v) \seq\}.$$
\end{example}

It is easy to verify that the set of characteristic clauses
$\CL(\vphi)$ constructed in the example above is unsatisfiable.
This is not merely a coincidence, but a general principle
expressed in the next proposition.

\begin{proposition}\label{prop.uns}
Let $\vphi$ be an {\LK}-derivation. Then $\CL(\vphi)$ is
unsatisfiable.
\end{proposition}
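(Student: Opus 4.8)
The plan is to prove the proposition by induction on the subderivation of $\vphi$ rooted at an inference $\rho$, using an invariant strong enough to survive the binary inferences and phrased so that it specialises to the desired statement at the root. For an inference $\rho$ with conclusion sequent $\Gamma_\rho \seq \Delta_\rho$, let $C_\rho$ denote the subsequent consisting of exactly the occurrences that are ancestors of $\Omega$ --- the ``cut-part'' of the conclusion --- and write $C_{\rho_1}, C_{\rho_2}$ for the cut-parts of the premises of a binary inference. Call an interpretation $\mathcal{I}$ (a structure together with a variable assignment) a \emph{falsifier} of a sequent $\Pi \seq \Lambda$ if $\mathcal{I} \models F$ for every $F \in \Pi$ and $\mathcal{I} \not\models G$ for every $G \in \Lambda$; observe that the empty sequent is falsified by every $\mathcal{I}$, and that a clause set is unsatisfiable exactly when over every structure some assignment falsifies one of its clauses. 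The invariant I would prove is: \emph{for every inference $\rho$ and every falsifier $\mathcal{I}$ of $C_\rho$, there is an interpretation, agreeing with $\mathcal{I}$ off the eigenvariables occurring in the subderivation rooted at $\rho$, that is a falsifier of some clause of $|\Theta_\rho(\vphi)|$.} Granting this, it suffices to apply it to the last inference $\rho_0$: no occurrence of the end-sequent is an ancestor of $\Omega$ (cut formulas have no successors in the end-sequent), so $C_{\rho_0}$ is the empty sequent, which is falsified by every interpretation; the invariant then yields that over every structure some assignment falsifies a clause of $|\Theta_{\rho_0}(\vphi)| = \CL(\vphi)$, that is, $\CL(\vphi)$ is unsatisfiable.

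The base case is immediate: for an axiom $A \seq A$ the set $|\Theta_\rho(\vphi)|$ equals $\{C_\rho\}$ --- which of the four possibilities occurs depends on which occurrences of $A$ are ancestors of $\Omega$ --- and there are no eigenvariables to worry about. For a unary inference one has $\Theta_\rho(\vphi) = \Theta_{\rho'}(\vphi)$ for the premise $\rho'$, and the work is to verify, rule by rule, that a falsifier $\mathcal{I}$ of $C_\rho$ is also a falsifier of $C_{\rho'}$ --- for $\forall\colon r$ and $\exists\colon l$ after reassigning the eigenvariable a suitable witness, which is harmless since that variable does not occur in $C_\rho$. In every case this is nothing more than the defining truth condition of the connective or quantifier that $\rho$ introduces (for weakening and for contraction on non-cut-ancestors it is trivial), after which the induction hypothesis applies.

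The binary inferences carry the actual content. As noted in Definition~\ref{def.CHCT}, in a binary inference either both auxiliary formulas are ancestors of $\Omega$ or neither is. If neither is, then $\Theta_\rho(\vphi) = \Theta_{\rho_1}(\vphi) \Times \Theta_{\rho_2}(\vphi)$, so $|\Theta_\rho(\vphi)|$ consists of the merges $D_1 \circ D_2$ with $D_i \in |\Theta_{\rho_i}(\vphi)|$, and one checks that $C_\rho = C_{\rho_1} \circ C_{\rho_2}$; a falsifier of $C_\rho$ is simultaneously a falsifier of $C_{\rho_1}$ and of $C_{\rho_2}$, so the two induction hypotheses supply falsifiers of suitable $D_1$ and $D_2$, and --- using that $\vphi$ is regular, whence the eigenvariables of the two sibling subderivations are disjoint --- those falsifiers amalgamate into a single interpretation that falsifies $D_1 \circ D_2$. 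If, instead, both auxiliary formulas are ancestors of $\Omega$, then $\Theta_\rho(\vphi) = \Theta_{\rho_1}(\vphi) \Plus \Theta_{\rho_2}(\vphi)$ and $|\Theta_\rho(\vphi)| = |\Theta_{\rho_1}(\vphi)| \union |\Theta_{\rho_2}(\vphi)|$. Here the auxiliary-formula occurrences lie in $C_{\rho_1}$ and $C_{\rho_2}$ but are only indirectly constrained in $C_\rho$, and the key move is a case distinction driven by the defining truth condition of $\rho$ --- for $cut$, simply on the truth value of the cut formula under the given falsifier $\mathcal{I}$ of $C_\rho$. In each branch $\mathcal{I}$ turns out already to be a falsifier of $C_{\rho_i}$ for at least one $i$: for $cut$ on $A$, if $\mathcal{I} \models A$ then $\mathcal{I}$ falsifies $C_{\rho_2}$, whose antecedent contains $A$, and otherwise it falsifies $C_{\rho_1}$, whose succedent contains $A$; for $\vee\colon l$ on $A \vee B$, from $\mathcal{I} \models A \vee B$ one gets $\mathcal{I} \models A$ or $\mathcal{I} \models B$ and hence a falsifier of $C_{\rho_1}$ or of $C_{\rho_2}$; the remaining binary rules are analogous. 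The induction hypothesis for that $i$, together with $|\Theta_{\rho_i}(\vphi)| \subseteq |\Theta_\rho(\vphi)|$, completes the inductive step.

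The main obstacle is not any single case but getting the invariant exactly right: it must be stated in terms of falsification by interpretations rather than of satisfiability (so that the empty cut-part at the root forces unsatisfiability), and it must carry the clause tracking eigenvariables, without which the amalgamation step of the $\Times$-case cannot be performed. Once the invariant is fixed, every inference reduces to the matching semantic condition, and what remains is routine bookkeeping.
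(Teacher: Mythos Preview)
Your argument is correct. The paper itself does not give a proof here but refers to~\cite{Baaz2000}; the argument there is proof-theoretic rather than semantic: one shows that by retaining only the inferences acting on cut-ancestors (and deleting the end-sequent ancestors from every sequent) the original derivation $\vphi$ turns into an {\LK}-derivation of the empty sequent from the clauses in $\CL(\vphi)$, whence unsatisfiability follows by soundness of {\LK}. Your route is genuinely different: you argue model-theoretically, carrying an invariant about falsifiers of the cut-part $C_\rho$ and reconstructing, by induction on $\rho$, a falsifying assignment for some clause of $|\Theta_\rho(\vphi)|$. Both approaches hinge on the same dichotomy at binary inferences (the $\Plus$/$\Times$ split); yours makes the semantic content explicit and is arguably lighter for the bare unsatisfiability claim, while the syntactic proof has the bonus of exhibiting a concrete refutation. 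One point worth tightening: in the $\Times$ case the amalgamation needs not only that the eigenvariable \emph{sets} of the two siblings are disjoint but also that an eigenvariable of one subderivation does not occur free in the clauses produced by the other; this follows from the standard strong form of regularity (each eigenvariable occurs only above the inference that introduces it), which is obtainable by renaming, so the assumption is harmless but should be stated in that form.
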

\begin{proof}
In \cite{Baaz2000}.
\end{proof}

Let $\vphi$ be a deduction of $S\colon \Gamma \seq \Delta$ and $\CL(\vphi)$ be the characteristic clause set of $\vphi$. Then $\CL(\vphi)$ is unsatisfiable and,  by
the completeness of resolution (see \cite{Robinson1965}), there exists a resolution refutation $\gamma$ of $\CL(\vphi)$. By applying a ground projection to $\gamma$
we obtain a ground resolution refutation $\gamma'$ of $\CL(\vphi)$; by our definition of resolution $\gamma'$ is also an AC-deduction of $\seq$ from (ground instances
of) $\CL(\vphi)$. This deduction $\gamma'$ may serve as a skeleton of an AC-deduction $\psi$ of $\Gamma \seq \Delta$ itself. The construction of $\psi$ from $\gamma'$
is based on {\em projections} replacing $\vphi$ by cut-free deductions $\vphi(C)$ of $\Pb,\Gamma \seq \Delta, \Qb$ for clauses $C: \Pb \seq \Qb$ in $\CL(\varphi)$. We
merely give an informal description of the projections, for details we refer to \cite{Baaz2000,Baaz2006a}. Roughly speaking, the projections of the proof $\vphi$ are
obtained by skipping all the inferences leading to a cut. As a ``residue" we obtain a characteristic clause in the end sequent. Thus a projection is a cut-free
derivation of the end sequent $S$ $+$ some atomic formulas in $S$. For the application of projections it is vital to have a skolemized end sequent, otherwise
eigenvariable conditions could be violated.

The construction of $\vphi(C)$ is illustrated below.

\begin{example}\label{ex.proj}
Let $\vphi$ be the proof of the sequent
$$S:(\all x)(\neg P(x) \Or Q(x)) \seq (\ex y)Q(y)$$
as defined in Example~\ref{ex.CL}. We have shown that
$$\CL(\vphi) = \{P(u),P(u) \seq Q(u),Q(u);\ \ \seq P(a);\ \ Q(v) \seq\}.$$
We now define $\vphi(C_1)$, the ``projection" of $\vphi$ to
$C_1\colon P(u),P(u) \seq Q(u),Q(u)$:\\
The problem can be reduced to a projection in $\vphi_1$ because
the last inference in $\vphi$ is a cut and
\begin{eqnarray*}
 \Theta(\vphi)/\nu_1 &=& [P(u) \seq Q(u)] \Times [P(u) \seq Q(u)].
\end{eqnarray*}
By skipping all inferences in $\vphi_1$ leading to the cut
formulas we obtain the deduction
\[
\infer[\all:l]{P(u),P(u), (\all x)(\neg P(x) \Or Q(x)) \seq Q(u),Q(u)}
 { \infer[\Or:l]{P(u),P(u), \neg P(u) \Or Q(u) \seq Q(u),Q(u)}
   { \infer[\neg:l]{\neg P(u), P(u) \seq Q(u)}{P(u)\seq P(u),Q(u)}
     &
     Q(u),P(u) \seq Q(u)
   }
 }
\]

In order to obtain the end sequent we only need an additional
weakening and $\vphi(C_1) =$
\[
\infer[w:r]{P(u),P(u), (\all x)(\neg P(x) \Or Q(x)) \seq (\ex y)Q(y),
        Q(u),Q(u)}
 { \infer[\all:l]{P(u),P(u), (\all x)(\neg P(x) \Or Q(x)) \seq
Q(u),Q(u)}
 { \infer[\Or:l]{P(u),P(u), \neg P(u) \Or Q(u) \seq Q(u),Q(u)}
   { \infer[\neg:l]{\neg P(u), P(u) \seq Q(u)}{P(u)\seq P(u),Q(u)}
     &
     Q(u),P(u) \seq Q(u)
   }
 }
 }
\]
For $C_2 =\ \seq P(a)$ we obtain the projection $\vphi(C_2)$:
\[
\infer[w:l]{(\all x)(\neg P(x) \Or Q(x)) \seq (\ex y)Q(y), P(a)}
 { \infer[\ex:r]{\seq P(a), (\ex y)Q(y)}
  { \seq P(a), Q(v)}
 }
\]
Similarly we obtain $\vphi(C_3)$:
\[
\infer[w:l]{(\all x)(\neg P(x) \Or Q(x)), Q(v) \seq (\ex y)Q(y)}
 { \infer[\ex:r]{Q(v) \seq (\ex y)Q(y)}
   { Q(v) \seq Q(v)}
 }
\]
\end{example}

Let $\vphi$ be a proof of $S$ s.t. $\vphi$ is skolemized and let
$\gamma$ be a resolution refutation of the (unsatisfiable) set of
clauses $\CL(\vphi)$. Then $\gamma$ can be transformed into a
deduction $\vphi(\gamma)$ of $S$ s.t.
$\vphi(\gamma)$ is a proof with atomic cuts, thus an AC-normal
form of $\vphi$. $\vphi(\gamma)$ is constructed from $\gamma$
simply by replacing the resolution steps by the corresponding
proof projections. The construction of $\vphi(\gamma)$ is the
essential part of the method CERES (the final elimination of
atomic cuts is inessential). The resolution refutation $\gamma$
can be considered as the characteristic part of $\vphi(\gamma)$
representing the essential result of AC-normalization. Below we
give an example of a construction of $\vphi(\gamma)$, for details
we refer to \cite{Baaz2000,Baaz2006a} again.

\begin{example}\label{ex.atc}
Let $\vphi$ be the proof of
$$S\colon (\all x)(\neg P(x) \Or Q(x)) \seq (\ex y)Q(y)$$
as defined in Example~\ref{ex.CL} and in Example~\ref{ex.proj}.
Then
$$\CL(\vphi) = \{C_1: P(u),P(u) \seq Q(u),Q(u);\ C_2:\ \seq P(a);\
C_3: Q(v) \seq\}.$$
First we define a resolution refutation $\delta$ of $\CL(\vphi)$:
\[
\infer[R]{\seq}{
    \infer[R]{\seq Q(a),Q(a)}{\seq P(a) & P(u),P(u) \seq Q(u),Q(u)}
    &
    Q(v) \seq
}
\]
and a corresponding ground refutation $\gamma$:
\[
\infer[R]{\seq}{
    \infer[R]{\seq Q(a),Q(a)}{\seq P(a) & P(a),P(a) \seq Q(a),Q(a)}
    &
    Q(a) \seq
}
\]
The ground substitution defining the ground projection is $\sigma: \{u \ass a, v \ass a\}$.

Let $\chi_1 = \vphi(C_1)\sigma$, $\chi_2 = \vphi(C_2)\sigma$ and
$\chi_3 = \vphi(C_3)\sigma$. Moreover let us write $B$ for $(\all
x)(\neg P(x) \Or Q(x))$ and $C$ for $(\ex y)Q(y)$.

Then $\vphi(\gamma)$ is of the form
\[
\infer[c:r^*]{B \seq C}
 { \infer[c:l^*]{B \seq C,C,C}
  { \infer[cut]{B,B,B \seq C,C,C}
   { \infer[cut]{B,B \seq C,C,Q(a)}
    {
       \deduce{B \seq C,P(a)}{(\chi_2)}
         &
       \deduce{P(a),B \seq C,Q(a)}{(\chi_1)}
     }
     &
    \deduce{Q(a),B \seq C}{(\chi_3)}
  }
 }
}
\]
\end{example}

\section{CERES and Induction}\label{sec:motivation}

We now turn our attention to the issue of cut-elimination in the presence of induction.

Let us consider the sequent $S$:
\begin{equation*}
(\forall x) (P(x) \impl P(f(x))) \seq (\forall n)(\forall x) ( (P(\hat{f}(n,x)) \impl P(g(n,x))) \impl (P(x) \impl P(g(n,x))) )
\end{equation*}
where $g$ is a binary function symbol, $f$ is unary one and
$$\Ecal = \{\fhat(0,x) = x, \fhat(s(n),x) = f(\fhat(n,x))\}.$$
Obviously, $S$ cannot be proven without induction, which can be shown via the fact that $S$ does not have a Herbrand sequent (w.r.t. the theory $\Ecal$) . That means
that there exists no proof of $S$ in $\LKE$. In fact we need the following inductive lemma:
\begin{equation*}
(\forall x) (P(x) \impl P(f(x))) \seq (\forall n)(\forall x) (P(x) \impl P(\hat{f}(n,x))).
\end{equation*}

A proof $\psi$ of this inductive lemma in $\LKIE$ could be: 
\begin{prooftree}
\scriptsize
\AxiomC{$(\psi_1)$}
\noLine
\UnaryInfC{$\seq (\forall x) (P(x) \impl P(\hat{f}(\Null,x)))$}

\AxiomC{$(\psi_2)$}
\noLine
\UnaryInfC{$\Gamma, (\forall x) (P(x) \impl P(\hat{f}(\alpha,x))) \seq (\forall x) (P(x) \impl P(\hat{f}(s(\alpha),x)))$}
\RightLabel{$ind$}
\UnaryInfC{$\Gamma, (\forall x) (P(x) \impl P(\hat{f}(\Null,x))) \seq (\forall x) (P(x) \impl P(\hat{f}(\gamma,x)))$}
\RightLabel{$\forall \colon r$}
\UnaryInfC{$\Gamma, (\forall x) (P(x) \impl P(\hat{f}(\Null,x))) \seq (\forall n)(\forall x) (P(x) \impl P(\hat{f}(n,x)))$}

\RightLabel{$cut$}
\BinaryInfC{$(\forall x) (P(x) \impl P(f(x))) \seq (\forall n)(\forall x) (P(x) \impl P(\hat{f}(n,x)))$}
\end{prooftree}

\normalsize where $\Gamma = (\forall x) (P(x) \impl P(f(x)))$. The proofs $\psi_1$ and $\psi_2$ are easily defined; 
$\psi_1$ is:
\begin{prooftree}
\scriptsize
\AxiomC{$P(\hat{f}(\Null,u)) \seq P(\hat{f}(\Null,u))$}
\RightLabel{$\Ecal$}
\UnaryInfC{$P(u) \seq P(\hat{f}(\Null,u))$}
\RightLabel{$\impl \colon r$}
\UnaryInfC{$\seq P(u) \impl P(\hat{f}(\Null,u))$}
\RightLabel{$\forall \colon r$}
\UnaryInfC{$\seq (\forall x) (P(x) \impl P(\hat{f}(\Null,x)))$}
\end{prooftree}
\normalsize and $\psi_2$ is: 
\begin{prooftree}
\scriptsize
\AxiomC{$P(u) \seq P(u)$}

\AxiomC{$P(\hat{f}(\alpha,u)) \seq P(\hat{f}(\alpha,u))$}

\AxiomC{$P(\hat{f}(s(\alpha),u)) \seq P(\hat{f}(s(\alpha),u))$}
\RightLabel{$\Ecal$}
\UnaryInfC{$P(f(\hat{f}(\alpha,u))) \seq P(\hat{f}(s(\alpha),u))$}
\RightLabel{$\impl \colon l$}
\BinaryInfC{$P(\hat{f}(\alpha,u)) \impl P(f(\hat{f}(\alpha,u))), P(\hat{f}(\alpha,u)) \seq P(\hat{f}(s(\alpha),u))$}
\RightLabel{$\forall \colon l$}
\UnaryInfC{$(\forall x) (P(x) \impl P(f(x))), P(\hat{f}(\alpha,u)) \seq P(\hat{f}(s(\alpha),u))$}

\RightLabel{$\impl \colon l$}
\BinaryInfC{$P(u),(\forall x) (P(x) \impl P(f(x))), P(u) \impl P(\hat{f}(\alpha,u)) \seq P(\hat{f}(s(\alpha),u))$}
\RightLabel{$\impl \colon r$}
\UnaryInfC{$(\forall x) (P(x) \impl P(f(x))), P(u) \impl P(\hat{f}(\alpha,u)) \seq P(u) \impl P(\hat{f}(s(\alpha),u))$}
\RightLabel{$\forall \colon l$}
\UnaryInfC{$(\forall x) (P(x) \impl P(f(x))), (\forall x) (P(x) \impl P(\hat{f}(\alpha,x))) \seq P(u) \impl P(\hat{f}(s(\alpha),u)))$}
\RightLabel{$\forall \colon r$}
\UnaryInfC{$(\forall x) (P(x) \impl P(f(x))), (\forall x) (P(x) \impl P(\hat{f}(\alpha,x))) \seq (\forall x) (P(x) \impl P(\hat{f}(s(\alpha),x)))$}
\end{prooftree}

\normalsize
Finally, we define $\varphi$ as (to gain some space, the cut-formula $(\forall n)(\forall x) (P(x) \impl P(\hat{f}(n,x)))$ is denoted with $C$):
\begin{prooftree}
\scriptsize
\AxiomC{$(\psi)$}
\noLine
\UnaryInfC{$(\forall x) (P(x) \impl P(f(x))) \seq C$}

\AxiomC{$(\chi)$}
\noLine
\UnaryInfC{$C \seq (\forall n)(\forall x) ( (P(\hat{f}(n,x)) \impl P(g(n,x))) \impl (P(x) \impl P(g(n,x))) )$}

\RightLabel{$cut$}
\BinaryInfC{$(\forall x) (P(x) \impl P(f(x))) \seq (\forall n)(\forall x) ( (P(\hat{f}(n,x)) \impl P(g(n,x))) \impl (P(x) \impl P(g(n,x))) )$}
\end{prooftree}
\normalsize
where $\chi$ is an induction-free proof of the form:
\begin{prooftree}
 \scriptsize
\AxiomC{$P(u) \seq P(u)$}

\AxiomC{$P(\hat{f}(\beta,u)) \seq P(\hat{f}(\beta,u))$}
\AxiomC{$P(g(\beta,u)) \seq P(g(\beta,u))$}

\RightLabel{$\impl \colon l$}
\BinaryInfC{$P(\hat{f}(\beta,u)) \impl P(g(\beta,u)), P(\hat{f}(\beta,u)) \seq P(g(\beta,u))$}

\RightLabel{$\impl \colon l$}
\BinaryInfC{$P(u), P(\hat{f}(\beta,u)) \impl P(g(\beta,u)), P(u) \impl P(\hat{f}(\beta,u)) \seq P(g(\beta,u))$}
\RightLabel{$\impl \colon r$}
\UnaryInfC{$P(\hat{f}(\beta,u)) \impl P(g(\beta,u)), P(u) \impl P(\hat{f}(\beta,u)) \seq P(u) \impl P(g(\beta,u))$}
\RightLabel{$\impl \colon r$}
\UnaryInfC{$P(u) \impl P(\hat{f}(\beta,u)) \seq (P(\hat{f}(\beta,u)) \impl P(g(\beta,u))) \impl (P(u) \impl P(g(\beta,u)))$}
\RightLabel{$\forall \colon l *$}
\UnaryInfC{$(\forall n)(\forall x) (P(x) \impl P(\hat{f}(n,x))) \seq (P(\hat{f}(\beta,u)) \impl P(g(\beta,u))) \impl (P(u) \impl P(g(\beta,u)))$}
\RightLabel{$\forall \colon r *$}
\UnaryInfC{$(\forall n)(\forall x) (P(x) \impl P(\hat{f}(n,x))) \seq (\forall n)(\forall x) ( (P(\hat{f}(n,x)) \impl P(g(n,x))) \impl (P(x) \impl P(g(n,x))) )$}
\end{prooftree}

\normalsize In the attempt of performing reductive cut-elimination a la Gentzen, we locate the place in the proof, where $(\forall n)$ is introduced. In $\chi$,
$(\forall n)(\forall x) (P(x) \impl P(\hat{f}(n,x)))$ is obtained from $(\forall x) (P(x) \impl P(\hat{f}(\beta,x)))$ by $\forall \colon l$. In the proof $\psi$ we
may delete the $\forall \colon r$ inference yielding the cut-formula and replace $\gamma$ by $\beta$. But in the attempt to eliminate $(\forall x) (P(x) \impl
P(\hat{f}(\beta,x)))$ in $\psi$ we get stuck, as we cannot ``cross'' the $ind$ rule. Neither can the $ind$ rule be eliminated as $\beta$ is variable. In fact, if we
had instead $(\forall x) (P(x) \impl P(\hat{f}(t,x)))$ for a closed term $t$ over $\{\Null,s,+,*\}$ we could prove $\seq t = \bar{n}$ from the axioms of Peano
arithmetic and also
\scriptsize
\begin{equation*}
(\forall x) (P(x) \impl P(f(x))), (\forall x) (P(x) \impl P(\hat{f}(\Null,x))) \seq (\forall x) (P(x) \impl P(\hat{f}(\bar{n},x)))
\end{equation*}
\normalsize
without induction (by iterated cuts) and cut-elimination would proceed.

This problem, however, is neither rooted in the specific form of $\psi$ nor the $ind$ rule. Even if we had used the binary induction rule,
\begin{prooftree}
\AxiomC{$\Gamma \seq \Pi, A(\Null)$} 
\AxiomC{$\Delta, A(\alpha) \seq \Lambda, A(s(\alpha))$} 
\RightLabel{${\rm ind}$} 
\BinaryInfC{$\Gamma, \Delta \seq \Pi, \Lambda, A(t)$}
\end{prooftree}
the result would be the same. In fact, there exists no proof of $S$ with only atomic cuts -- even if {\rm ind} is used. In particular, induction on the formula
$$(\forall n)(\forall x) ( (P(\hat{f}(n,x)) \impl P(g(n,x))) \impl (P(x) \impl P(g(n,x))) )$$ fails. In order to prove the end-sequent an inductive lemma is needed;
something which implies $(\forall n)(\forall x) (P(x) \impl P(\hat{f}(n,x)))$ and cannot be eliminated.

While there are no proof of $S$ in $\LKE$ with the axioms of minimal arithmetic and only atomic cuts, the sequents $S_n$:
\scriptsize
\begin{equation*}
(\forall x) (P(x) \impl P(f(x))) \seq (\forall x) ( (P(\hat{f}(\bar{n},x)) \impl P(g(\bar{n},x))) \impl (P(x) \impl P(g(\bar{n},x))) )
\end{equation*}
\normalsize
do have such proofs in $\LKE$ for all $n$; indeed, they can be proved without induction. But instead of a unique proof $\varphi$ of $S$ we get an infinite
sequence of proofs $\varphi_n$ of $S_n$, which have cut-free versions $\varphi_n'$ (henceforth ``cut-free'' means that atomic cuts are admitted). This kind of
``infinitary'' cut-elimination only makes sense if there exists a uniform representation of the sequence of proofs $\varphi_n'$. We will illustrate below that the
method CERES has the potential of producing such a uniform representation, thus paving the way for cut-elimination in the presence of induction.

Let $S_n$ be
\scriptsize
$$(\forall x) (P(x) \impl P(f(x))) \seq (\forall x) ( (P(\hat{f}(n,x)) \impl P(g(n,x))) \impl (P(x) \impl P(g(n,x))) )$$
\normalsize
where $n$ is a number variable, henceforth called a parameter and $\Ecal$ be the equational theory defined above.

First we define a proof schema $\psi_n$ playing the role of $\psi$ in the inductive proof above; $\psi_0$ is:
\begin{prooftree}
\scriptsize \AxiomC{$P(\hat{f}(\Null,x_0)) \seq P(\hat{f}(\Null,x_0))$} \RightLabel{$\Ecal$} \UnaryInfC{$P(x_0) \seq P(\hat{f}(\Null,x_0))$} \RightLabel{$\impl \colon
r$} \UnaryInfC{$ \seq P(x_0) \impl P(\hat{f}(\Null,x_0))$} \RightLabel{$\forall \colon r$} \UnaryInfC{$ \seq (\forall x) (P(x) \impl P(\hat{f}(\Null,x)))$}
\RightLabel{$w \colon l$} \UnaryInfC{$(\forall x) (P(x) \impl P(f(x))) \seq (\forall x) (P(x) \impl P(\hat{f}(\Null,x)))$}
\end{prooftree}
and $\psi_{k+1}$ is:
\begin{prooftree}
\scriptsize
\AxiomC{$(\psi_k)$}
\dashedLine
\UnaryInfC{$(\forall x) (P(x) \impl P(f(x))) \seq (\forall x) (P(x) \impl P(\hat{f}(k,x)))$}

\AxiomC{$(1)$}

\RightLabel{$cut, c \colon l$}
\BinaryInfC{$(\forall x) (P(x) \impl P(f(x))) \seq (\forall x) (P(x) \impl P(\hat{f}(s(k),x)))$}
\end{prooftree}
\normalsize where $(1)$ is:
\begin{prooftree}
 \scriptsize
\AxiomC{$P(x_{k+1}) \seq P(x_{k+1})$}

\AxiomC{$P(\hat{f}(k,x_{k+1})) \seq P(\hat{f}(k,x_{k+1}))$}

\AxiomC{$P(\hat{f}(s(k),x_{k+1})) \seq P(\hat{f}(s(k),x_{k+1}))$} \RightLabel{$\Ecal$} \UnaryInfC{$P(f(\hat{f}(k,x_{k+1}))) \seq P(\hat{f}(s(k),x_{k+1}))$}

\RightLabel{$\impl \colon l$}
\BinaryInfC{$P(\hat{f}(k,x_{k+1})), P(\hat{f}(k,x_{k+1})) \impl P(f(\hat{f}(k,x_{k+1}))) \seq P(\hat{f}(s(k),x_{k+1}))$}
\RightLabel{$\forall \colon l$}
\UnaryInfC{$P(\hat{f}(k,x_{k+1})), (\forall x) (P(x) \impl P(f(x))) \seq P(\hat{f}(s(k),x_{k+1}))$}

\RightLabel{$\impl \colon l$}
\BinaryInfC{$P(x_{k+1}), P(x_{k+1}) \impl P(\hat{f}(k,x_{k+1})), (\forall x) (P(x) \impl P(f(x))) \seq P(\hat{f}(s(k),x_{k+1}))$}
\RightLabel{$\impl \colon r$}
\UnaryInfC{$P(x_{k+1}) \impl P(\hat{f}(k,x_{k+1})), (\forall x) (P(x) \impl P(f(x))) \seq P(x_{k+1}) \impl P(\hat{f}(s(k),x_{k+1}))$}
\RightLabel{$\forall \colon l$}
\UnaryInfC{$(\forall x) (P(x) \impl P(\hat{f}(k,x))), (\forall x) (P(x) \impl P(f(x))) \seq P(x_{k+1}) \impl P(\hat{f}(s(k),x_{k+1}))$}
\RightLabel{$\forall \colon r$}
\UnaryInfC{$(\forall x) (P(x) \impl P(\hat{f}(k,x))), (\forall x) (P(x) \impl P(f(x))) \seq (\forall x) (P(x) \impl P(\hat{f}(s(k),x)))$}
\end{prooftree}

\normalsize
Instead of $S_n$ we consider the skolemized version $S_n'$:\footnote{Skolemization is vital for CERES but the situation in the inductive proof above
remains the same -- we get $c$ instead of $\beta$ and again the same argument applies that the cut cannot be eliminated.}
$$(\forall x) (P(x) \impl P(f(x))) \seq (P(\hat{f}(n,c)) \impl P(g(n,c))) \impl (P(c) \impl P(g(n,c)))$$
and define $\varphi_n$:
\begin{prooftree}
\scriptsize
\AxiomC{$(\psi_n)$}
\noLine
\UnaryInfC{$(\forall x) (P(x) \impl P(f(x))) \seq C$}

\AxiomC{$(\chi_n)$}
\noLine
\UnaryInfC{$C \seq (P(\hat{f}(n,c)) \impl P(g(n,c))) \impl (P(c) \impl P(g(n,c)))$}

\RightLabel{$cut$}
\BinaryInfC{$(\forall x) (P(x) \impl P(f(x))) \seq (P(\hat{f}(n,c)) \impl P(g(n,c))) \impl (P(c) \impl P(g(n,c)))$}
\end{prooftree}
\normalsize where $C = (\forall x) (P(x) \impl P(\hat{f}(n,x)))$ and $(\chi_n)$ is:
\begin{prooftree}
 \scriptsize
\AxiomC{$P(c) \seq P(c)$}

\AxiomC{$P(\hat{f}(n,c)) \seq P(\hat{f}(n,c))$}
\AxiomC{$P(g(n,c)) \seq P(g(n,c))$}

\RightLabel{$\impl \colon l$}
\BinaryInfC{$P(\hat{f}(n,c)) \impl P(g(n,c)), P(\hat{f}(n,c)) \seq P(g(n,c))$}

\RightLabel{$\impl \colon l$}
\BinaryInfC{$P(c), P(\hat{f}(n,c)) \impl P(g(n,c)), P(c) \impl P(\hat{f}(n,c)) \seq P(g(n,c))$}
\RightLabel{$\impl \colon r$}
\UnaryInfC{$P(\hat{f}(n,c)) \impl P(g(n,c)), P(c) \impl P(\hat{f}(n,c)) \seq P(c) \impl P(g(n,c))$}
\RightLabel{$\impl \colon r$}
\UnaryInfC{$P(c) \impl P(\hat{f}(n,c)) \seq (P(\hat{f}(n,c)) \impl P(g(n,c))) \impl (P(c) \impl P(g(n,c)))$}
\RightLabel{$\forall \colon l$}
\UnaryInfC{$(\forall x) (P(x) \impl P(\hat{f}(n,x))) \seq (P(\hat{f}(n,c)) \impl P(g(n,c))) \impl (P(c) \impl P(g(n,c)))$}
\end{prooftree}

\normalsize
In the next step we determine the characteristic clause set schema of $\varphi_n$, $\CL(\varphi_n)$, inductively: $\CL(\varphi_0) = \{\seq P(c); \;
P(\hat{f}(\Null,c)) \seq\}$ and $\CL(\varphi_{n+1})$ is
\[
\begin{array}{l}
\{ P(\hat{f}(\Null,x_1)) \seq P(\hat{f}(s(\Null),x_1)); \; \cdots ; \; P(\hat{f}(n,x_{n+1})) \seq P(\hat{f}(s(n),x_{n+1})); \\
 \ \ \seq P(c); \; P(\hat{f}(n+1,c)) \seq \}.
\end{array}
\]
Now, via the theory $\Ecal$, the clause $P(\hat{f}(\Null,x_1)) \seq P(\hat{f}(s(\Null),x_1))$ becomes $P(x_1) \seq P(f(x_1))$ and the clause set boils down -- via
subsumption to:
\begin{equation*}
\CL(\varphi_n)' = \{ P(x_1) \seq P(f(x_1)); \; \seq P(c); \; P(\hat{f}(n,c)) \seq \}
\end{equation*}

A sequence of resolution refutations of $\CL(\varphi_n)'$ is given by $\delta_n$:
\begin{prooftree}
\AxiomC{$(\eta_n)$} \noLine \UnaryInfC{$\seq P(\hat{f}(n,c))$}

\AxiomC{$ P(\hat{f}(n,c)) \seq$}

\BinaryInfC{$\seq$}
\end{prooftree}

where $\eta_0$ is $\seq P(c)$ and $\eta_{k+1}$ is:
\[
\infer[\Ecal]{\seq P(\fhat(k+1,c))}
 { \infer[x_{k+1} \goesto \hat{f}(k,c)]{\seq P(f(\fhat(k,c)))}
  { \deduce{\seq P(\fhat(k,c))}{(\eta_k)}
    &
    P(x_{k+1}) \seq P(f(x_{k+1}))
  }
 }
\]

The projections and the ANCFs of the proof sequences will be illustrated in Section~\ref{sec:schemceres} as an application of the formal machinery to be developed in
the sections to come. We see that by using the CERES method we are capable of computing a {\em uniform} description of the infinite sequence of ACNFs. In fact, this
computation of cut-elimination in a schema can be considered as a cut-elimination method in the presence of induction.

\section{Schematic language}\label{sec:schemlang}

In order to give a systematic treatment of cut-elimination in the presence of induction
along the lines of the previous section, we start by defining 
a {\em schematic first-order language}, i.e.~a formal language
that allows the specification of an (infinite) set of first-order formulas by a finite
term.
Towards this, we work in a two-sorted setting with the sort $\omega$, intended to represent the natural numbers,
and the sort $\iota$, intended to represent an arbitrary first-order domain.
Our language consists
of countable sets of {\em variables} of both sorts, and sorted $n$-ary function and
predicate symbols, i.e.~we associate with every $n$-ary function $f$ a tuple
of sorts $(\tau_1,\ldots,\tau_n,\tau)$ with the intended interpretation $f:\tau_1\times\cdots\times\tau_n\to \tau$,
and analogously for predicate symbols.
We additionally assume that our function symbols are partitioned
into {\em constant function symbols}
and {\em defined function symbols}. The
first set will contain the usual uninterpreted function symbols and the second 
will allow primitive
recursively defined functions in the language. 

{\em Terms} are built from variables and function symbols 
in the usual inductive fashion. We assume the constant function symbols
$0:\omega$ and $s:\omega \to \omega$ (zero and successor)
to be present (if $t:\omega$ we will often write $t+1$ instead of $s(t)$).
By $\vars(t)$ we denote the set of variables of a term $t$.

For every defined function symbol $f$, we assume that its type is $\omega\times \tau_1\times\cdots\times\tau_n\to\tau$ (with $n\geq 0$), and we assume given two rewrite rules
\begin{align*}
f(0,x_1,\ldots,x_n) &\to s,\\
f(s(y),x_1,\ldots,x_n) &\to t[f(y,x_1,\ldots,x_n)]
\end{align*}
such that $\vars(s) \subseteq \{x_1,\ldots,x_n\}$ and $\vars(t[f(y,x_1,\ldots,x_n)]) \subseteq \{x_1,\ldots,x_n, y\}$,
and $s,t[x]$ are terms not containing $f$, and
if a defined function symbol $g$ occurs in $s$ or $t[x]$ then $g\prec f$.
We assume that these rewrite rules are primitive recursive, i.e.~that
$\prec$ is irreflexive.


To denote that an expression $t$ rewrites to an expression $s$ (in arbitrarily many steps), we write $t \rw s$.
\begin{example}\label{ex:term}
The usual primitive recursive definition of addition can be represented in our system:
let $+:\omega\times\omega\to \omega$ be a defined function symbol with the rewrite
rules $+(0,x) \to x$ and $+(s(y),x) \to s(+(y,x))$.
\end{example}
Since our terms can represent the primitive recursive functions, we have the following.
\begin{theorem}
The unification problem of terms is undecidable.
\end{theorem}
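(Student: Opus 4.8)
The plan is to reduce a known undecidable problem to the unification problem of terms in our language, exploiting the fact that defined function symbols can encode arbitrary primitive recursive functions. The cleanest route is via the undecidability of Hilbert's tenth problem (the solvability of Diophantine equations over $\mathbb{N}$), or equivalently via the halting problem for register machines: in either case, there is a fixed undecidable set that can be captured by primitive recursive functions. I would phrase the reduction directly with Diophantine equations, since they compose most smoothly with our syntax.

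First I would fix a polynomial $p(x_1,\ldots,x_m)$ over $\mathbb{N}$ for which the question ``does $p(x_1,\ldots,x_m) = 0$ have a solution in $\mathbb{N}$?'' is undecidable (such a $p$ exists by the MRDP theorem). Next, using only the constant symbols $0,s$ and defined function symbols, I would introduce defined function symbols $\mathrm{add}$ and $\mathrm{mult}$ of type $\omega\times\omega\to\omega$ with the obvious primitive recursive rewrite rules (addition as in Example~\ref{ex:term}, and multiplication defined from addition, which is allowed since $\mathrm{add}\prec\mathrm{mult}$ respects the irreflexivity of $\prec$). From these one builds, by a finite composition, two defined terms $t_L$ and $t_R$ over variables $x_1,\ldots,x_m$ of sort $\omega$ such that $t_L$ and $t_R$ compute the ``positive part'' and ``negative part'' of $p$, i.e.\ $p = (\text{value of }t_L) - (\text{value of }t_R)$ as integers and all coefficients are non-negative. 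Then I would claim that the unification problem $t_L \stackrel{?}{=} t_R$ (modulo the rewrite rules, i.e.\ asking for a substitution $\sigma$ with $t_L\sigma \rw u$ and $t_R\sigma \rw u$ for a common $u$) has a solution if and only if the Diophantine equation $p = 0$ has a solution in $\mathbb{N}$: given a solution $(n_1,\ldots,n_m)$, the substitution $\{x_i \mapsto s^{n_i}(0)\}$ unifies the two terms since both sides rewrite to the numeral denoting the common value; conversely, since the only closed terms of sort $\omega$ reachable by rewriting are numerals, any unifier must substitute numerals (after normalizing), yielding a Diophantine solution.

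The main obstacle — and the step requiring the most care — is making precise what ``unification of terms'' means here and checking that the equivalence goes through in both directions. Because our function symbols include the interpreted ones $0,s$ together with primitive recursively defined ones, unification must be taken modulo the rewrite theory, not syntactically; I would therefore state explicitly that a solution to the unification problem for $t_1,t_2$ is a substitution $\sigma$ such that $t_1\sigma$ and $t_2\sigma$ have a common reduct under $\rw$. One must verify (i) that the rewrite system consisting of all the defined-symbol rules is terminating and confluent on ground terms of sort $\omega$ (this follows from the primitive recursive structure guaranteed by the irreflexivity of $\prec$, so normal forms exist and are unique numerals), and (ii) that a unifier need not substitute ground terms a priori, so one argues that if $t_L\sigma$ and $t_R\sigma$ share a reduct then, because $t_L,t_R$ have all their variables of sort $\omega$ and the values computed are total functions of those variables, the shared reduct forces the $\sigma$-images to denote equal natural numbers; replacing $\sigma$ by the numeral-substitution it induces still unifies. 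Granting these points, decidability of the unification problem would decide $p=0$, contradicting MRDP, so the unification problem is undecidable. A remark worth adding is that the same argument in fact shows undecidability already for unifying two terms each containing only defined function symbols and the symbols $0,s$, with all variables of sort $\omega$ — a sharper statement than the bare theorem.
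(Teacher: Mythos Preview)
Your proposal is correct and takes a genuinely different route from the paper. The paper reduces from the halting problem: it uses that the Kleene $T$-predicate is elementary, that elementary functions are exactly the LOOP-2 computable ones, and that there is a primitive recursive universal function $\psi$ for LOOP-2 programs. From this it builds a primitive recursive $g$ with $n\in\bar K$ iff $\forall k.\,g(n,k)\neq 0$, represents $g$ by defined function symbols, and observes that deciding unifiability of $f_\psi(f_h(\bar n),y) \stackrel{?}{=}\bar 0$ would decide $K$. Your reduction via MRDP is considerably more direct: you only need to encode $+$ and $\cdot$ as defined symbols and then each Diophantine instance becomes a single unification problem, with no detour through universal functions or the LOOP hierarchy. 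The price is that you invoke a much deeper theorem (MRDP) as a black box, whereas the paper's argument rests on comparatively lightweight recursion-theoretic facts. Two small points worth tightening in your write-up: first, one does not ``fix a polynomial $p$ for which solvability is undecidable''---for a fixed $p$ the question has a definite answer---rather one fixes a universal polynomial and varies a parameter, which then yields a uniform family of unification instances; second, your handling of non-ground unifiers in (ii) is the right worry but the cleanest fix is simply to observe that rewriting is closed under substitution, so grounding any remaining variables (say by $0$) preserves the common reduct, and confluence on ground $\omega$-terms then yields a numeral solution.
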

\begin{proof}
Consider the $s$-terms defined over the arithmetic signature given by $\Null,S$. Then, by the rewrite rules, we obtain a programming language for the primitive
recursive functions. It can be shown that there are universal Turing machines with a halting predicate $T$ ($T(n,m,k)$ iff program nr. $n$ halts on $m$ within $k$
steps) s.t. $T$ is elementary. The elementary functions are just the functions computable by LOOP-2 programs, 
for which there exists a primitive recursive effective
enumeration (see~\cite{Brainerd1974}). I.e. there exists a function $\psi\colon \N \times \N \to \N$ s.t.
$$\psi(n,m) = \mbox{ result of LOOP-2 program nr. }n \mbox{ on input }m.$$
and for all $f \in E^2_1$ ($E^n_k$ are the elementary functions $\N^n \to \N^k$) there exists a primitive recursive
$h$ s.t.
$$\psi(h(n),m) = f(n,m) \mbox{ for all } n,m \in \N.$$
Now we define a function $g\colon \N \times \N \to \N$ by
\begin{eqnarray*}
g(n,k) &=& k+1 \ \ \mbox{if}\ \neg T(n,n,k)\\
       &=& 0 \ \ \mbox{otherwise}.
\end{eqnarray*}
By definition of $g$ we obtain
$$n \in \bar{K} \ioi   \forall k.\neg T(n,n,k) \ioi \forall k. g(n,k) \neq 0$$
where $K$ is the halting problem. As $\psi$ is a primitive recursive and effective enumeration of $E^1_1$ there exists a primitive recursive function $h$ s.t.
$\psi(h(n),k) = g(n,k)$ for all $n,k \in \N$, and so
$$n \in \bar{K} \ioi \forall k.\psi(h(n),k) \neq 0.$$
Now let $f_\psi \in F_s$ the representation of $\psi$, $f_h \in F_s$ that of $h$. Then deciding the unification problems $\psi(\bar{m},y) = \Null$ for number
constants $\bar{m}$ we can also decide the problem
$$f_\psi(f_h(\bar{n}),y) = \Null \mbox{ for } n \in \N$$
and thus obtain a decision procedure for $K$, which obviously does not exists.\qed
\end{proof}

We now turn to the definition of schematic formulas. Analogously to function symbols, we assume
that the predicate symbols are partitioned into {\em constant predicate symbols} and {\em defined predicate
symbols}, assuming as above rewrite rules and an irreflexive order $\prec$ for the latter. 
{\em Formulas} are then built up inductively 
from atoms using $\land,\lor,\neg,\impl,\forall,\exists$ as usual.
To give concise examples, we sometimes use the formulas $\top,\bot$ although they are not official parts
of the language.
In our setting, it is important to clarify how to interpret multiple occurrences of the same bound variable.
For an occurrence of a bound variable $x$, we consider the lowermost (in the bottom-growing formula-tree) quantifier that
binds $x$ to be associated to that occurrence. Consider the following clarifying example.
\begin{example}
Let $P:\omega$
be a defined predicate symbol, $R:\omega \times \iota$ a constant predicate symbol,
and $x:\iota$ a variable.
For the rewrite rules for $P$, take
\begin{align*}
P(0) &\to \top,\\
P(s(n)) & \to \exists x\;( R(n,x) \land P(n)),\\
\end{align*}
Then we have $P(2) \rw \exists x\;( R(2,x) \land \exists x\;(R(1,x) \land \top))$
which is equivalent to (by renaming of bound variables) $\exists x_2\;( R(2,x_2) \land \exists x_1\;(R(1,x_1) \land \top))$.
\end{example}
%
%
\begin{proposition}
Let $A$ be a formula. Then every rewrite sequence starting at $A$ terminates,
and $A$ has a unique normal form.
\end{proposition}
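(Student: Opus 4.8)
The plan is to establish two facts about the rewrite system generated by the defining rules of the (finitely many) defined function and predicate symbols relevant to $A$: (i) it is terminating, so that \emph{every} rewrite sequence starting at $A$ is finite; and (ii) it is confluent, so that the normal form is unique. Given (i) and (ii), the statement follows at once.

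For (i) I would exhibit a well-founded reduction order containing the one-step rewrite relation, using a lexicographic path order $>_{\mathrm{lpo}}$ induced by a suitable precedence on the symbols. Take the transitive closure of the relation ``$g$ occurs in a defining rule of $f$''; since $\prec$ is irreflexive this is a well-founded strict order on the relevant symbols. Fix the precedence so that (a) every defined symbol is strictly above every constant function and predicate symbol and above every logical connective and quantifier (the latter treated as ordinary constructors), and (b) defined symbols are ordered by the above closure of $\prec$. Give each defined function and predicate symbol lexicographic status with its recursion argument (the first argument, of sort $\omega$) as most significant. It then remains to check $\ell >_{\mathrm{lpo}} r$ for each defining rule $\ell \to r$. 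For a base rule $f(0,\vec x)\to s$ (resp.\ $P(0,\vec x)\to B$) this holds because $r$ contains only symbols strictly below the head of $\ell$ and its variables occur among $\vec x$, i.e.\ as proper subterms of $\ell$. For a step rule $f(s(y),\vec x)\to t[f(y,\vec x)]$ (resp.\ $P(s(n),\vec x)\to B[P(n,\vec x)]$), the only occurrences of the head symbol in $r$ are the displayed recursive call, and $f(s(y),\vec x) >_{\mathrm{lpo}} f(y,\vec x)$ because the recursion argument strictly decreases ($y$ is a proper subterm of $s(y)$) while the remaining arguments are unchanged; every other symbol in $r$ is strictly below $f$ in the precedence, so the path-order conditions are met. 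Since $>_{\mathrm{lpo}}$ is well-founded and closed under contexts and substitutions, every rewrite sequence from $A$ terminates.

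For (ii) I would observe that the system is orthogonal: it is left-linear, since the left-hand sides $f(0,\vec x)$, $f(s(y),\vec x)$, $P(0,\vec x)$, $P(s(n),\vec x)$ have pairwise distinct variables, each occurring once; and it has no critical pairs, since two left-hand sides can overlap only if their head symbols coincide and their argument patterns unify, which is impossible because $0$ and $s(\cdot)$ do not unify and no redex can occur strictly inside a left-hand side (the proper subterms of a left-hand side are variables or built from $0,s$, none of which are defined symbols). An orthogonal rewrite system is confluent; alternatively, the absence of critical pairs gives local confluence, and together with (i) Newman's Lemma yields confluence. Hence the normal form reached from $A$ is independent of the chosen rewrite sequence, and by (i) a normal form is always reached.

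The main technical point is designing the path order so that it treats terms and formulas uniformly and, above all, validates the recursive rules: the right-hand side $t[f(y,\vec x)]$ may be structurally much larger than the left-hand side, and only the decrease of the recursion argument together with the condition ``defined symbols in $t$ are $\prec f$'' makes the comparison succeed; some care is also needed in placing the logical connectives and quantifiers below all defined predicate symbols in the precedence. A minor point to dispatch is that, although $\prec$ is only assumed irreflexive, only finitely many defined symbols are relevant to a fixed $A$ (those reachable by repeatedly passing to symbols occurring in defining rules), and on this finite set acyclicity supplies the well-foundedness needed for the path order.
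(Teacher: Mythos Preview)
Your argument is correct. The paper's own proof is a single sentence: ``Trivial, since all definitions are primitive recursive.'' You have supplied the standard rewriting-theoretic justification that this sentence gestures at---termination via a lexicographic path order built on the dependency order $\prec$ (with the recursion argument compared first), and confluence via orthogonality (left-linearity plus the disjoint $0/s(\cdot)$ patterns ruling out critical pairs). So the two are not genuinely different approaches; yours is a careful unpacking of what the paper leaves implicit. What your version buys is self-containment and an explicit treatment of the two points that actually need a moment's thought: that logical connectives and quantifiers must sit below all defined predicate symbols in the precedence (so that the right-hand side of a step rule for a defined predicate is dominated), and that irreflexivity of $\prec$ together with the finiteness of the symbols reachable from $A$ suffices for well-foundedness of the precedence.
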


\begin{proof}
 Trivial, since all definitions are primitive recursive. \qed
\end{proof}
Finally, we note that formulas $\bigvee_{i=s}^t A(i)$ (which rewrite to $\bot$ if $s>t$
and to $A(t)\lor \bigvee_{i=s}^{t-1} A(i)$ otherwise) can be expressed using defined
predicate symbols $P_{A,s,t}$ (see~\cite{Aravantinos2009,Aravantinos2011}), so we will also use them freely in
examples.

\section{Schematic Proofs}\label{sec:schemproof}
We will now give a natural notion of {\em proof schema} for the language defined in the previous section,
and compare these proof schemata with the calculus $\LKIE$.
To this end, we need some notions:
If we introduce a sequent as $S(x_1,\ldots,x_\alpha)$, then by $S(t_1,\ldots,t_\alpha)$ we denote
$S(x_1,\ldots,x_\alpha)$ where $x_1,\ldots,x_\alpha$ are
replaced by $t_1,\ldots,t_\alpha$ respectively, where $t_\beta$ are terms
of appropriate type.
We assume a countably infinite set of {\em proof symbols} denoted by $\varphi, \psi, \ldots$.
If $\varphi$ is a proof symbol and
$S(x_1,\ldots,x_\alpha)$ a sequent, then the expression 
%
\AxiomC{$(\varphi(a_1,\ldots,a_\alpha))$}
\dashedLine
\UnaryInfC{$S(a_1,\ldots,a_\alpha)$}
\DisplayProof
%
is called a {\em proof link}. For a variable $k:\omega$, proof links
such that $\vars(a_1)\subseteq \{k\}$ are called {\em $k$-proof links}.
\begin{definition}
The sequent calculus {\LKS}
consists of the rules of {\LKE}, where proof links may appear
at the leaves of a proof, and where $\mathcal{E}$ is the
set of rewrite rules (interpreted as equations) for the defined
function and predicate symbols.
\end{definition}
\begin{df}[Proof schemata]\label{def.proofschema}
  Let $\psi$ be a proof symbol and $S(n,x_1,\ldots,x_\alpha)$ be a sequent
  such that $n:\omega$.
  Then a {\em proof schema pair for $\psi$} is a pair of {\LKS}-proofs
  $(\pi,\nu(k))$ with end-sequents $S(0,x_1,\ldots,x_\alpha)$ and $S(k+1,x_1,\ldots,x_\alpha)$ respectively
  such that $\pi$ may not contain proof links and $\nu(k)$ may
  contain only proof links of the form
  \AxiomC{$(\psi(k,a_1,\ldots,a_\alpha))$}
  \dashedLine
  \UnaryInfC{$S(k,a_1,\ldots,a_\alpha)$}
  \DisplayProof ($a_1,\ldots,a_\alpha$ terms of appropriate type). For such a proof schema pair, we say that a proof link
  of the form
\AxiomC{$(\psi(a,a_1,\ldots,a_\alpha))$}
  \dashedLine
  \UnaryInfC{$S(a,a_1,\ldots,a_\alpha)$}
  \DisplayProof is a {\em proof link to $\psi$}.
  We say that $S(n,x_1,\ldots,x_\alpha)$ is the end-sequent of $\psi$, and
  we assume an identification between formula occurrences in the end-sequents
  of $\pi$ and $\nu(k)$ so that we can speak of occurrences in the end-sequent
  of $\psi$.

  Finally, a {\em proof schema $\Psi$} is a tuple of proof schema pairs
  $\langle p_1,\ldots,p_\alpha \rangle$ for $\psi_1,\ldots,\psi_\alpha$ respectively
  such that the {\LKS}-proofs in $p_\beta$ may also contain $k$-proof
  links to $\psi_\gamma$ for $1\leq \beta < \gamma \leq \alpha$.
  We also say that the end-sequent of $\psi_1$ is the {\em end-sequent} of $\Psi$.
\end{df}
\begin{example}\label{running_example}
We now present the proof sequence given in Section~\ref{sec:motivation} according to our formal definitions. 
For the defined function symbol $\hat{f}$, we assume as rewrite rules (oriented versions of) the equalities
given in Section~\ref{sec:motivation}.
Then we define a proof schema $\Psi = \langle (\pi_1,\nu_1(k)), (\pi_2,\nu_2(k)) \rangle$ for $\varphi,\psi$.
where $\nu_1(k)$ is defined as:
\begin{prooftree}
\scriptsize
\AxiomC{$(\psi(k+1))$}
\dashedLine
\UnaryInfC{$(\forall x) (P(x) \impl P(f(x))) \seq (\forall x) (P(x) \impl P(\hat{f}(k+1,x)))$}

\AxiomC{$(2)$}

\RightLabel{$cut$}
\BinaryInfC{$(\forall x) (P(x) \impl P(f(x))) \seq (P(\hat{f}(k+1,c)) \impl P(g(k+1,c))) \impl (P(c) \impl P(g(k+1,c)))$}
\end{prooftree}
where $(2)$ is:
\begin{prooftree}
 \scriptsize
\AxiomC{$P(c) \seq P(c)$}

\AxiomC{$P(\hat{f}(k+1,c)) \seq P(\hat{f}(k+1,c))$}
\AxiomC{$P(g(k+1,c)) \seq P(g(k+1,c))$}

\RightLabel{$\impl \colon l$}
\BinaryInfC{$P(\hat{f}(k+1,c)) \impl P(g(k+1,c)), P(\hat{f}(k+1,c)) \seq P(g(k+1,c))$}

\RightLabel{$\impl \colon l$}
\BinaryInfC{$P(c), P(\hat{f}(k+1,c)) \impl P(g(k+1,c)), P(c) \impl P(\hat{f}(k+1,c)) \seq P(g(k+1,c))$}
\RightLabel{$\impl \colon r$}
\UnaryInfC{$P(\hat{f}(k+1,c)) \impl P(g(k+1,c)), P(c) \impl P(\hat{f}(k+1,c)) \seq P(c) \impl P(g(k+1,c))$}
\RightLabel{$\impl \colon r$}
\UnaryInfC{$P(c) \impl P(\hat{f}(k+1,c)) \seq (P(\hat{f}(k+1,c)) \impl P(g(k+1,c))) \impl (P(c) \impl P(g(k+1,c)))$}
\RightLabel{$\forall \colon l$}
\UnaryInfC{$(\forall x) (P(x) \impl P(\hat{f}(k+1,x))) \seq (P(\hat{f}(k+1,c)) \impl P(g(k+1,c))) \impl (P(c) \impl P(g(k+1,c)))$}
\end{prooftree}
Note that the proof link in $\nu_1(k)$ is a $k$-proof link to $\psi$.
For $\pi_1$, we take $\nu_1(k)$ where $k+1$ is replaced by $0$. Hence the end-sequent
of $\varphi$ is $(\forall x) (P(x) \impl P(f(x))) \seq (P(\hat{f}(n,c)) \impl P(g(n,c))) \impl (P(c) \impl P(g(n,c)))$.

We continue giving the definitions for the proof symbol $\psi$. $\pi_2$ is
\begin{prooftree}
\scriptsize
\AxiomC{$P(\hat{f}(0,x_0)) \seq P(\hat{f}(0,x_0))$}
\RightLabel{$\Ecal$}
\UnaryInfC{$P(x_0) \seq P(\hat{f}(0,x_0))$}
\RightLabel{$\impl \colon r$}
\UnaryInfC{$ \seq P(x_0) \impl P(\hat{f}(0,x_0))$}
\RightLabel{$\forall \colon r$}
\UnaryInfC{$ \seq (\forall x) (P(x) \impl P(\hat{f}(0,x)))$}
\RightLabel{$w \colon l$}
\UnaryInfC{$(\forall x) (P(x) \impl P(f(x))) \seq (\forall x) (P(x) \impl P(\hat{f}(0,x)))$}
\end{prooftree}
and $\nu_2(k)$ is
\begin{prooftree}
\scriptsize
\AxiomC{$(\psi(k))$}
\dashedLine
\UnaryInfC{$(\forall x) (P(x) \impl P(f(x))) \seq (\forall x) (P(x) \impl P(\hat{f}(k,x)))$}

\AxiomC{$(1)$}

\RightLabel{$cut, c \colon l$}
\BinaryInfC{$(\forall x) (P(x) \impl P(f(x))) \seq (\forall x) (P(x) \impl P(\hat{f}(k+1,x)))$}
\end{prooftree}
where $(1)$ is:
\begin{prooftree}
 \scriptsize
\AxiomC{$P(x_{k+1}) \seq P(x_{k+1})$}

\AxiomC{$P(\hat{f}(k,x_{k+1})) \seq P(\hat{f}(k,x_{k+1}))$}

\AxiomC{$P(\hat{f}(k+1,x_{k+1})) \seq P(\hat{f}(k+1,x_{k+1}))$}
\RightLabel{$\Ecal$}
\UnaryInfC{$P(f(\hat{f}(k,x_{k+1}))) \seq P(\hat{f}(k+1,x_{k+1}))$}

\RightLabel{$\impl \colon l$}
\BinaryInfC{$P(\hat{f}(k,x_{k+1})), P(\hat{f}(k,x_{k+1})) \impl P(f(\hat{f}(k,x_{k+1}))) \seq P(\hat{f}(k+1,x_{k+1}))$}
\RightLabel{$\forall \colon l$}
\UnaryInfC{$P(\hat{f}(k,x_{k+1})), (\forall x) (P(x) \impl P(f(x))) \seq P(\hat{f}(k+1,x_{k+1}))$}

\RightLabel{$\impl \colon l$}
\BinaryInfC{$P(x_{k+1}), P(x_{k+1}) \impl P(\hat{f}(k,x_{k+1})), (\forall x) (P(x) \impl P(f(x))) \seq P(\hat{f}(k+1,x_{k+1}))$}
\RightLabel{$\impl \colon r$}
\UnaryInfC{$P(x_{k+1}) \impl P(\hat{f}(k,x_{k+1})), (\forall x) (P(x) \impl P(f(x))) \seq P(x_{k+1}) \impl P(\hat{f}(k+1,x_{k+1}))$}
\RightLabel{$\forall \colon l$}
\UnaryInfC{$(\forall x) (P(x) \impl P(\hat{f}(k,x))), (\forall x) (P(x) \impl P(f(x))) \seq P(x_{k+1}) \impl P(\hat{f}(k+1,x_{k+1}))$}
\RightLabel{$\forall \colon r$}
\UnaryInfC{$(\forall x) (P(x) \impl P(\hat{f}(k,x))), (\forall x) (P(x) \impl P(f(x))) \seq (\forall x) (P(x) \impl P(\hat{f}(k+1,x)))$}
\end{prooftree}
Hence the end-sequent of $\psi$ is $(\forall x) (P(x) \impl P(f(x))) \seq (\forall x) (P(x) \impl P(\hat{f}(n,x)))$.
\end{example}

For the rest of this paper, we consider a fixed proof schema
\[
  \langle (\pi_1,\nu_1(k),\ldots,(\pi_\alpha,\nu_\alpha(k)) \rangle \textrm{ for } \psi_1,\ldots,\psi_\alpha.
\]
Proof schemata naturally represent infinite sequences of (first-order) proofs.
We will state this fact formally as a soundness result. First, we define the notion of evaluation of proof schemata.

\begin{df}[Evaluation of proof schemata]
We define the rewrite rules for proof links
\[
\AxiomC{$(\psi(0,x_1,\ldots,x_\beta))$}
\dashedLine
\UnaryInfC{$S(0,x_1,\ldots,x_\beta)$}
\DisplayProof
\rightarrow
\pi,
\qquad
\AxiomC{$(\psi(s(k),x_1,\ldots,x_\beta))$}
\dashedLine
\UnaryInfC{$S(s(k),x_1,\ldots,x_\beta)$}
\DisplayProof
\rightarrow
\nu(k)
\]
for all proof schema pairs $(\pi,\nu(k))$ for $\psi$. 
Now
for $\gamma\in\nats$ we define
$\psi\!\downarrow_\gamma$ as a normal form
of $\AxiomC{$(\psi(\gamma),x_1,\ldots,x_\beta)$}
\dashedLine
\UnaryInfC{$S(\gamma,x_1,\ldots,x_\beta)$}\DisplayProof$ under the rewrite system 
just given extended with rewrite rules for defined function and predicate symbols.
Further, we define $\Psi\!\downarrow_\gamma=\psi_1\!\downarrow_\gamma$.
\end{df}

\begin{proposition}[Soundness of proof schemata]\label{prop:proof_schemata_sound}
  Let $\Psi$ be a proof schema with end-sequent $S(n,x_1,\ldots,x_\beta)$, and let $\gamma\in\nats$.
  Then there exists a first-order proof of $S(\gamma,x_1,\ldots,x_\beta)\!\downarrow$.
\end{proposition}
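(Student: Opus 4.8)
The plan is to prove the statement by induction on the structure of the proof schema $\Psi = \langle (\pi_1,\nu_1(k)),\ldots,(\pi_\alpha,\nu_\alpha(k))\rangle$, more precisely by an outer induction on $\alpha$ (the number of proof symbols) and, for each proof symbol, an inner induction on the parameter $\gamma\in\nats$. The key observation is that the rewrite system defining $\psi\!\downarrow_\gamma$ is terminating: rewriting a proof link $\psi_\beta(\gamma,\ldots)$ either replaces it by $\pi_\beta$ (which contains no proof links) when $\gamma = 0$, or by $\nu_\beta(k)$ with $k$ instantiated to $\gamma-1$, which contains only proof links to $\psi_\beta$ with first argument $\gamma-1$ (strictly smaller) or proof links to $\psi_\delta$ with $\delta > \beta$. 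Hence along any branch of the rewriting the pair (index of proof symbol, value of parameter) decreases in an appropriate well-founded order, so a normal form exists; combined with the normalization of the defined function and predicate symbols (Proposition on termination of formula rewriting), $\psi\!\downarrow_\gamma$ is a well-defined object. I would first make this termination argument explicit as the backbone of the proof.

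Next I would argue that the normal form $\psi_1\!\downarrow_\gamma = \Psi\!\downarrow_\gamma$ is in fact a first-order $\mathbf{LK_\Ecal}$-proof of $S(\gamma,x_1,\ldots,x_\beta)\!\downarrow$, and then observe that $\mathbf{LK_\Ecal}$-proofs can be turned into genuine first-order proofs because the equational rule $\Ecal$ can be simulated: since $\Ecal$ consists of the (primitive recursive) rewrite rules for defined symbols, each instance $t = t'$ with $\Ecal \models t = t'$ holds in the intended first-order theory, so an $\Ecal$-inference is replaceable by cuts against instances of the defining equations (or one simply works in first-order logic with equality over the axioms $\Ecal$, which is the standard reading). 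The point to check carefully is that no proof links survive in the normal form: this follows from termination together with the fact that a normal form of the rewrite system cannot contain a redex, and every proof link is a redex (its first argument $\gamma$ is a closed term of sort $\omega$, hence rewrites to $0$ or $s(k)$, triggering one of the two rules). I would spell out that eigenvariable conditions are preserved under these rewrites — each $\pi_\beta$ and $\nu_\beta(k)$ is a legitimate $\mathbf{LKS}$-proof, and substituting a closed numeral for $k$ and pasting subproofs at proof-link leaves respects the variable conditions because the schema definition requires the pasted end-sequents to match syntactically.

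The main obstacle I expect is bookkeeping around the simultaneous recursion: $\nu_\beta(k)$ may call $\psi_\gamma$ for $\gamma > \beta$ with an arbitrary term $a_1$ in the first argument, not necessarily $k$ itself, so one must be careful that after instantiating $k$ by a numeral the argument $a_1$ evaluates (via the defined-function rewrite rules) to a closed numeral, keeping the induction measure well-founded. This is handled by the requirement in Definition~\ref{def.proofschema} that $\mathbf{LKS}$-proofs in $p_\beta$ contain only $k$-proof links to $\psi_\gamma$, i.e.\ $\vars(a_1) \subseteq \{k\}$, so once $k$ is a numeral, $a_1$ is closed and normalizes to a numeral by the termination of term rewriting. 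Thus the lexicographic measure $(\text{number of proof symbols with index} \le \beta \text{ still reachable}, \text{value of the parameter})$ — or more simply the multiset of parameter-values occurring in remaining proof links, together with the proof-symbol indices — strictly decreases. Once this is in place, the conclusion follows: $\Psi\!\downarrow_\gamma$ is a proof-link-free $\mathbf{LK_\Ecal}$-proof of $S(\gamma,x_1,\ldots,x_\beta)$, whose end-sequent we may further normalize under the defined-symbol rewriting to obtain a first-order proof of $S(\gamma,x_1,\ldots,x_\beta)\!\downarrow$, as desired.
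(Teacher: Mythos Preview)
Your proposal is correct and follows essentially the same approach as the paper: a nested induction with an outer induction on the number of proof symbols and an inner induction on the parameter $\gamma$. You are considerably more thorough than the paper's proof, which handles the single-pair case by induction on $\gamma$ and then merely asserts that the extension to arbitrary proof schemata is ``easy''; in particular, your explicit treatment of termination, eigenvariable preservation, and the subtlety that cross-references to $\psi_\delta$ with $\delta>\beta$ may carry arbitrary (but closed, by the $k$-proof-link restriction) first arguments fills in points the paper leaves implicit.
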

\begin{proof}
 First we prove the proposition for a proof schema consisting with one pair only and then extend the result to arbitrary proof schemata. Assume $\Psi = \langle (\pi, \nu(k)) \rangle$. We proceed by induction on $\gamma$. If $\gamma = 0$, $\Psi\!\downarrow_0 = \pi\!\downarrow_0$. The later one differs from $\pi$ only in defined function and predicate symbols, therefore $\pi\!\downarrow_0$ is a proof of $S(0,x_1,\ldots,x_\beta)\!\downarrow$. Now assume for all $\delta \leq \gamma$, $\Psi\!\downarrow_\delta$ is a proof of $S(\delta,x_1,\ldots,x_\beta)\!\downarrow$ and consider the case for $\gamma+1$. If $(\psi(k,x_1,\ldots,x_\beta))$ is a proof link in $\nu(k)$, then by hypothesis it rewrites to $\Psi\!\downarrow_\gamma$. Then after applying rewrite rules of defined function and predicate symbols to $\nu(\gamma)$, we get a proof of $S(\gamma+1,x_1,\ldots,x_\beta)\!\downarrow$.

The result is easily extended to arbitrary proof schema $\Psi$, considering the fact that for all $\gamma \in \nats$ each pair $(\pi_i, \nu_i(k)) \in \Psi$ is evaluated to a proof of the sequent $S_i(\gamma,x_1,\ldots,x_\epsilon)\!\downarrow$. \qed
\end{proof}
If $k:\omega$ then an {\LKSI}-proof $\pi$ is called {\em $k$-simple} if
all induction rules in $\pi$ are of the following form:
\begin{center}
\AxiomC{$A(k),\Gamma\seq\Delta,A(k+1)$} \RightLabel{$\ind$} \UnaryInfC{$A(0), \Gamma\seq\Delta, A(t)$} \DisplayProof
\end{center}
such that $\vars(t)\subseteq\{k\}$ (i.e.~$k$ is the ``eigenvariable'' of the induction
inference, and $k$ is the only variable occuring in $t$).
Our next aim is to show that proof schemata and $\LKSI$, restricted to $k$-simple
proofs, are equivalent:
\begin{proposition}\label{prop:schema_to_lksi}
  Let $\Psi$ be a proof schema with end-sequent $S$. Then there exists a $k$-simple {\LKSI}-proof
  of $S$.
\end{proposition}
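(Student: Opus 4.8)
The plan is to convert a proof schema $\Psi = \langle (\pi_1,\nu_1(k)),\ldots,(\pi_\alpha,\nu_\alpha(k))\rangle$ for $\psi_1,\ldots,\psi_\alpha$ into a single $k$-simple {\LKSI}-proof of the end-sequent $S$ of $\psi_1$, by induction on $\alpha$. The idea is that each proof symbol $\psi_i$, whose behaviour is specified by the pair $(\pi_i,\nu_i(k))$ with end-sequents $S_i(0,\bar x)$ and $S_i(k+1,\bar x)$, should be realized by a single induction inference whose two premises are (essentially) $\pi_i$ and $\nu_i(k)$, and whose eigenvariable is exactly the recursion parameter $k$ — this is precisely what makes the resulting induction $k$-simple.

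First I would treat the base case $\alpha = 1$, i.e.~$\Psi = \langle (\pi,\nu(k))\rangle$ for a single $\psi$ with end-sequent $S(n,x_1,\ldots,x_\beta)$. Here $\nu(k)$ contains only proof links to $\psi$ of the form $S(k,a_1,\ldots,a_\beta)$; the goal is to eliminate these links. The natural move is: since each proof link carries parameters $a_i$ that are terms, I would first normalize $\nu(k)$ so that the only proof links occurring are of the ``canonical'' shape $S(k,x_1,\ldots,x_\beta)$ — this may require instantiating the schematic end-sequent and using the substitution conventions introduced before Definition~\ref{def.proofschema}, together with cuts against instances, or simply observing that a proof link with arbitrary term arguments can be obtained from the canonical one by substitution (a proof of $S(k,\bar x)$ yields a proof of $S(k,\bar a)$ by substituting $x_i \mapsto a_i$, which is sound since eigenvariable conditions on the $\LKSI$-side are controlled by $k$ only). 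Replacing every proof link by the sequent $S(k,x_1,\ldots,x_\beta)$ read as an assumption turns $\nu(k)$ into an $\LKE$-derivation of $S(k+1,x_1,\ldots,x_\beta)$ from the single assumption sequent $S(k,x_1,\ldots,x_\beta)$. Cutting this derivation against $\pi$ — or more precisely, applying the induction rule $\ind$ with left premise (coming from) $\pi$ giving $S(0,\bar x)$ and right premise $S(k,\bar x)\seq \ldots, S(k+1,\bar x)$ extracted from $\nu(k)$ — produces an $\LKSI$-derivation of $S(n,x_1,\ldots,x_\beta)$ in which the only induction is on parameter $k$ with $\vars(n)\subseteq\{k\}$ (in fact the conclusion term is just $n$), hence $k$-simple. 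One subtlety: the induction rule as stated in the excerpt has the shape $A(k),\Gamma\seq\Delta,A(k+1)$ over $A(0),\Gamma\seq\Delta,A(t)$, so $S$ must be massaged into the form $A(0),\Gamma\seq\Delta$ with the ``stepped'' part $A$ identified across the end-sequents of $\pi$ and $\nu(k)$; the identification of formula occurrences in the end-sequents of $\pi$ and $\nu(k)$ that Definition~\ref{def.proofschema} provides is exactly what licenses this.

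Next I would do the induction step on $\alpha$. Given the proof schema pair $(\pi_1,\nu_1(k))$ for $\psi_1$, its proofs may contain $k$-proof links not only to $\psi_1$ but also to $\psi_2,\ldots,\psi_\alpha$. By the induction hypothesis applied to the sub-schema $\langle (\pi_2,\nu_2(k)),\ldots,(\pi_\alpha,\nu_\alpha(k))\rangle$ — and, more carefully, to the schemata obtained by taking each $\psi_j$ ($j\geq 2$) as the principal symbol — I obtain for each $j\geq 2$ a $k$-simple $\LKSI$-proof of the end-sequent $S_j(n,\ldots)$ of $\psi_j$, parametric in $n$. Then I would replace, inside $\pi_1$ and $\nu_1(k)$, every proof link to such a $\psi_j$ with argument term $a$ by (a suitable substitution instance of) the corresponding $k$-simple $\LKSI$-proof, using again that substituting a term for the parameter in a $k$-simple proof keeps it a correct $\LKSI$-proof (and $k$-simple, since the induction terms were already $\vars(\cdot)\subseteq\{k\}$ and are only further instantiated consistently). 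This reduces the situation for $\psi_1$ to the single-pair case already handled, and applying that construction finishes the proof.

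I expect the main obstacle to be bookkeeping around eigenvariables and the precise matching with the stated $\ind$ rule: the $\ind$ rule in the excerpt fixes a rigid schematic shape $A(0),\Gamma\seq\Delta,A(t)$, whereas a proof schema pair only guarantees that the end-sequents of $\pi$ and $\nu(k)$ are $S(0,\bar x)$ and $S(k+1,\bar x)$ with an abstract identification of occurrences — so I must argue that this identification always lets one write $S(n,\bar x)$ in the required ``$A(n)$ plus side formulas with $k$ absent'' form, and that the variable $k$ genuinely does not occur in the side formulas $\Gamma,\Delta,A(0)$ (which follows because in a proof schema pair $k$ is the recursion parameter and appears only in the ``$+1$'' copy). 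A secondary, more technical point is handling proof links whose term arguments $a_i$ involve defined function symbols: one must interleave the proof-link rewriting with the rewrite rules for defined symbols exactly as in the Evaluation definition, so that the final object is a genuine $\LKSI$-proof over the normalized language; this is routine but needs to be stated carefully.
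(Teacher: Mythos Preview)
Your overall strategy---induction on the number of proof symbols, eliminate proof links to the ``outer'' symbols $\psi_j$ ($j>1$) using the induction hypothesis, then turn the remaining self-referential links for $\psi_1$ into a single application of $\ind$---is exactly the paper's. The paper organizes the induction from $\psi_\alpha$ down to $\psi_1$ rather than peeling off $\psi_1$ first, but that is cosmetic.

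There is, however, a real gap precisely at the point you flag as a ``subtlety''. The rule $\ind$ is \emph{unary} with a single induction formula $A$; it has no left and right premise as you write, and the identification of formula occurrences supplied by Definition~\ref{def.proofschema} does \emph{not} let you cast an arbitrary end-sequent $S(n,\bar x)$ in the shape $A(0),\Gamma\seq\Delta$. If, say, $S(n)$ is $P(n)\seq Q(n)$, there is no single $A$ and side formulas $\Gamma,\Delta$ free of $k$ that match. The paper's fix is to treat the entire sequent as a formula: each proof link with conclusion $S_\iota(t)$ is replaced by a short derivation from an ``axiom'' $S_\iota(t)\seq S_\iota(t)$ (the formula-encoding of the sequent on the left, the actual sequent on the right), so that $\nu_\gamma(k)$ becomes an $\LKE$-proof of
\[
S_{\alpha}(t^\alpha_1),\ldots,S_{\gamma+1}(t^{\gamma+1}_{\delta_{\gamma+1}}),\,S_\gamma(k)\ \seq\ S_\gamma(k+1).
\]
Now $\ind$ applies with $A(k)$ the formula-encoding of $S_\gamma(k)$ and empty side formulas, giving $S_\gamma(0)\seq S_\gamma(n)$; a cut against (the analogously processed) $\pi_\gamma$ yields $\seq S_\gamma(n)$. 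This also dissolves your worry about $k$ appearing in side formulas, and it handles several proof links with different non-arithmetic arguments $a_i$ uniformly: they simply accumulate as extra antecedent formulas and are removed by cuts against substitution instances of the induction-hypothesis proofs---cleaner than the ``normalize links to canonical shape'' preprocessing you sketch.
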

\begin{proof}
  Let $(\pi_\gamma,\nu_\gamma(k))$ be proofs of $S_\gamma(k)$ 
  for $\psi_\gamma$
  respectively.
  We construct inductively $k$-simple {\LKSI}-proofs of $S_\gamma(k)$, starting with $(\pi_\alpha,\nu_\alpha(k))$.
  By replacing proof links
  \AxiomC{$(\psi_\alpha(k))$}
  \dashedLine
  \UnaryInfC{$S_\alpha(k)$}
  \DisplayProof
  in $\nu_\alpha(k)$ by axioms $S_\alpha(k)\seq S_\alpha(k)$ and using cuts on some easily constructed $\LKE$-proofs
  we obtain an $\LKE$-proof $\lambda$ of $S_\alpha(k)\seq S_\alpha(k+1)$.
  Then the following is the desired $k$-simple {\LKSI}-proof of $S_\alpha(n)$:
  \begin{center}
    \AxiomC{$(\pi_\alpha)$}
    \noLine
    \UnaryInfC{$\seq S_\alpha(0)$}

    \AxiomC{$(\lambda)$}
    \noLine
    \UnaryInfC{$S_\alpha(k)\seq S_\alpha(k+1)$}
    \RightLabel{$\ind$}
    \UnaryInfC{$S_\alpha(0) \seq S_\alpha(k)$}

    \RightLabel{$cut$}
    \BinaryInfC{$\seq S_\alpha(k)$}
    \DisplayProof
  \end{center}
  For the induction step, assume that we have {\LKSI}-proofs $\lambda_{\gamma+1},\ldots,\lambda_\alpha$ 
  of $S_{\gamma+1}(k), \\ \ldots, S_\alpha(k)$ respectively. Our aim
  is to construct an {\LKSI}-proof of $S_\gamma(k)$. As before, in $\nu_\gamma(k)$ we replace proofs links 
  of the form 
  \AxiomC{$(\psi_\iota(t))$}
  \dashedLine
  \UnaryInfC{$S_\iota(t)$}
  \DisplayProof
  by axioms to obtain an {\LKE}-proof of 
  \[
    S_\alpha(t^\alpha_1),\ldots,S_\alpha(t^\alpha_{\delta_\alpha}),\ldots, S_{\gamma+1}(t^{\gamma+1}_1),\ldots,S_{\gamma+1}(t^{\gamma+1}_{\delta_{\gamma+1}}), S_\gamma(k) \seq S_\gamma(k+1).
  \]
  Substituting for $k$ in $\lambda_{\gamma+1},\ldots,\lambda_\alpha$, we obtain appropriate proofs
  to obtain, via cut, an {\LKSI}-proof of $S_\gamma(k)\seq S_\gamma(k+1)$. Note that the 
  $\lambda_\iota$ remain $k$-simple since the proof links in $\nu_\gamma(k)$ are $k$-proof links.
  We perform an analogous transformation
  on $\pi_\gamma$ to obtain an {\LKSI}-proof of $\seq S_\gamma(0)$. From these proofs we construct the
  desired {\LKSI}-proof using the $\ind$ rule as in the base case. Clearly this application of the $\ind$
  rule is again $k$-simple.
  \qed
\end{proof}
We illustrate this by means of a simple example.
\begin{example}
Consider the proof schema $\Psi=\langle(\pi,\nu(k))\rangle$ for the proof symbol $\psi$
with $\pi=$
\begin{center}
\scriptsize
\AxiomC{$P(0)\seq P(0)$}
\AxiomC{$P(1)\seq P(1)$}
\RightLabel{$\impl:l, \Ecal$}
\BinaryInfC{$P(0),\bigwedge_{i=0}^0(P(i) \impl P(i+1))\seq P(1)$}
\DisplayProof
\end{center}
and $\nu(k)=$
\begin{center}
\scriptsize
  \AxiomC{$(\psi(k))$}
  \dashedLine
  \UnaryInfC{$P(0),\bigwedge_{i=0}^{k}(P(i) \impl P(i+1))\seq P(k+1)$}
  \AxiomC{$P(k+2)\seq P(k+2)$}
  \RightLabel{$\impl:l$}
\BinaryInfC{$P(0),\bigwedge_{i=0}^{k}(P(i) \impl P(i+1)), P(k+1) \impl P(k+2)\seq P(k+2)]$}
\RightLabel{$\land:l$}
\UnaryInfC{$P(0),\bigwedge_{i=0}^{k}(P(i) \impl P(i+1))\land P(k+1) \impl P(k+2)\seq P(k+2)$}
\RightLabel{$\Ecal$}
\UnaryInfC{$P(0),\bigwedge_{i=0}^{k+1}(P(i) \impl P(i+1))\seq P(k+2)$}
\DisplayProof
\end{center}
It is translated to the $\LKSI$-proof of $S(k)=P(0)\land\bigwedge_{i=0}^{k}(P(i) \impl P(i+1))\impl P(k+1)$:
\begin{center}
\scriptsize
\AxiomC{$(\pi)$}
\noLine
\UnaryInfC{$\seq S(0)$}

\AxiomC{canonical proof}
\UnaryInfC{$S(k),P(0),\bigwedge_{i=0}^{k}(P(i) \impl P(i+1))\seq P(k+1)$}

\AxiomC{$P(k+2)\seq P(k+2)$}
\RightLabel{$\impl:l$}
\BinaryInfC{$S(k),P(0),\bigwedge_{i=0}^{k}(P(i) \impl P(i+1)),P(k+1)\impl P(k+2)\seq P(k+2)$}
\RightLabel{$\impl:r,\land:l,\Ecal$}
\UnaryInfC{$S(k) \seq S(k+1)$}
\RightLabel{$\ind$}
\UnaryInfC{$S(0) \seq S(k)$}
\RightLabel{$cut$}
\BinaryInfC{$\seq S(k)$}
\DisplayProof
\end{center}
\end{example}
Towards proving the converse of Proposition~\ref{prop:schema_to_lksi}, we introduce
the calculus $\LKSI'$, which is just $\LKSI$ where the $\ind$ rule is replaced by
a binary version
\begin{center}
\AxiomC{$\Gamma\seq\Delta, A(0)$}
\AxiomC{$A_\beta(k),\Gamma_\beta\seq\Delta_\beta,A_\beta(k+1)$}
\RightLabel{$\ind'$}
\BinaryInfC{$\Gamma_\beta\seq\Delta_\beta, A_\beta(t)$}
\DisplayProof
\end{center}
where again $k$ does not occur in $\Gamma,\Delta,A(0)$. We define $k$-simple
$\LKSI'$-proofs analogously to $k$-simple $\LKSI$-proofs. The following
result is easy to prove.
\begin{proposition}\label{prop:lksi_lksip}
There exists a ($k$-simple) $\LKSI$-proof of $S$ if and only if there
exists a ($k$-simple) $\LKSI'$-proof of $S$.
\end{proposition}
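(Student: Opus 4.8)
The statement asserts the equivalence of $\LKSI$ and $\LKSI'$ with respect to provability (both in the plain and in the $k$-simple versions). Since $\LKSI'$ differs from $\LKSI$ only in which form of the induction rule is used, the plan is to show that each form of the rule can be simulated by the other, using the remaining (shared) rules of the calculus — in particular $cut$, weakening and contraction, which are available in both calculi.

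For the direction from $\LKSI$ to $\LKSI'$, I would show that the unary $\ind$ rule is derivable in $\LKSI'$. Given a premise $A(k),\Gamma\seq\Delta,A(k+1)$, one first produces, by weakening, a trivial $\LKSI'$-proof of $\seq A(0),\neg A(0)$ — or more directly, one observes that the conclusion $A(0),\Gamma\seq\Delta,A(t)$ of the unary rule is obtained from the binary $\ind'$ rule by taking its left premise to be the axiom-like proof of $A(0)\seq A(0)$ (followed by weakening to $A(0),\Gamma\seq\Delta,A(0)$, so that the sequent has the shape $\Gamma'\seq\Delta',A(0)$ with $\Gamma'=A(0),\Gamma$, $\Delta'=\Delta$) and its right premise the given premise $A(k),\Gamma\seq\Delta,A(k+1)$. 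One must check that the eigenvariable condition ($k$ not in $\Gamma,\Delta,A(0)$) transfers verbatim, and that if $\vars(t)\subseteq\{k\}$ then the resulting $\ind'$ application is again $k$-simple. This handles both the plain and the $k$-simple case simultaneously.

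For the direction from $\LKSI'$ to $\LKSI$, I would simulate the binary $\ind'$ rule by a unary $\ind$ followed by a $cut$. Given the two premises $\Gamma\seq\Delta,A(0)$ and $A_\beta(k),\Gamma_\beta\seq\Delta_\beta,A_\beta(k+1)$, apply the unary $\ind$ rule to the second premise to obtain $A_\beta(0),\Gamma_\beta\seq\Delta_\beta,A_\beta(t)$, and then cut against the first premise on $A(0)$ — but note the first premise mentions $A(0)$ while the $\ind$ conclusion mentions $A_\beta(0)$; here one uses that in the $\ind'$ rule $A$ and $A_\beta$ refer to the induction formula of that inference, so $A=A_\beta$ up to the renaming implicit in the notation, and the cut on $A(0)=A_\beta(0)$ yields $\Gamma,\Gamma_\beta\seq\Delta,\Delta_\beta,A_\beta(t)$, from which the stated conclusion $\Gamma_\beta\seq\Delta_\beta,A_\beta(t)$ follows if $\Gamma\subseteq\Gamma_\beta$ and $\Delta\subseteq\Delta_\beta$ as multisets (using contractions), or more cleanly one simply keeps the extra context — I would phrase $\ind'$ so that its conclusion carries $\Gamma,\Gamma_\beta\seq\Delta,\Delta_\beta,A_\beta(t)$, matching what cut produces, and absorb any discrepancy by a routine weakening/contraction argument. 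Again the eigenvariable and $k$-simplicity conditions carry over directly.

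**Main obstacle.** The only genuinely delicate point is bookkeeping of the side contexts $\Gamma,\Delta$ versus $\Gamma_\beta,\Delta_\beta$ in the binary rule, i.e.\ making sure the multiset-level matching of formulas in the $cut$ step is exactly right and that no eigenvariable condition is silently violated when contexts are merged; everything else is a direct rule-by-rule simulation. I expect this to be genuinely routine, which is why the paper calls the result "easy to prove", so I would present it as: $\ind$ is a special case of $\ind'$ (take the left premise to be a weakened axiom), and $\ind'$ is derivable from $\ind$ plus $cut$, checking in each case that $k$-simplicity is preserved.
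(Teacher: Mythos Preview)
Your proposal is correct and is precisely the standard argument the paper has in mind when it calls the result ``easy to prove'' (the paper gives no explicit proof). The two simulations --- $\ind$ as $\ind'$ with a weakened-axiom left premise, and $\ind'$ as $\ind$ followed by a cut on $A(0)$ --- together with the observation that the eigenvariable and $k$-simplicity conditions transfer verbatim, are exactly what is needed; your caveat about the context bookkeeping is the only point requiring care, and it is handled by weakening and contraction as you indicate.
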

\begin{proposition}
  Let $\pi$ be a $k$-simple {\LKSI}-proof of $S$. 
  Then there exists a proof schema with end-sequent $S$.
\end{proposition}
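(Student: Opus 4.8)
The plan is to prove the converse of Proposition~\ref{prop:schema_to_lksi} by first passing to the binary-induction calculus $\LKSI'$ via Proposition~\ref{prop:lksi_lksip}, so that we may assume $\pi$ is a $k$-simple $\LKSI'$-proof of $S$. Working with $\ind'$ is more convenient because each induction inference now has a genuine ``base proof'' (the left premise of $\ind'$) and a genuine ``step proof'' (the right premise), which mirrors exactly the pair $(\pi,\nu(k))$ in the definition of a proof schema pair. The proof proceeds by induction on the number of $\ind'$ inferences in $\pi$.

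The key construction is as follows. If $\pi$ contains no $\ind'$ inference at all, then $\pi$ is already an $\LKE$-proof (hence an {\LKS}-proof with no proof links), so $\langle (\pi,\nu(k))\rangle$ with a trivial second component (or rather: a one-pair proof schema whose base proof is $\pi$ itself, with $n$ instantiated appropriately) is the desired schema. For the induction step, pick a topmost $\ind'$ inference $\iota$ in $\pi$ — that is, one whose two subproofs $\pi_{\mathrm{base}}$ (ending in $\Gamma\seq\Delta,A(0)$) and $\pi_{\mathrm{step}}$ (ending in $A_\beta(k),\Gamma_\beta\seq\Delta_\beta,A_\beta(k+1)$) are themselves $\ind'$-free, hence $\LKE$-proofs. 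By $k$-simplicity, the term $t$ in the conclusion $\Gamma_\beta\seq\Delta_\beta,A_\beta(t)$ satisfies $\vars(t)\subseteq\{k\}$. Introduce a fresh proof symbol $\psi_{\mathrm{new}}$ with end-sequent $S'(n)\colon \Gamma_\beta\seq\Delta_\beta,A_\beta(n)$, and define a proof schema pair $(\pi_{\mathrm{base}}', \nu_{\mathrm{new}}(k))$ for it: the base proof $\pi_{\mathrm{base}}'$ is obtained from $\pi_{\mathrm{base}}$ by a weakening to reach $\Gamma_\beta\seq\Delta_\beta,A_\beta(0)$, and the step proof $\nu_{\mathrm{new}}(k)$ is obtained from $\pi_{\mathrm{step}}$ by attaching a proof link $\AxiomC{$(\psi_{\mathrm{new}}(k))$}\dashedLine\UnaryInfC{$S'(k)$}\DisplayProof$ as an additional axiom-leaf and cutting it against $\pi_{\mathrm{step}}$ in the obvious way (using that $A_\beta(k),\Gamma_\beta\seq\Delta_\beta,A_\beta(k+1)$ together with $\Gamma_\beta\seq\Delta_\beta,A_\beta(k)$ yields, via cut and contractions, $\Gamma_\beta\seq\Delta_\beta,A_\beta(k+1)$). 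Now replace the entire subderivation of $\pi$ rooted at $\iota$ by a single proof link $\AxiomC{$(\psi_{\mathrm{new}}(t))$}\dashedLine\UnaryInfC{$\Gamma_\beta\seq\Delta_\beta,A_\beta(t)$}\DisplayProof$; since $\vars(t)\subseteq\{k\}$ this is a legitimate proof link (and when $\iota$ sat inside the scope of an outer induction with eigenvariable $k$, the link is in fact a $k$-proof link as required by Definition~\ref{def.proofschema}). The resulting object $\pi'$ has one fewer $\ind'$ inference; applying the induction hypothesis to $\pi'$ (now an {\LKS}-proof with proof links, which we must read as a $k$-simple $\LKSI'$-proof after re-expanding the new proof link via $\ind'$ — or, more cleanly, we carry along the new proof schema pair and recurse on the remaining induction inferences) yields a proof schema for $S$, to which we prepend the pair for $\psi_{\mathrm{new}}$ in the tuple, respecting the ordering constraint $\psi_\beta$ may call $\psi_\gamma$ only for $\beta<\gamma$.

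The main obstacle I anticipate is bookkeeping around two points. First, the ordering/dependency condition on the tuple $\langle p_1,\ldots,p_\alpha\rangle$: when nested $\ind'$ inferences are eliminated from the inside out, the innermost ones generate proof symbols that must come \emph{later} in the tuple, and one must check that a proof link to $\psi_{\mathrm{new}}$ occurring inside the step proof of an outer symbol indeed respects $\beta<\gamma$ — this is automatic if we always append the freshly created symbol at the end of the tuple and process topmost inductions first. Second, and more delicate, is the eigenvariable/variable-scoping discipline: the $\ind'$ rule requires $k$ not to occur in $\Gamma,\Delta,A(0)$, and $k$-simplicity pins $k$ as \emph{the} induction variable, so when an inner induction lies within the scope of an outer one (which also uses $k$), one must argue that the inner conclusion term $t$ still has $\vars(t)\subseteq\{k\}$ and that substituting a fresh parameter $n$ for $k$ in forming $S'(n)$ does not collide with free variables elsewhere — a standard but necessary renaming argument. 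Everything else (the weakenings, the cut introducing the proof link, verifying that evaluating the constructed schema at $\gamma\in\nats$ reproduces, up to $\Ecal$-normalization, the first-order proof obtained by unfolding the $\ind'$ rule $\gamma$ times) is routine and can be checked against Proposition~\ref{prop:proof_schemata_sound}.
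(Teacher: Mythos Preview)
Your proposal is correct and follows essentially the same route as the paper: pass to $\LKSI'$ via Proposition~\ref{prop:lksi_lksip}, and for each $\ind'$ inference introduce a fresh proof symbol whose base component is the left premise and whose step component is a proof link (to itself at $k$) cut against the right premise, with the induction's conclusion replaced by a proof link.

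The only real difference is organizational. You process the $\ind'$ inferences one at a time, innermost first, by induction on their number; this forces you to state the induction hypothesis for ``mixed'' objects containing both $\ind'$ inferences and proof links, which you correctly flag as awkward. The paper avoids this by defining a single global transformation $T$ that replaces \emph{every} $\ind'$ inference by the corresponding proof link simultaneously, and then simply sets $\pi_\beta = T(\lambda_1)$ and $\nu_\beta(k) = $ (proof link to $\psi_\beta(k)$) cut with $T(\lambda_2)$, where $\lambda_1,\lambda_2$ are the premises of the $\beta$-th induction. Because inductions above the $\beta$-th one have indices $>\beta$, $T$ automatically ensures that the only proof links appearing in $\pi_\beta,\nu_\beta(k)$ are to $\psi_\gamma$ with $\gamma>\beta$ (or to $\psi_\beta$ itself in $\nu_\beta$), so the ordering constraint of Definition~\ref{def.proofschema} falls out for free without the inside-out bookkeeping. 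Your eigenvariable and weakening concerns are handled the same way in both approaches and are indeed routine.
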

\begin{proof}
By Proposition~\ref{prop:lksi_lksip} we may assume that $\pi$ is a
$k$-simple $\LKSI'$-proof.
Let $\pi$ contain $\alpha$ induction inferences
\begin{center}
\AxiomC{$\Gamma_\beta\seq\Delta_\beta, A_\beta(0)$}
\AxiomC{$A_\beta(k),\Gamma_\beta\seq\Delta_\beta,A_\beta(k+1)$}
\RightLabel{$\ind'$}
\BinaryInfC{$\Gamma_\beta\seq\Delta_\beta, A_\beta(t)$}
\DisplayProof
\end{center}
where $1\leq \beta \leq \alpha$. W.l.o.g. we assume that if $\gamma<\beta$ then the
induction inference with conclusion $\Gamma_\beta\seq\Delta_\beta, A_\beta(t)$
is above the induction inference with conclusion $\Gamma_\gamma\seq\Delta_\gamma, A_\gamma(t)$
in $\pi$. Further, let $T$ be the transformation taking an $\LKSI$-proof $\lambda$
to an {\LKS}-proof by replacing the induction inferences with
conclusion $\Gamma_\beta\seq\Delta_\beta, A_\beta(t)$
by proofs
\begin{center}
\AxiomC{$(\psi_\beta(t))$}
\dashedLine
\UnaryInfC{$\Gamma_\beta\seq\Delta_\beta,A_\beta(t)$}
\DisplayProof.
\end{center}
Clearly, if $\lambda$ is a $k$-simple proof,
then $T(\lambda)$ will only contain $k$-proof links.

We will inductively
construct a proof schema $\Psi=\langle (\pi_1,\nu_1(k)),\ldots,(\pi_\alpha,\nu_\alpha(k)) \rangle$
where $(\pi_\beta,\nu_\beta(k))$ are proof schema pairs with end-sequent $\Gamma_\beta\seq\Delta_\beta,A_\beta(n)$ for 
proof symbols $\psi_\beta$ respectively. Assume that we have already constructed such proofs for
$\psi_{\beta+1},\ldots,\psi_\alpha$, and consider the induction inference with conclusion
$\Gamma_\beta\seq\Delta_\beta, A_\beta(t)$. Let its left premise be $\lambda_1$
and its right premise $\lambda_2$. We set $\pi_\beta=T(\lambda_1)$, which by definition of $T$ fulfills
the requirements on proof links. 
Further, let $\nu_\beta(k)$ be the proof
\begin{center}
\AxiomC{$(\psi_\beta(k))$}
\dashedLine
\UnaryInfC{$\Gamma_\beta\seq\Delta_\beta,A_\beta(k)$}

\AxiomC{$(T(\lambda_2))$}
\noLine
\UnaryInfC{$A_\beta(k),\Gamma_\beta\seq\Delta_\beta,A_\beta(k+1)$}

\RightLabel{$cut$}
\BinaryInfC{$\Gamma_\beta\seq\Delta_\beta,A_\beta(k+1)$}
\DisplayProof
\end{center}
which also clearly satisfies the requirement on proof links. Summarizing,
$(\pi_\beta,\nu_\beta(k))$ is a proof schema pair with end-sequent $\Gamma_\beta\seq\Delta_\beta,A_\beta(n)$,
as desired.
\qed
\end{proof}

%

\subsection{Discussion of Regularization of Schematic Proofs}

For an ordinary first-order CERES, the characteristic clause set is computed from a regular proof, otherwise it may be satisfiable. Regularization is vital for CERES only in some cases when two different eigenvariables come from different branches of a binary rule, that produces an ancestor of some formula in the end-sequent. Therefore we need also the notion of regularization in proof schemata. For \LKS-proofs we use the usual notion of regularization, i.e. an \LKS-proof is regular iff all eigenvariables are distinct. But this is not enough and the reason is illustrated by the following example: Let $\Psi = \langle (\pi, \nu(k)) \rangle$ where $\pi$ is
\begin{prooftree}
 \scriptsize
 \AxiomC{$P(0,u) \seq P(0,u)$}
 \RightLabel{$\forall \colon l$}
 \UnaryInfC{$(\forall x) P(0,x) \seq P(0,u)$}
 \RightLabel{$\forall \colon r$}
 \UnaryInfC{$(\forall x) P(0,x) \seq (\forall x) P(0,x)$}
\end{prooftree}
and $\nu(k)$ is
\begin{prooftree}
\scriptsize
 \AxiomC{$(\psi, k)$}
 \dashedLine
 \UnaryInfC{$\bigor_{i=0}^k (\forall x) P(i,x) \seq \bigor_{i=0}^k (\forall x) P(i,x)$}

 \AxiomC{$P(k+1,u) \seq P(k+1,u)$}
 \RightLabel{$\forall \colon l$}
 \UnaryInfC{$(\forall x) P(k+1,x) \seq P(k+1,u)$}
 \RightLabel{$\forall \colon r$}
 \UnaryInfC{$(\forall x) P(k+1,x) \seq (\forall x) P(k+1,x)$}

 \RightLabel{$\lor \colon l$}
 \BinaryInfC{$\bigor_{i=0}^{k+1} (\forall x) P(i,x) \seq \bigor_{i=0}^k (\forall x) P(i,x), (\forall x) P(k+1,x)$}
 \RightLabel{$\lor \colon r$}
 \UnaryInfC{$\bigor_{i=0}^{k+1} (\forall x) P(i,x) \seq \bigor_{i=0}^{k+1} (\forall x) P(i,x)$}
\end{prooftree}

Then clearly, $u$ is an eigenvariable, $\pi$ and $\nu(k)$ are regular, but when an instance of the schema $\Psi$ is computed for some $\alpha \not = 0$, the instance is not regular anymore.

To avoid such collisions of eigenvariables, a stronger notion of variable is needed. We introduce {\em variable function symbols} of type $\omega \to \iota$. Then the {\em second order (or schematic) variables} are built from variable function symbols and terms of type $\omega$. The set of such variables is denoted with $\Vso$. The semantics is that if $x \in \Vso$ then for all $\alpha \in \nats$, $x(0), \ldots, x(\alpha)$ corresponds to the sequence of first-order variables $x_0,\ldots,x_\alpha$.

We redefine our notions of term, formula and the like, in the usual inductive fashion, taking into account schematic variables. For example, if $x \in \Vso$ and $f$ is a defined function symbol with the rewrite rules: $f(0,x) \ra x(0)$ and $f(k+1,x) \ra g(f(k,x),x(k+1))$, then $f(n,x)$ is a term, and the sequence of terms for $n = 0,1,2,\ldots$ is $x(0), g(x(0),x(1)), g(g(x(0),x(1)),x(2)), \ldots$.

Another issue is to distinguish between {\em global} and {\em local} eigenvariables. An eigenvariable is \emph{global} if it is propagated through proof links, otherwise it is \emph{local}. This distinction is motivated by the fact that a global eigenvariable must occur in (at least) two different proof schema pairs, where in one it is just a free variable and in the other it is an eigenvariable. Finally, we define the procedure of regularization:

\begin{definition}[Regularization]
Let $\Psi$ be a proof schema. For each pair $(\pi_i, \nu_i(k))$ in $\Psi$ we do the following: replace every eigenvariable $x_j$ with $x_j(0)$ in $\pi_i$, every local eigenvariable $x_j$ in $\nu_i(k)$ with $x_j(k+1)$ and every global eigenvariable $y_j$ in $\nu_i(k)$ with $y_j(0)$.
\end{definition}

According to this definition, the regularized version $\Psi'$ of the proof schema $\Psi$, given above, is: $\pi'$ is
\begin{prooftree}
 \scriptsize
 \AxiomC{$P(0,u(0)) \seq P(0,u(0))$}
 \RightLabel{$\forall \colon l$}
 \UnaryInfC{$(\forall x) P(0,x) \seq P(0,u(0))$}
 \RightLabel{$\forall \colon r$}
 \UnaryInfC{$(\forall x) P(0,x) \seq (\forall x) P(0,x)$}
\end{prooftree}
and $\nu(k)$ is
\begin{prooftree}
\scriptsize
 \AxiomC{$(\psi, k)$}
 \dashedLine
 \UnaryInfC{$\bigor_{i=0}^k (\forall x) P(i,x) \seq \bigor_{i=0}^k (\forall x) P(i,x)$}

 \AxiomC{$P(k+1,u(k+1)) \seq P(k+1,u(k+1))$}
 \RightLabel{$\forall \colon l$}
 \UnaryInfC{$(\forall x) P(k+1,x) \seq P(k+1,u(k+1))$}
 \RightLabel{$\forall \colon r$}
 \UnaryInfC{$(\forall x) P(k+1,x) \seq (\forall x) P(k+1,x)$}

 \RightLabel{$\lor \colon l$}
 \BinaryInfC{$\bigor_{i=0}^{k+1} (\forall x) P(i,x) \seq \bigor_{i=0}^k (\forall x) P(i,x), (\forall x) P(k+1,x)$}
 \RightLabel{$\lor \colon r$}
 \UnaryInfC{$\bigor_{i=0}^{k+1} (\forall x) P(i,x) \seq \bigor_{i=0}^{k+1} (\forall x) P(i,x)$}
\end{prooftree}

In the future we consider only regular proof schemata.

\section{Resolution Schemata}\label{sec:resschema}

In this section we define a notion of schematic resolution. In fact, schematic resolution refutations of $\CL(\varPsi)$, combined with the schematic projections
$PR(\varPsi)$ allow the construction of schematic atomic cut normal forms of the original proof schema $\varPsi$ -- what is precisely the aim of a schematic
CERES-method.

\begin{definition}[clause]
Let $p_1,\ldots,p_\alpha$ and $q_1,\ldots,q_\beta$ be schematic atomic formulas; then $p_1,\ldots,p_\alpha \seq q_1,\ldots,q_\beta$ is called a {\em clause}. A clause
in $k$, for an arithmetic variable $k$, is a clause containing at most $k$ as arithmetic variable. A clause is called {\em arithmetically ground} (shorthand:
$a$-ground) if it does not contain arithmetic variables. An arithmetically ground clause is in {\em normal form} if it is irreducible under the defining rewrite
rules. The set of all clauses is denoted by $\clauses$.
\end{definition}

We introduce {\em clause symbols} and denote them by $c,c',c_1,c_2,\ldots$ for defining clause schemata. Clause variables are denoted by $X,Y,X_1,Y_1,\ldots$ and the
set of all clause variables is denoted by $V_c$.

\begin{definition}[clause schema]\label{def.clauseschema}
\begin{itemize}
\item Clauses and clause variables are clause schemata.
\item If $C_1$ and $C_2$ are clause schemata then $C_1 \circ C_2$ is a clause schema.
\item Furthermore, let $c$ be a clause symbol of arity $\beta+\gamma+1$, $a$ an arithmetic term, $x_1,\ldots,x_\beta \in \Vso$ and $X_1,\ldots,X_\gamma \in V_c$.
Then $c(a,x_1,\ldots,x_\beta,X_1,\ldots,X_\gamma)$ is a clause schema w.r.t. the rewrite system $\Rcal(c)$, where $\Rcal(c)$ is of the form
\[
\begin{array}{l}
\{c(\Null,x_1,\ldots,x_\beta,X_1,\ldots,X_\gamma) \to C,\\
c(Sk,x_1,\ldots,x_\beta,X_1,\ldots,X_\gamma) \to c(k,x_1,\ldots,x_\beta,X_1,\ldots,X_\gamma) \circ D\}
\end{array}
\]
where $C$ is an arithmetically ground clause schema s.t. $V(C) \IN \{ x_1, \ldots, x_\beta, \\ X_1, \ldots, X_\gamma \}$ and $D$ is a clause with $V(D) \IN
\{x_1,\ldots,x_\beta,k\}$. The set of clause schemata is denoted by $\CS$.
\end{itemize}
\end{definition}

\begin{example}\label{ex.clschema}
Let $\sigma \in F^3_s, g \in F^1$ ($x \in \Vso,\ l \in V_a$) with the corresponding rewrite rules $\Rcal(\sigma)$
\begin{eqnarray*}
\{\sigma(\Null,x,l) &\to& x(l),\\
\sigma(Sk,x,l) &\to& g(\sigma(k,x,l))\}
\end{eqnarray*}
and let $c(n,x,X)$ be a clause schema for $\Rcal(c)$ consisting of the rules
\begin{eqnarray*}
\{c(0,x,X) &\to& X \circ (\seq P(\sigma(\Null,x,\Null))),\\
c(Sk,x,X) &\to& c(k,x,X) \circ (\seq P(\sigma(Sk,x,Sk)))\}
\end{eqnarray*}
For $X \in V_c$. The normal forms of $c(n,x,X)$ for $\{n \ass \alpha\}$ are just the clause schemata
$$X \seq P(x(0)), P(g(x(1))), \ldots, P(g^\alpha(x(\alpha))).$$
\end{example}

\begin{definition}[semantics of clause schemata]\label{def.clschemasem}
Let $C$ be a clause schema. Let $\vartheta$ be an arithmetically ground $s$-substitution with $V_a(C) \union V_1(C) \IN \dom(\vartheta)$ and $\lambda$ be a
$c$-substitution without clause variables in the range and $V_c(C) \IN \dom(\lambda)$. We define the interpretation of $C$ under $(\vartheta,\lambda)$ as
$$\vc(\vartheta,\lambda,C) = ((C\lambda)\vartheta)\downarrow.$$
\end{definition}

\begin{example}\label{ex.clschemasem}
let $c(n,x,X)$ be the clause schema from Example~\ref{ex.clschema}, $\vartheta = \{n \ass \alpha\}$ and $\lambda = \{X \ass Q(x(n))\}$. Then
$$\vc(\vartheta,\lambda,C) = Q(x(\alpha)) \seq P(x(0)), P(g(x(1))), \ldots, P(g^\alpha(x(\alpha))).$$
\end{example}

Clause schemata define infinite sequences of clauses. For the purpose of schematic CERES it is also vital to describe the infinite sequence of clause sets
$\CL(\varphi_n)$. To this aim we define a formalism for describing sequences of clause sets. Like in the ordinary CERES-method we define a type of clause term, but we
admit variables over finite sets of clauses (clause-set variables), which we denote by $\xi,\xi_0,\xi_1,\ldots$. The set of all clause-set variables is denoted by
$\Vcs$.

\begin{definition}[clause-set term]\label{def.clst}
We define the set of {\em clause-set terms} $\CST$ as follows:
\begin{itemize}
\item if $\xi \in \Vcs$ then $\xi \in \CST$,
\item if $C \in \CS$ then $[C] \in \CST$,
\item if $t_1,t_2 \in \CST$ then $t_1 \Plus t_2 \in \CST$ and $t_1 \Times t_2 \in \CST$.
\end{itemize}
\end{definition}

\begin{definition}\label{def.cltermeval}
Let $t$ be a clause-set term s.t. $V_a(t) \union V_c(t) \union \Vcs(t) = \emptyset$. Then we define the evaluation of $t$ to a set of clauses in the standard way:
\begin{itemize}
\item If $t = [C]$ then $|[C]| = \{C\}$.
\item If $t = t_1 \Plus t_2$ then $|t| = |t_1| \union |t_2|$.
\item If $t = t_1 \Times t_2$ then $|t| = |t_1| \times |t_2|$.
\end{itemize}
\end{definition}

\begin{definition}[semantics of clause-set terms]\label{def.clstsem}
Let $t$ be a clause-set term with $V_c(t) = \{X_1,\ldots,X_\alpha\}$, $\Vcs(t) = \{\xi_1,\ldots,\xi_\beta\}$ and $V_a(t) = \{n\}$. Let $\vartheta = \{n \ass \gamma\}$,
$\lambda = \{X_1 \ass C_1,\ldots,X_\alpha \ass C_\alpha\}$ (for clauses $C_1,\ldots,C_\alpha$) and $\mu = \{\xi_1 \ass s_1,\ldots,\xi_\beta \ass s_\beta\}$
(for clause-set terms $s_1,\ldots,s_\beta$ not containing clause-set variables. Then we define a semantic function
$\vcst$ by
$$\vcst(\vartheta,\lambda,\mu,t) = |(t\mu\lambda\vartheta)\downarrow|.$$
where $|\mbox{ }|$ is from Definition~\ref{def.cltermeval}.
\end{definition}

\begin{example}\label{ex.clst}
Let $c$ be the clause symbol from Example~\ref{ex.clschema}. Then
$$t\colon\ ([c(n,x,X)] \Times [\seq P(x(n))]) \Plus \xi$$
is a clause-set term. Let $\vartheta = \{n \ass \alpha\}$, $\lambda = \{X \ass {\seq}\}$ and $\mu = \{\xi \ass [P(\sigma(n,x,n)) \seq]\}$. Then the evaluation
$\vcst(\vartheta,\lambda,\mu,t)$ is
$$ \{P(g^\alpha(x(\alpha))) \seq \} \union \{\seq P(x(0)), P(g(x(1))), \ldots, P(g^\alpha(x(\alpha))),P(x(\alpha))\}.$$
\end{example}

The definition below is needed to define clause set schemata via clause-set terms.

\begin{definition}\label{def.clstover}
Let $t$ be a clause-set term, $\xi_1,\ldots,\xi_\alpha$ in $\Vcs$, and $s_1,\ldots,s_\alpha$ objects of appropriate type. Then $t\{\xi_1 \ass s_1,\ldots,\xi_\alpha
\ass s_\alpha\}$ is called a clause-set term over $\{s_1,\ldots,s_\alpha\}$ (note that every ordinary clause set term is also a clause set term over any set
$\{s_1,\ldots,s_\alpha\}$).
\end{definition}

\begin{example}\label{ex.clst2}
Let $t$ be the clause-set term
$$t\colon\ ([c(n,x,X)] \Times [\seq P(x(n))]) \Plus \xi$$
from Example~\ref{ex.clst} and $s$ be some object of the type of $\CST$. Then
$$t'\colon\ ([c(n,x,X)] \Times [\seq P(x(n))]) \Plus s$$
is a clause-set term over $\{s\}$.
\end{example}

\begin{definition}[clause-set schema]\label{def.css}
We reserve $d_0,d_1,\ldots$ for denoting clause-set schemata. A {\em clause-set schema} is a tuple $\Delta\colon (d_1,\ldots,d_\alpha)$ together with sets of rewrite
rules $\Rcal(d_1),\ldots,\Rcal(d_\alpha)$ s.t. for all $i=1,\ldots,\alpha$
\[
\begin{array}{l}
\Rcal(d_i) = \{d_i(\Null,x_1,\ldots,x_\delta,X_1,\ldots,X_\beta,\xi_1,\ldots,\xi_\gamma) \to t^b_i,\\
\mbox{   } d_i(Sk,x_1,\ldots,x_\delta,X_1,\ldots,X_\beta,\xi_1,\ldots,\xi_\gamma) \to t^s_i\}
\end{array}
\]
where $t^b_i, t^s_i$ are clause-set terms over terms in $d_1,\ldots,d_\alpha$, $t^b_\alpha$ is a clause-set term and
\begin{eqnarray*}
V(t^b_i) & \IN & \{x_1,\ldots,x_\delta,X_1,\ldots,X_\beta,\xi_1,\ldots,\xi_\gamma\}, \\
V(t^s_i) & \IN & \{x_1,\ldots,x_\delta,X_1,\ldots,X_\beta,\xi_1,\ldots,\xi_\gamma,k\}.
\end{eqnarray*}
Furthermore, we assume that $d_i(\alpha,x_1,\ldots,x_\delta,X_1,\ldots,X_\beta,\xi_1,\ldots,\xi_\gamma)$
is strongly normalizing for all $\alpha\in\nats$.
\end{definition}
Note that the previous definition is more liberal than the definitions of proof schemata
and the schematic language: there, the rewrite rules representing the definitions of the
symbols are required to be primitive recursive, and are therefore strongly normalizing. Here,
we allow any ,,well-formed'', i.e.~strongly normalizing, definition. We will make use of
this more liberal definition in Definition~\ref{def:char_term}, where we will define
a class of clause-set schemata in a mutually recursive way.
\begin{definition}[semantics of clause-set schemata]\label{def.css-semantics}
We extend $\vcst$ to a function $\vcst^*$. Let $\Delta\colon (d_1,\ldots,d_\alpha)$ a clause set schema as in Definition~\ref{def.clstsem}, $\vartheta$ a substitution
on $V_a(\Delta)$, $\lambda$ a substitution on $V_c(\Delta)$ and $\xi$ be a substitution on $\Vcs(\Delta)$. We define
\[
\begin{array}{l}
\vcst^*(\vartheta,\lambda,\xi,d_\alpha(\Null,x_1,\ldots,x_\delta,X_1,\ldots,X_\beta,\xi_1,\ldots,\xi_\gamma)) = \vcst(\vartheta,\lambda,\xi,t^b_\alpha),\\
\vcst^*(\vartheta,\lambda,\xi,d_\alpha(Sk,x_1,\ldots,x_\delta,X_1,\ldots,X_\beta,\xi_1,\ldots,\xi_\gamma)) = \vcst^*(\vartheta,\lambda,\xi,t^s_\alpha).
\end{array}
\]
Note that $t^b_\alpha$ is a clause-set term. For $1 \leq i < \alpha$ we define
\[
\begin{array}{l}
\vcst^*(\vartheta,\lambda,\xi,d_i(\Null,x_1,\ldots,x_\delta,X_1,\ldots,X_\beta,\xi_1,\ldots,\xi_\gamma)) = \vcst^*(\vartheta,\lambda,\xi,t^b_i),\\
\vcst^*(\vartheta,\lambda,\xi,d_i(Sk,x_1,\ldots,x_\delta,X_1,\ldots,X_\beta,\xi_1,\ldots,\xi_\gamma)) = \vcst^*(\vartheta,\lambda,\xi,t^s_i).
\end{array}
\]
The clause set schema defined by $\Delta$ w.r.t. $(\vartheta,\lambda,\mu)$ for $\dom(\vartheta) = \{n\}$ is then defined as
$$\vcst^*(\vartheta,\lambda,\xi,d_1(n,x_1,\ldots,x_\delta,X_1,\ldots,X_\beta,\xi_1,\ldots,\xi_\gamma)).$$
A clause-set schema is called {\em unsatisfiable} if there exist $\lambda$ and $\xi$ s.t. for all $\alpha$ and $\vartheta_\alpha\colon \{n \ass \alpha\}$ the clause
set
$$\vcst^*(\vartheta_\alpha,\lambda,\xi,d_1(n,x_1,\ldots,x_\delta,X_1,\ldots,X_\beta,\xi_1,\ldots,\xi_\gamma))$$
is unsatisfiable.
\end{definition}

\begin{example}\label{ex.css}
Let $\sigma$ be defined by
\begin{eqnarray*}
\{\sigma(\Null,x,l) &\to& x(l),\\
\sigma(Sk,x,l) &\to& g(\sigma(k,x,l))\}
\end{eqnarray*}
where $c(n,x,X)$ is the clause schema from Example~\ref{ex.clschema}, and $\sigma' \in F^1_s$ with the rewrite rules
\begin{eqnarray*}
\{\sigma'(\Null) &\to& a,\\
\sigma'(Sk) &\to& g(\sigma'(k))\}
\end{eqnarray*}
Note that $\sigma'(n)\downarrow_\alpha$ evaluates to $g^\alpha(a)$. Furthermore we define the clause set schema $\Delta =(d_1,d_2)$ by
\begin{eqnarray*}
\Rcal(d_1) &=& \{d_1(\Null,x,X) \to (d_2(\Null,x,X) \Plus \xi),\\
  & & \; d_1(Sk,x,X) \to d_2(Sk,x,X) \Plus [c(Sk,x,X)]\},\\
\Rcal(d_2) &=& \{d_2(\Null,x,X) \to [P(a) \seq],\ d_2(Sk,x,X) \to (d_2(k,x,X) \Plus [P(\sigma'(Sk)) \seq]\}
\end{eqnarray*}
Let $\vartheta = \{n \ass \alpha\}$, $\lambda = \{X \ass \seq\}$ and $\mu = \{\xi \ass [c(\Null,x,X)]\}$; then
$$\vcst^*(\vartheta,\lambda,\mu,d_1(n,x,X)) = \{\seq P(x(0)),\ldots,P(g^\alpha(x(\alpha)));\ P(a) \seq; \ldots, P(g^\alpha(a))\seq \}$$
\end{example}

\begin{definition}[resolution term]
\begin{itemize}
\item clause schemata are resolution terms.
\item Let $s_1$ and $s_2$ be resolution terms w.r.t. $\Rcal_1$ and $\Rcal_2$, and $P$ be an indexed atom. Then $r(s_1;s_2;P)$
is a resolution term w.r.t. $\Rcal_1 \union \Rcal_2$
\end{itemize}
\end{definition}

Resolution terms define resolution deductions only if appropriate substitutions are applied to the clauses unifying atoms in clauses.

\begin{definition}[$\Vso$-substitution schema]
Let $x_1,\ldots,x_\alpha \in \Vso$ and $t_1,\ldots,t_\alpha$ be $s$-terms s.t. $V_a(t_i) \IN \{n,k\}$ for $i=1,\ldots,\alpha$ then
$$\theta\colon \{x_1 \ass \lambda k.t_1,\ldots,x_\alpha \ass \lambda k.t_\alpha\}$$
is called a $\Vso$-substitution schema (note that the terms $t_i$ may contain arbitrary variables in $\Vso$).
\end{definition}

Every $\Vso$-substitution schema evaluates to sequences of ``ordinary'' second order substitutions under an assignment for the parameter $n$. Indeed, let $\vartheta = \{n \ass \beta\}$; then
$$\theta_\beta = \theta\vartheta = \{x_1 \ass \lambda k.(t_1)\downarrow_\beta,\ldots,x_\alpha \ass \lambda k.(t_\alpha)\downarrow_\beta\}$$
Note that the $(t_i)\downarrow_\beta$ contain only $k$ as arithmetic variable.

\begin{example}\label{ex.resterm}
Let $c(n)$ be the clause schema defined in Example~\ref{ex.clschema}. Then the term $t$ defined as
\[
\begin{array}{l}
r(r(c(n,x,X);P(x(n+1)) \seq; P(x(n+1)));P(x(n+2)) \seq; P(x(n+2)))\})
\end{array}
 \]
is a resolution term. We define a $\Vso$-substitution schema $\theta$, s.t. the normal form of $t\{X \ass Q(a) \seq\}\theta\{n \ass \alpha\}$ is a resolution deduction for all $\alpha$.

Let $g \in F^s_2$ s.t. $g$ specifies the primitive recursive function $\gamma$:
\begin{eqnarray*}
\gamma(k,n) &=& 0  \mbox{ for } k<n+1,\\
             &=& 1 \mbox{ for } k =n+1,\\
             &=& 2 \mbox{ for } k > n+1.
\end{eqnarray*}
Such a $g$ exists as all primitive recursive functions can be expressed as schematic terms. $g$ will need also other symbols in $F^s$ for its definition. Let $h,h' \in F^s_4$ be defined as follows:
\[
\begin{array}{l}
h(\bar{0},x,k,n) \to x(k),\ h(l+1,x,k,n) \to h'(l,x,k,n),\\
h'(\bar{0},x,k,n) \to x(\bar{0}),\ h'(l+1,x,k,n) \to \sigma(\bar{1},x,\bar{1}).
\end{array}
\]
We define $\theta = \{x \ass \lambda k.h(g(k,n),x,k,n)\}$.

Then, for all $\vartheta_\alpha\colon \{n \ass \alpha\}$,  $t'_\alpha\colon (t\{X \ass Q(a) \seq\}\theta\vartheta_\alpha)\downarrow$ is indeed a resolution deduction. For $\alpha = 2$ we obtain the resolution term
\[
\begin{array}{l}
r(r(Q(a) \seq P(x(0)),P(g(x(1))),P(g(g(x(2)))); \ P(x(0))\seq; \ P(x(0))); \\
\quad P(g(x(1))) \seq; \ P(g(x(1))))
\end{array}
\]
which represents a resolution deduction of the clause $Q(a) \seq P(g(g(x(2))))$.
\end{example}

\begin{definition}[resolvent]
Let $C\colon C_1 \seq C_2,\ D\colon D_1 \seq D_2$ be clauses with $V_a(\{C,D\}) = \emptyset$ and $V_c(\{C,D\}) = \emptyset$; let $P$ be an atom. Then
$$\res(C,D,P) = C_1,D_1 \setminus P \seq C_2 \setminus P, D_2,$$
where $C\setminus P$ denotes the multi-set of atoms in $C$ after removal of all occurrences of $P$. The clause $\res(C,D,P)$ is called a {\em resolvent} of $C_1$ and $C_2$ on $P$. In case $P$ does not occur in $C_2\theta$ and $D_1\theta$ then $\res(C,D,P)$ is called a pseudo-resolvent (note that inferring $\res(C,D,P)$ from $C$ and $D$ is sound in any case).
\end{definition}

\begin{definition}[resolution deduction]\label{def.resded}
If $C$ is a clause then $C$ is a resolution deduction and $\ES(C) = C$. If $\gamma_1$ and $\gamma_2$ are resolution deductions and $\ES(\gamma_1) = D_1$,
$\ES(\gamma_2) = D_2$ and $\res(D_1,D_2,P) = D$, where $\res(D_1,D_2,P)$ is a resolvent, then $r(\gamma_1,\gamma_2,P)$ is a resolution deduction and $\ES(r(\gamma_1,\gamma_2,P)) = D$.

Let $t$ be a resolution deduction and $\Ccal$ be the set of all clauses occurring in $t$; then $t$ is called a {\em resolution refutation} of $\Ccal$ if $\ES(t) {=}
\seq$.
\end{definition}

Note that resolution terms, containing only ordinary clauses and atoms, represent resolution deductions if, under evaluation of $r$ by $\res$, we obtain a consistent structure of resolvents.

\begin{definition}[tree transformation]\label{def.treetrans}
Any resolution deduction in Definition~\ref{def.resded} can easily be transformed into a resolution tree by the following transformation $T$:
\begin{itemize}
\item If $\gamma = C$ for a clause $C$ then $T(\gamma) = C$.
\item If $\gamma = r(\gamma_1,\gamma_2,P)$, $\varphi_1 = T(\gamma_1)$, $\varphi_2 = T(\gamma_2)$, $\ES(\varphi_1) = C_1$, $\ES(\varphi_2) = C_2$, and
$\res(C_1,C_2,P,\theta)=C$ then $T(\gamma)=$
\[
\infer{C}
 { \deduce{C_1}{(\varphi_1}
   &
   \deduce{C_2}{(\varphi_2)}
 }
\]
\end{itemize}
\end{definition}

The length of $T(\gamma)$ is polynomial in the length of $\gamma$ as can be proved easily.

\begin{example}

$T(r(r(\seq Q_0(x),P_0(x),P_1(x); P_1(x) \seq; P_1(x)); Q_0(x),Q_1(x) \seq; Q_0(x)))$ is the resolution tree
\[
\infer{ \seq P_0(x),Q_1(x) }
 { \infer{ \seq Q_0(x),P_0(x) }
    { \seq Q_0(x),P_0(x),P_1(x) & P_1(x) \seq
    }
    &
   Q_0(x),Q_1(x) \seq
 }
\]
\end{example}

we define a notion of resolution proof schema in the spirit of {\LK}-proof schemata.

\begin{definition}\label{def.rtover}
Let $t$ be a resolution term, $X_1,\ldots,X_\alpha$ in $V_c$, and $s_1,\ldots,s_\alpha$ objects of appropriate type. Then $t\{X_1 \ass s_1,\ldots,X_\alpha \ass
s_\alpha\}$ is called a resolution term over $\{s_1,\ldots,s_\alpha\}$.
\end{definition}

\begin{definition}[resolution proof schema]\label{def.rproofschema}
A resolution proof schema over the variables $x_1,\ldots,x_\alpha \in \Vso$ and $X_1,\ldots, X_\beta \in \Vcs$ is a structure $((\rho_1,\ldots,\rho_\gamma),\Rcal)$ with $\Rcal\colon \Rcal_1 \union \ldots
\union \Rcal_\gamma$, where the $\Rcal_i$ (for $0 \leq i \leq \gamma$) are defined as follows:
$$\Rcal_i = \{\rho_i(0,x_1,\ldots,x_\alpha,X_1,\ldots,X_\beta) \to t^b_i,\
\rho_i(Sk,x_1,\ldots,x_\alpha,X_1,\ldots,X_\beta) \to t^s_i\},$$
where
\begin{itemize}
\item $t^b_i$ is a resolution term over terms of the form $\rho_j(a_j,s_1,\ldots,s_\alpha,C_1,\ldots,C_\beta)$ for $1 \leq i < j$.
\item $t^s_i$  is a resolution term over terms of the form $\rho_j(a_j,s_1,\ldots,s_\alpha,C_1,\ldots,C_\beta)$ and $\rho_i(k,s'_1,\ldots,s'_\alpha,C'_1,\ldots,C'_\beta)$
    for $1 \leq i < j$.
\end{itemize}
\end{definition}

\begin{definition}[semantics of resolution proof schemata]\label{def.rps-semantics}
A resolution proof schema $R$ is called a {\em resolution deduction schema} from a clause-set schema $\Delta$ if there exist  substitutions $\lambda$ for $V_c$ and
$\mu$ for $\Vcs$ and a $\Vso$-substitution schema $\theta$ s.t. for every $\vartheta_\beta$ of the form $\{n \ass \beta\}$
$(\rho_1(n,\bar{x},X_1,\ldots,X_\alpha)\lambda\theta\vartheta_\beta)\downarrow$ is a resolution deduction $t_\beta$ from
$\vcst^*(\vartheta_\beta,\lambda,\mu,d_1(n,\bar{x},Y_1,\ldots,Y_\gamma,\xi_1,\ldots,\xi_\delta))$. If for all $\beta$ $\ES(t_\beta) =\; \seq$ then we call $R$ a
resolution refutation of $\Delta$.
\end{definition}

\begin{example}\label{ex.refschema}
Let $\Delta$ be the clause-set schema defined in Example~\ref{ex.css} defining the sequence of clauses
$$\Ccal_\alpha = \{\seq P(x_0),\ldots,P(g^\alpha(x_\alpha));\ P(a) \seq; \ldots, P(g^\alpha(a))\seq \}$$
Let $(\rho,\Rcal)$ be a proof schema with clause variable $X$ defined by the following rewrite system $\Rcal$:
\[
\begin{array}{l}
\{ \\
\rho(0,x,X) \to r( \seq P(\sigma(\Null,x,\Null)) \circ X; P(\sigma'(\Null)) \seq; P(\sigma(\Null,x,\Null))),\\
\rho(Sk,x,X) \to r(\rho(k,x,\seq P(\sigma(Sk,x,Sk)) \circ X); P(\sigma'(Sk)) \seq; P(\sigma(Sk,x,Sk))).\\
\}
\end{array}
\]
Then $(\rho,\Rcal)$ is a refutation schema for $\Delta$; indeed, for $\lambda = \{X \ass \seq\}$ and $\mu = \{\xi \ass [c(\Null,x,X)]\}$ and $\theta$ defined as
$$\theta = \{x \ass \lambda k.a\}$$
%
we get for all $\vartheta_\alpha = \{n \ass \alpha\}$ that $(\rho(n,x,X)\lambda\theta\vartheta_\alpha)\downarrow$ is a resolution resolution refutation of
$\vcst^*(\vartheta_\alpha,\lambda,\mu,d_1(n,x,X,\xi))$, which is just $\Ccal_\alpha$.
\end{example}

\begin{theorem}\label{the.res-schema-sound}
Resolution refutation schemata are sound, i.e. if $R$ is a resolution refutation schema of a clause-set schema $\Delta$ then $\Delta$ is unsatisfiable.
\end{theorem}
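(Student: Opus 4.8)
The plan is to reduce the soundness of resolution refutation \emph{schemata} to the classical soundness of first-order resolution, applied uniformly at every numeral. By Definition~\ref{def.rps-semantics}, the hypothesis that $R$ is a resolution refutation schema of $\Delta$ supplies a clause substitution $\lambda$ (on $V_c$), a clause-set substitution $\mu$ (on $\Vcs$) and a $\Vso$-substitution schema $\theta$ such that, for every $\beta\in\nats$ and $\vartheta_\beta=\{n\ass\beta\}$, the object
\[
t_\beta \;=\; \bigl(\rho_1(n,\bar{x},X_1,\ldots)\,\lambda\theta\vartheta_\beta\bigr)\!\downarrow
\]
is a resolution deduction in the sense of Definition~\ref{def.resded} with $\ES(t_\beta)=\;\seq$, whose leaf clauses are (substitution instances, under $\theta_\beta=\theta\vartheta_\beta$, of) clauses of $\Ccal_\beta:=\vcst^*(\vartheta_\beta,\lambda,\mu,d_1(n,\bar{x},\ldots))$. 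It therefore suffices to prove that every $\Ccal_\beta$ is unsatisfiable: then $\lambda$ together with $\xi:=\mu$ are precisely the witnesses demanded by Definition~\ref{def.css-semantics}, and so $\Delta$ is unsatisfiable.

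Fix $\beta\in\nats$. The heart of the matter is a routine soundness lemma for resolution deductions (Definition~\ref{def.resded}), proved by induction on their term structure: \emph{the end-clause $\ES(\gamma)$ of a resolution deduction $\gamma$ is a logical consequence of the set of leaf clauses of $\gamma$}. The base case $\gamma=C$ is trivial. If $\gamma=r(\gamma_1,\gamma_2,P)$ with $\ES(\gamma_i)=D_i$, then by the induction hypothesis $D_1$ and $D_2$ both follow from the leaves of $\gamma$, while $\ES(\gamma)=\res(D_1,D_2,P)$ is a logical consequence of $D_1\wedge D_2$ — and this remains true even when $\res(D_1,D_2,P)$ is only a pseudo-resolvent, as observed after its definition — whence $\ES(\gamma)$ follows from the leaves of $\gamma$. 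Applying the lemma to $\gamma=t_\beta$, the empty clause $\seq$ (i.e.\ $\bot$) is a logical consequence of the leaf set of $t_\beta$, so that finite set of clauses is unsatisfiable.

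It remains to transfer this back to $\Ccal_\beta$ itself. Every leaf of $t_\beta$ is a $\theta_\beta$-instance of a clause of $\Ccal_\beta$, and $\theta_\beta$ merely substitutes terms for the first-order variables obtained from the $\Vso$-variables. Since the universal closure of a clause entails each of its substitution instances, any model of $\Ccal_\beta$ would satisfy every leaf of $t_\beta$, contradicting the conclusion of the previous paragraph; hence $\Ccal_\beta$ is unsatisfiable. As $\beta$ was arbitrary, Definition~\ref{def.css-semantics} gives that $\Delta$ is unsatisfiable.

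I do not anticipate a genuine obstacle: the argument is the standard soundness of resolution carried out once per numeral. The only points that need care are bookkeeping ones — matching the substitutions named in Definition~\ref{def.rps-semantics} ($\lambda$ for clause variables, $\mu$ for clause-set variables, plus the auxiliary $\Vso$-substitution schema $\theta$) with the witnesses $(\lambda,\xi)$ required by Definition~\ref{def.css-semantics}, and checking that the extra $\Vso$-instantiation $\theta$, which appears on the resolution side but not on the clause-set side, is harmless because instantiating a clause can never turn an unsatisfiable clause set into a satisfiable one in the direction we use.
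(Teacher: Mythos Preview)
Your proposal is correct and follows essentially the same route as the paper, which dispatches the theorem in one line (``Immediate by Definition~\ref{def.rps-semantics} and by Definition~\ref{def.css-semantics}''), taking the classical soundness of resolution for granted. You simply unpack what that one line means: you extract the witnesses $\lambda,\mu,\theta$ from Definition~\ref{def.rps-semantics}, invoke soundness of ordinary resolution at each numeral $\beta$ to conclude that $\Ccal_\beta$ is unsatisfiable, and then feed $\lambda$ and $\xi:=\mu$ into Definition~\ref{def.css-semantics}. Your extra paragraph handling the $\Vso$-instantiation $\theta$ is more care than the paper takes, but it is harmless and arguably clarifies a point the definitions leave slightly implicit.
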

\begin{proof}
Immediate by Definition~\ref{def.rps-semantics} and by Definition~\ref{def.css-semantics}.
\end{proof}

Let us remark here that unsatisfiability of schemata is a property which is not semi-decidable even for propositional schemata (see~\cite{Aravantinos2011}).

\section{The CERES Method for First-Order Schemata}\label{sec:schemceres}

In this section we will consider the problem of cut-elimination for proof schemata.
Note that trivially, for every $\gamma\in\nats$ we can obtain a cut-free
proof of $S(\gamma)$ by computing $\varPsi\downarrow_\gamma$, which contains
cuts, and then applying a usual cut-elimination algorithm. What we are
interested in here is rather a {\em schematic} description of all the cut-free
proofs for a parameter $n$. It is not possible to obtain such a description
by naively applying Gentzen-style cut-elimination to the
{\LKS}-proofs in $\varPsi$, since it is not clear how to handle the case
\begin{center}
  \AxiomC{$(\psi_1(a_1))$}
  \dashedLine
  \UnaryInfC{$\Gamma \seq \Delta, C$}

  \AxiomC{$(\psi_2(a_2))$}
  \dashedLine
  \UnaryInfC{$C, \Pi \seq \Lambda$}

  \RightLabel{$cut$}
  \BinaryInfC{$\Gamma, \Pi \seq \Delta, \Lambda$}
  \DisplayProof
\end{center}
as this would require ``moving the cut through a proof link''.
In this paper, we will go a different route: we will define a CERES method,
which will be based on a {\em global} analysis
of the proof schema. It will eventually yield the desired schematic
description of the sequence of cut-free proofs, as expressed by Theorem~\ref{thm:main}.
\subsection{The Characteristic Term}\label{sec:clause_term}
At the heart of the CERES method lies the {\em characteristic clause set},
which describes the cuts in a proof. 
The connection between cut-elimination and the characteristic clause set is
that any resolution refutation of the characteristic clause set 
can be used as a skeleton of a proof containing
only atomic cuts.

The characteristic
clause set can either be defined directly as in~\cite{Baaz2000},
or it can be obtained
via a transformation from a {\em characteristic term}
as in~\cite{Baaz2006a}. We
use the second approach here; the reason for this will be
explained later.

Our main aim is to extend the usual inductive definition
of the characteristic term to the case of proof links. 
This will give rise
to a notion of {\em schematic characteristic term}. 
The usual definition 
of the characteristic term
depends upon the cut-status of the formula
occurrences in a proof (i.e.~whether a given formula occurrence is a
cut-ancestor, or not). But a formula occurrence in a proof schema gives
rise to many formula occurrences in its evaluation, some of which will
be cut-ancestors, and some will not. Therefore we need some machinery to track
the cut-status of formula occurrences through proof links.
Hence we call
a set $\Omega$ of formula occurrences from the end-sequent of 
an {\LKS}-proof $\pi$ a {\em configuration for $\pi$}.

We will represent the characteristic term of a proof
link in our object language: For all proof symbols $\psi$ and
configurations $\Omega$ we assume a unique 
symbol $\cl^{\psi,\Omega}$
called {\em clause-set symbol}. 
The intended semantics of $\cl^{\psi,\Omega}(a)$ is ``the characteristic clause set of $\psi(a)$, with the configuration $\Omega$''.
\begin{df}[Characteristic term]\label{def:char_term}
Let $\pi$ be an $\LKS$-proof and $\Omega$ a configuration. In the following, by $\Gamma_\Omega, \Delta_\Omega$ and $\Gamma_C, \Delta_C$
we will denote multisets of formulas of $\Omega$- and cut-ancestors respectively.
Let $\rho$ be an inference in $\pi$.
We define the clause-set term $\Theta_\rho(\pi, \Omega)$ inductively:
  \begin{itemize}
    \item if $\rho$ is an axiom of the form $\Gamma_\Omega, \Gamma_C, \Gamma \seq \Delta_\Omega, \Delta_C, \Delta$, then $\Theta_\rho(\pi, \Omega) = [\Gamma_\Omega, \Gamma_C \seq \Delta_\Omega, \Delta_C]$
    \item if $\rho$ is a proof link of the form
 \AxiomC{$(\psi(a, x_1, \ldots , x_\alpha))$}
 \dashedLine
 \UnaryInfC{$\Gamma_\Omega,\Gamma_C,\Gamma\seq\Delta_\Omega,\Delta_C,\Delta$}
 \DisplayProof
then define $\Omega'$ as the set of formula occurrences from $\Gamma_\Omega, \Gamma_C \seq \Delta_\Omega, \Delta_C$ and $\Theta_\rho(\pi,\Omega)= \cl^{\psi,\Omega'}(a,x_1, \ldots , x_\alpha)$
    \item if $\rho$ is a unary rule with immediate predecessor $\rho'$, then $\Theta_\rho(\pi,\Omega) = \Theta_{\rho'}(\pi,\Omega).$
    \item if $\rho$ is a binary rule with immediate predecessors $\rho_1, \rho_2$, then
      \begin{itemize}
       \item if the auxiliary formulas of $\rho$ are $\Omega$- or cut-ancestors, then $\Theta_\rho(\pi,\Omega) = \Theta_{\rho_1}(\pi,\Omega) \oplus \Theta_{\rho_2}(\pi,\Omega)$,
       \item otherwise $\Theta_\rho(\pi,\Omega) = \Theta_{\rho_1}(\pi,\Omega) \otimes \Theta_{\rho_2}(\pi,\Omega).$
      \end{itemize}
  \end{itemize}
Finally, define $\Theta(\pi, \Omega) = \Theta_{\rho_0}(\pi,\Omega)$, where $\rho_0$ is the last inference of $\pi$,
and $\Theta(\pi)=\Theta(\pi,\emptyset)$. $\Theta(\pi)$ is called the characteristic term of $\pi$.
\end{df}

\begin{example}~\label{running_example_ct}
Let us consider the proof schema $\Psi'$ of the sequent $(\forall x) (P(x) \impl P(f(x))) \seq (P(\hat{f}(n,c)) \impl P(g(n,c))) \impl (P(c) \impl P(g(n,c)))$, which is a regularized version of the proof schema $\Psi$ defined in Example~\ref{running_example} (the first-order variables $x_0, x_{k+1}$ are replaced by $x(0), x(k+1)$ respectively). We have two relevant configurations: $\emptyset$ for $\varphi$ and $\Omega = \{ \seq (\forall x) (P(x) \impl P(\hat{f}(n,x)))\}$ for $\psi$. The characteristic terms of $\Psi$ for these configurations are:
\begin{center}
$\begin{array}{lcl}
\Theta(\pi_1, \emptyset) & = & [P(\hat{f}(0,x(0))) \seq P(\hat{f}(0,x(0)))] \oplus ([\seq P(c)] \oplus ([P(\hat{f}(0,c)) \seq] \otimes [\seq] )) \\
\Theta(\nu_1(k), \emptyset) & = & \cl^{\psi, \Omega}(k+1) \oplus ([\seq P(c)] \oplus ([P(\hat{f}(k+1,c)) \seq] \otimes [\seq] ))\\
\Theta(\pi_2, \Omega) & = & [P(\hat{f}(0,x(0))) \seq P(\hat{f}(0,x(0)))] \\
\Theta(\nu_2(k), \Omega) & = & \cl^{\psi, \Omega}(k) \oplus ([P(x(k+1)) \seq P(x(k+1))] \oplus \\ 
  & & \hskip5em ( [P(\hat{f}(k,x(k+1))) \seq] \otimes [\seq P(\hat{f}(k+1,x(k+1)))] ))
 \end{array}$
\end{center}
\end{example}

We say that a clause-set term is {\em normal} if it does not contain 
clause-set symbols and defined function and predicate symbols.
Now we define a notion of characteristic term schema:
\begin{df}[Characteristic term schema]~\label{def:ct_eval}
%
We define the rewrite rules for clause-set symbols for all proof symbols $\psi_\beta$
and configurations $\Omega$:
\[
\cl^{\psi_\beta,\Omega}(0,x_1,\ldots,x_\alpha) \rightarrow \Theta(\pi_\beta,\Omega),\qquad
\cl^{\psi_\beta,\Omega}(k+1,x_1,\ldots,x_\alpha) \rightarrow \Theta(\nu_\beta(k),\Omega),
\]
for all $1\leq\beta\leq\alpha$. Next, let $\gamma\in\nats$
and let $\cl^{\psi_\beta,\Omega}\!\downarrow_\gamma$ be
a normal form of $\cl^{\psi_\beta,\Omega}(\gamma, x_1,\ldots,x_\alpha)$
under the rewrite system just given extended by rewrite rules for defined function and predicate symbols.
Then define $\Theta(\psi_\beta,\Omega)=\cl^{\psi_\beta,\Omega}$ 
and $\Theta(\Psi, \Omega)=\Theta(\psi_1,\Omega)$
and finally the {\em schematic characteristic term}
$\Theta(\Psi)=\Theta(\Psi,\emptyset)$.
\end{df}

We say that a clause-set symbol $\cl^{\psi,\Omega}$ depends on a clause-set symbol $\cl^{\varphi,\Omega'}$, if a term $\Theta(\psi,\Omega)$ contains $\cl^{\varphi,\Omega'}$. We assume that the dependency relation is transitive and reflexive.

The following proposition shows that the definition of the characteristic term schema satisfies the requirement of Definition~\ref{def.css}.

\begin{proposition}
Let $\Psi$ be a proof schema and $\Theta(\Psi)$ be a characteristic term schema of $\Psi$. Then $\Theta(\Psi)$ is strongly normalizing. 
\end{proposition}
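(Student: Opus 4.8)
The claim is that the characteristic term schema $\Theta(\Psi)$, viewed as a clause-set schema in the sense of Definition~\ref{def.css}, is strongly normalizing: for every $\gamma\in\nats$ and every substitution instance, the rewriting of $\cl^{\psi_1,\emptyset}(\gamma,x_1,\ldots,x_\alpha)$ under the rules of Definition~\ref{def:ct_eval} terminates. The plan is to exploit the layered structure already present in the proof schema $\Psi=\langle(\pi_1,\nu_1(k)),\ldots,(\pi_\alpha,\nu_\alpha(k))\rangle$ together with the well-foundedness of the arithmetic argument. First I would observe that there are only finitely many clause-set symbols involved: the proof symbols $\psi_1,\ldots,\psi_\alpha$ are finitely many, and by inspection of the definitions only finitely many configurations $\Omega$ can arise as the set $\Omega'$ in the proof-link clause of Definition~\ref{def:char_term} (each such $\Omega'$ is a subset of the occurrences in the end-sequent of some $\psi_\beta$, of which there are finitely many). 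Hence the rewrite system in Definition~\ref{def:ct_eval} is finite.

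Next I would set up a well-founded measure on terms $\cl^{\psi_\beta,\Omega}(a,\bar x)$. The natural candidate is the lexicographic combination of (i) the value $\gamma$ of the arithmetic argument $a$ after evaluation of defined function symbols — note $a$ contains no arithmetic variables once we consider a ground instance, and by the Proposition on termination of rewriting of terms it normalizes to a numeral — and (ii) a "layer index" extracted from the dependency relation introduced just after Definition~\ref{def:ct_eval}. The key structural fact is that $\Theta(\psi_\beta,\Omega)$ contains $\cl^{\psi_\gamma,\Omega'}$ only when either $\beta<\gamma$ (because $\nu_\beta(k)$ and $\pi_\beta$ may contain $k$-proof links only to $\psi_\gamma$ with $\beta<\gamma$, by Definition~\ref{def.proofschema}), or $\gamma=\beta$ and the arithmetic argument strictly decreases (the recursive call in $\nu_\beta(k)$ is $\cl^{\psi_\beta,\Omega}(k)$, fed by the rule for argument $k+1$). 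Both cases strictly decrease the lexicographic measure: the $\psi_\beta$-to-$\psi_\gamma$ with $\beta<\gamma$ decreases the layer index regardless of the arithmetic argument (since the layered structure is acyclic, one can stratify the symbols so that "calls" go strictly downward in layer), and the self-recursion decreases the arithmetic component. Finally, unfolding a redex $\cl^{\psi_\beta,\Omega}(\gamma,\bar x)$ replaces it by $\Theta(\pi_\beta,\Omega)$ or $\Theta(\nu_\beta(k),\Omega)$, which is a fixed finite clause-set term; the only clause-set symbols it introduces are the ones just analyzed, and the remaining work — rewriting the defined function and predicate symbols occurring in it, and normalizing the finitely many embedded clause schemata $c(a,\ldots)$ — terminates by the termination result for terms/formulas (the Proposition following Example~\ref{ex:term}) and by the strong-normalization requirement built into Definition~\ref{def.clauseschema}.

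Assembling these observations, I would argue by a double induction: an outer induction on the arithmetic value $\gamma$, and within each $\gamma$ an induction on the (finite, acyclically stratified) number of clause-set symbols reachable via the dependency relation. At $\gamma=0$ the rules for $\cl^{\psi_\beta,\Omega}(0,\bar x)$ rewrite to $\Theta(\pi_\beta,\Omega)$, and since $\pi_\beta$ contains no proof links by Definition~\ref{def.proofschema}, the resulting clause-set term contains clause-set symbols only for $\psi_\gamma$ with $\gamma>\beta$ — strictly higher in the stratification — so the inner induction closes it. For $\gamma=\delta+1$, the rule rewrites $\cl^{\psi_\beta,\Omega}(\delta+1,\bar x)$ to $\Theta(\nu_\beta(k),\Omega)$ with $k$ instantiated to $\delta$; the clause-set symbols introduced are either $\cl^{\psi_\beta,\Omega}(\delta,\cdot)$, handled by the outer induction hypothesis, or $\cl^{\psi_\gamma,\Omega'}(\cdot)$ with $\gamma>\beta$, handled by the inner hypothesis. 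In both cases the remaining reductions internal to the clause-set term terminate as noted. Therefore every reduction sequence from $\cl^{\psi_1,\emptyset}(\gamma,\bar x)$ is finite.

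**Main obstacle.** The delicate point is justifying that only finitely many configurations $\Omega$ occur, and that the dependency relation among the pairs $(\psi_\beta,\Omega)$ can indeed be acyclically stratified so that the two "decrease" cases above are exhaustive and genuinely well-founded. One must check that when a proof link to $\psi_\beta$ appears inside $\nu_\beta(k)$ itself, the configuration $\Omega'$ computed at that link is \emph{the same} $\Omega$ (so that the self-recursion is on the identical clause-set symbol with smaller arithmetic argument, rather than spawning an ever-growing family of configurations) — this follows from the identification of formula occurrences in end-sequents stipulated in Definition~\ref{def.proofschema}, but it deserves an explicit argument. Once the finiteness and acyclicity of the symbol dependency graph is pinned down, the lexicographic termination argument is routine.
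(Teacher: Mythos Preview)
Your proposal identifies the right two ingredients --- the acyclic stratification of proof symbols coming from Definition~\ref{def.proofschema}, and the strict decrease of the arithmetic argument on self-recursive links --- but the lexicographic order you set up is backwards, and this makes the argument as written fail.

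You take the measure to be (arithmetic value, layer index) with the arithmetic value dominant, and correspondingly run an outer induction on $\gamma$ and an inner induction on the stratification. But a $k$-proof link in $\nu_\beta(k)$ (or in $\pi_\beta$) to some $\psi_{\gamma'}$ with $\gamma'>\beta$ has arithmetic argument an arbitrary term with variables in $\{k\}$; after instantiating $k$ this can evaluate to any numeral, in particular one larger than the current value. So in your base case $\gamma=0$, the term $\Theta(\pi_\beta,\Omega)$ may contain $\cl^{\psi_{\gamma'},\Omega'}(m)$ with $m>0$, and your inner hypothesis --- which is pinned to the outer value $\gamma=0$ --- says nothing about it. The same problem recurs in the successor case. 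The sentence ``the $\psi_\beta$-to-$\psi_{\gamma'}$ with $\beta<\gamma'$ decreases the layer index regardless of the arithmetic argument'' is only a lexicographic decrease if the layer index is the \emph{dominant} component, contrary to how you set things up.

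The paper's argument is essentially the swapped order: first observe that for distinct proof symbols the dependency is acyclic (this is the outer component), and then, for any cycle among symbols $\cl^{\psi_\beta,\Omega},\cl^{\psi_\beta,\Omega'},\ldots$ with the same proof symbol but possibly different configurations, the parameter strictly decreases at each step. Once you invert your induction (outer on the proof-symbol index, inner on the arithmetic value), your argument goes through. Note also that your ``main obstacle'' --- that the configuration $\Omega'$ at a self-link must coincide with $\Omega$ --- is neither needed nor true in general: cut-ancestors in $\nu_\beta(k)$ above the link get absorbed into $\Omega'$. The paper simply allows $\Omega\neq\Omega'$ and relies on the parameter decrease, which suffices since there are only finitely many configurations (a point you correctly raise and which the paper leaves implicit).
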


\begin{proof}
It is clear that the rewrite rules for defined function and predicate symbols are strongly normalizing, since they are primitive recursive. Also, the rewrite rules of the clause-set symbols for which the dependency relation is acyclic, are strongly normalizing.

Now assume $\cl^{\psi_i,\Omega}$ depends on $\cl^{\psi_j,\Omega'}$ for some $i \not = j$. This means that there is a proof link in $\nu_i(k)$ to $\psi_j$ explicitly or implicitly (i.e in $\nu_i(k)$ there is a proof link to $\psi_{i_1}$, in $\nu_{i_1}(k)$ there is a proof link to $\psi_{i_2}$ and so on. Finally, in $\nu_{i_l}(k)$ there is a proof link to $\psi_j$). In both cases, $\cl^{\psi_j,\Omega'}$ cannot depend on $\cl^{\psi_i,\Omega}$ by the definition of proof schemata. So assume for some $\psi_\beta \in \Psi$, $\cl^{\psi_\beta,\Omega}$ depends on $\cl^{\psi_\beta,\Omega'}$ and vice versa. Then the rewrite rules of $\cl^{\psi_\beta,\Omega}$ and $\cl^{\psi_\beta,\Omega'}$ are still strongly normalizing, since the parameter is strictly decreasing by the definition of proof schemata. \qed
\end{proof}

\begin{example}~\label{running_example_ct_schema}
 Let's consider the proof schema $\Psi'$ and clause-set terms defined in Example~\ref{running_example_ct}. Then the characteristic term schema of $\Psi'$ is $(\cl^{\varphi, \emptyset}, \cl^{\psi, \Omega})$ with the rewrite system:
\begin{center}
$\begin{array}{lcl}
\cl^{\varphi, \emptyset}(0) & \to & [P(\hat{f}(0,x(0))) \seq P(\hat{f}(0,x(0)))] \oplus ([\seq P(c)] \oplus ([P(\hat{f}(0,c)) \seq] \otimes [\seq] )) \\
\cl^{\varphi, \emptyset}(k+1) & \to & \cl^{\psi, \Omega}(k+1) \oplus ([\seq P(c)] \oplus ([P(\hat{f}(k+1,c)) \seq] \otimes [\seq] ))\\
\cl^{\psi, \Omega}(0) & \to & [P(\hat{f}(0,x(0))) \seq P(\hat{f}(0,x(0)))] \\
\cl^{\psi, \Omega}(k+1) & \to & \cl^{\psi, \Omega}(k) \oplus ([P(x(k+1)) \seq P(x(k+1))] \oplus \\ 
  & & \hskip5em ( [P(\hat{f}(k,x(k+1))) \seq] \otimes [\seq P(\hat{f}(k+1,x(k+1)))] ))
 \end{array}$
\end{center}
\end{example}

Now we can explain why we chose to define the characteristic clause
set via the characteristic term: The clause-set term
is closed under the rewrite rules we have given for the clause-set symbols,
while the notion of clause set is not (a clause will in general
become a formula when subjected to the rewrite rules).
%
%
Now, we prove that the notion of characteristic term
is well-defined.
\begin{proposition}
Let $\gamma\in\nats$ and $\Omega$ be a configuration,
then $\Theta(\psi_\beta,\Omega)\downarrow_\gamma$ is a normal clause-set term
for all $1\leq\beta\leq\alpha$.
Hence $\Theta(\Psi)\downarrow_\gamma$ is a normal clause-set term.
\end{proposition}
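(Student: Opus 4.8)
The plan is to prove the statement by induction on $\gamma$, simultaneously for all proof symbols $\psi_\beta$ and all configurations $\Omega$. The key fact driving the argument is that the rewrite system defining $\cl^{\psi_\beta,\Omega}$ is strongly normalizing (this was established in the previous proposition), so $\Theta(\psi_\beta,\Omega)\!\downarrow_\gamma = \cl^{\psi_\beta,\Omega}\!\downarrow_\gamma$ exists and is unique; we only have to check that the result is \emph{normal}, i.e.~that it contains neither clause-set symbols $\cl^{\varphi,\Omega'}$ nor defined function or predicate symbols. Since the rewrite system always reduces $\cl^{\psi_\beta,\Omega}(\gamma,\ldots)$ for a \emph{numeral} $\gamma$, the dependency structure among clause-set symbols, together with the fact that the arithmetic parameter strictly decreases along each recursive occurrence, gives a well-founded measure on which to base the induction.

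First I would set up the induction. For $\gamma = 0$: by the rewrite rule, $\cl^{\psi_\beta,\Omega}(0,x_1,\ldots,x_\alpha)$ rewrites to $\Theta(\pi_\beta,\Omega)$. Inspecting Definition~\ref{def:char_term}, the clause-set term $\Theta(\pi_\beta,\Omega)$ is built from axiom leaves (each contributing a normal clause-set term $[\cdots]$), proof links (each contributing a clause-set symbol $\cl^{\psi_\iota,\Omega'}(a,\ldots)$), and the connectives $\oplus,\otimes$. For $\gamma=0$, note $\pi_\beta$ by Definition~\ref{def.proofschema} contains \emph{no} proof links, so $\Theta(\pi_\beta,\Omega)$ already contains no clause-set symbols; after applying the rewrite rules for the defined function and predicate symbols (which terminate by the proposition in Section~\ref{sec:schemlang}) we obtain a normal clause-set term. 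For the step $\gamma = \delta+1$: $\cl^{\psi_\beta,\Omega}(\delta+1,\ldots)$ rewrites to $\Theta(\nu_\beta(\delta),\Omega)$. This term may contain clause-set symbols of two kinds: (i) $\cl^{\psi_\beta,\Omega'}(a,\ldots)$ where $a$ evaluates (after normalizing defined symbols) to the numeral $\delta$ — these come from $k$-proof links to $\psi_\beta$ itself, so by the \emph{inner} induction hypothesis on $\gamma=\delta$ they normalize to normal clause-set terms; and (ii) $\cl^{\psi_\iota,\Omega'}(a,\ldots)$ with $\iota > \beta$ in the ordering on proof symbols from Definition~\ref{def.proofschema}, for which I would use an \emph{outer} induction on the index $\iota$ (descending from $\alpha$) — note these can carry an arbitrary numeral as their parameter, but the induction on $\iota$ handles them regardless of that numeral, since at level $\alpha$ no further proof links occur and at level $\iota$ only links to strictly larger indices and to $\psi_\iota$ itself with a decreased parameter occur. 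In all cases the recursive instances normalize to normal clause-set terms, and since $\oplus,\otimes,[\cdot]$ preserve normality and the defined-symbol rules eventually terminate, $\Theta(\nu_\beta(\delta),\Omega)\!\downarrow_\gamma$ is normal.

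The cleanest way to organize this is a single well-founded induction on the lexicographic pair $(\gamma, \text{position of }\psi_\beta\text{ in the ordering})$, mirroring exactly the argument already used to prove strong normalization of $\Theta(\Psi)$ in the preceding proposition; indeed this statement is essentially the ``normal form'' refinement of that one. The last line, ``Hence $\Theta(\Psi)\!\downarrow_\gamma$ is a normal clause-set term,'' then follows immediately from $\Theta(\Psi) = \Theta(\psi_1,\emptyset)$ and taking $\beta=1$, $\Omega=\emptyset$. The main obstacle — and the only genuinely delicate point — is bookkeeping the interaction between the two recursion dimensions: one must be careful that a proof link to $\psi_\iota$ occurring inside $\nu_\beta(\delta)$ with $\iota > \beta$ may appear with \emph{any} arithmetic argument, not just $\delta$ or $\delta+1$, so the induction on the proof-symbol index must be genuinely independent of the induction on $\gamma$ (one cannot fold both into a single decreasing numeral). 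Once this is correctly stratified — outer induction on the proof-symbol index, inner induction on $\gamma$, and within a fixed index the self-referential links only ever decrease $\gamma$ — the rest is a routine structural check over the clause-set term constructors.
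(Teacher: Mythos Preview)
Your approach is essentially the same as the paper's, which simply says to proceed analogously to the soundness proof for proof schemata (Proposition~\ref{prop:proof_schemata_sound}): an induction on $\gamma$ for a single pair, extended to the full tuple by the acyclic dependency among proof symbols. Your write-up is considerably more explicit about the nested induction structure, and your final organization (outer induction on the proof-symbol index descending from $\alpha$, inner induction on $\gamma$) is exactly right.

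One small slip to fix: in your $\gamma=0$ case you assert that $\pi_\beta$ contains no proof links by Definition~\ref{def.proofschema}. That is only true for a single-pair schema (or for $\beta=\alpha$). In a general proof schema the base proof $\pi_\beta$ \emph{may} contain $k$-proof links to $\psi_\iota$ with $\iota>\beta$; this is precisely why the outer induction on the index is needed even in the base case $\gamma=0$. Once you place the induction on the index outermost (as you eventually do), the base case $\gamma=0$ is handled by the outer hypothesis rather than by a vanishing of proof links, and the argument goes through.
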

\begin{proof}
We proceed analogously to the proof of Proposition~\ref{prop:proof_schemata_sound}. \qed
\end{proof}
Next, we show that evaluation and extraction of characteristic terms commute.
We will later use this property to 
derive results on schematic characteristic clause sets from
standard results on (non-schematic) CERES.
\begin{proposition}\label{prop:main}
Let $\Omega$ be a configuration and $\gamma\in\nats$. Then $\Theta(\Psi\downarrow_\gamma, \Omega) = \Theta(\Psi, \Omega)\downarrow_\gamma$.
\end{proposition}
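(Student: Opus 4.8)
The plan is to prove Proposition~\ref{prop:main} by induction on $\gamma$, following the same pattern as Proposition~\ref{prop:proof_schemata_sound}, and first treating the case of a proof schema consisting of a single pair $\Psi = \langle (\pi, \nu(k)) \rangle$ before lifting the argument to arbitrary proof schemata. The key observation that makes the induction go through is that the rewrite rules for the clause-set symbol $\cl^{\psi,\Omega}$ (from Definition~\ref{def:ct_eval}) mirror exactly the rewrite rules for the proof link $\psi$ (from the Evaluation of proof schemata), so that reducing a proof link and then extracting the characteristic term yields the same result as extracting the characteristic term (which introduces a clause-set symbol at the proof link) and then reducing that clause-set symbol.

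First I would establish a local compatibility lemma: if $\rho$ is an inference in an {\LKS}-proof $\pi$ that is \emph{not} a proof link, then the clause on Definition~\ref{def:char_term} associated to $\rho$ is built from those of its premises by the same operation ($\oplus$, $\otimes$, identity, or $[\cdot]$ at an axiom) regardless of whether defined symbols have been normalized, and the $\Omega$/cut-ancestor status of a formula occurrence is preserved under normalization of defined function and predicate symbols (normalization does not change the ancestor relation, nor which formulas are cut-ancestors, since it acts only inside formulas, not on the proof structure). This reduces the whole statement to checking the behaviour at proof links and axioms. At an axiom, $\Theta(\pi\downarrow_\gamma,\Omega)$ takes the subsequent of $\Omega$- and cut-ancestor atoms of the \emph{normalized} axiom, which is the normal form of the subsequent taken in the unnormalized axiom, hence equals $[\Gamma_\Omega,\Gamma_C \seq \Delta_\Omega,\Delta_C]\downarrow_\gamma$; this matches $\Theta(\pi,\Omega)\downarrow_\gamma$ at that leaf.

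For the base case $\gamma = 0$: $\Psi\downarrow_0$ is a normal form of $\pi$ under the combined rewrite system (proof-link rules plus defined-symbol rules), and since $\pi$ contains no proof links, $\Psi\downarrow_0 = \pi\downarrow$ differs from $\pi$ only by normalizing defined symbols; by the compatibility lemma $\Theta(\pi\downarrow, \Omega) = \Theta(\pi,\Omega)\downarrow_0$, where the right-hand side is the normal form of $\cl^{\psi,\Omega}(0, x_1, \ldots, x_\alpha) \to \Theta(\pi,\Omega)$ followed by defined-symbol normalization. For the induction step $\gamma = \delta+1$: $\Psi\downarrow_{\delta+1}$ is obtained by rewriting the top-level proof link $\psi(s(\delta))$ to $\nu(\delta)$, then recursively rewriting any proof links $\psi(\delta, \ldots)$ inside $\nu(\delta)$ to $\Psi\downarrow_\delta$, then normalizing defined symbols. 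On the other side, $\Theta(\Psi,\Omega)\downarrow_{\delta+1}$ rewrites $\cl^{\psi,\Omega}(s(\delta))$ to $\Theta(\nu(\delta),\Omega)$; by the compatibility lemma $\Theta(\nu(\delta),\Omega)$ agrees with the characteristic term of $\nu(\delta)$ computed leaf-by-leaf, where each proof link $\psi(\delta,\ldots)$ in $\nu(\delta)$ contributes a clause-set symbol $\cl^{\psi,\Omega'}(\delta,\ldots)$ whose normal form is, by induction hypothesis, $\Theta(\Psi\downarrow_\delta, \Omega')$; substituting these normal forms back in and normalizing defined symbols yields exactly $\Theta(\Psi\downarrow_{\delta+1},\Omega)$. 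The general case of a proof schema with $\alpha$ pairs follows by the same argument carried out for each $\psi_\beta$, using an inner induction on the index $\beta$ (from $\alpha$ downwards, exploiting the acyclicity of the dependency relation established in the strong-normalization proposition) so that all proof links to $\psi_\gamma$ with $\gamma > \beta$ have already been handled.

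The main obstacle I anticipate is bookkeeping at proof links: I must verify carefully that the configuration $\Omega'$ computed at a proof link in Definition~\ref{def:char_term} (the set of formula occurrences in $\Gamma_\Omega, \Gamma_C \seq \Delta_\Omega, \Delta_C$) is exactly the configuration that arises when one tracks $\Omega$- and cut-ancestors down into the sub-proof $\nu(k)$ (or $\pi$) after the proof link has been replaced by that sub-proof — i.e.\ that the identification of formula occurrences in the end-sequents of $\pi$ and $\nu(k)$ assumed in Definition~\ref{def.proofschema} is compatible with the ancestor relation, so that "the formulas of $\Omega'$" in the evaluated proof are precisely the ancestors one would track through the link. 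Once this identification is pinned down, the rest is a routine structural induction, and the parallel with the proof of Proposition~\ref{prop:proof_schemata_sound} is exact enough that I would simply say "we proceed analogously" for the uncontentious parts.
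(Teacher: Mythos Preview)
Your proposal is correct and follows essentially the same approach as the paper: outer induction on $\gamma$, handling the single-pair case first and then the general schema via an inner induction on the proof-symbol index (the paper phrases this as induction on the number $\alpha$ of proof symbols, which amounts to the same thing). Your treatment is in fact more careful than the paper's --- the local compatibility lemma and the discussion of how the configuration $\Omega'$ at a proof link is transported through evaluation are points the paper silently assumes, and your observation that the base case really requires $\Theta(\pi_1\!\downarrow,\Omega) = \Theta(\pi_1,\Omega)\!\downarrow$ (rather than literal equality with $\Theta(\pi_1,\Omega)$) fills a small gap in the paper's own argument.
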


\begin{proof}
We proceed by induction on $\gamma$. If $\gamma = 0$, then $\Theta(\Psi\downarrow_0, \Omega) = \Theta(\pi_1, \Omega)$ and $\Theta(\Psi, \Omega)\downarrow_0 = \Theta(\pi_1, \Omega)$. 

IH1: assume $\gamma > 0$ and for all $\beta < \gamma$, $\Theta(\Psi\downarrow_\beta, \Omega) = \Theta(\Psi, \Omega)\downarrow_\beta$. We proceed by induction on the number $\alpha$ of proof symbols in $\Psi$.

Let $\alpha = 1$. By the definition of characteristic term, constructions of $\Theta(\Psi\downarrow_\gamma, \Omega)$ and $\Theta(\Psi, \Omega)\downarrow_\gamma$ differ only on proof links, i.e. if $(\psi_1(k,x_1,\ldots,x_l))$ is a proof link in $\nu_1(k)$, then by the definition of evaluation of proof schemata, $\Theta(\psi_1\downarrow_\gamma, \Omega)$ contains the term $\Theta(\psi_1\downarrow_\beta, \Omega')$ and by the definition of evaluation of characteristic term schemata, $\Theta(\Psi, \Omega)\downarrow_\gamma$ contains the term $\Theta(\Psi, \Omega')\downarrow_\beta$. Then by the assumption $\Theta(\psi_1\!\downarrow_\beta, \Omega') = \Theta(\Psi, \Omega')\!\downarrow_\beta$ and we conclude that $\Theta(\psi_1\!\downarrow_\gamma, \Omega) = \Theta(\Psi, \Omega)\!\downarrow_\gamma$.

Now, assume $\alpha > 1$ and the proposition holds for all proof schemata with proof symbols less than $\alpha$ (IH2). Again, for proof links in $\nu_1(k)$ of the form $(\psi_1(k, x_1,\ldots,x_l))$ the argument is the same as in the previous case. Let $(\psi_\iota(a, x_1, \\ \ldots, x_l))$, $1< \iota \leq \alpha$, be a proof link in $\nu_1(k)$. Then, again, by the definition of evaluation of proof schemata, $\Theta(\psi_1\!\downarrow_\gamma, \Omega)$ contains the term $\Theta(\psi_\iota\!\downarrow_\lambda, \Omega')$ and by the definition of evaluation of characteristic term schemata, $\Theta(\Psi, \Omega)\!\downarrow_\gamma$ contains the term $\Theta(\varPhi, \Omega')\!\downarrow_\lambda$, where $\varPhi = \left\langle (\pi_\iota, \nu_\iota(k)), \ldots, (\pi_\alpha, \nu_\alpha(k)) \right\rangle$. Clearly, $\varPhi$ contains less than $\alpha$ proof symbols, then by IH2, $\Theta(\psi_\iota\!\downarrow_\lambda, \Omega') = \Theta(\varPhi, \Omega')\!\downarrow_\lambda$ and we conclude that $\Theta(\psi_1\!\downarrow_\gamma, \Omega) = \Theta(\Psi, \Omega)\!\downarrow_\gamma$. \qed
\end{proof}
From the characteristic term we finally define the notion of
characteristic clause set.
%
%
For an \LKS-proof $\pi$ and configuration $\Omega$, $\CL(\pi, \Omega) = |\Theta(\pi,\Omega)|$.
We define the {\em standard characteristic clause set} $\CL(\pi)= \CL(\pi,\emptyset)$
and the {\em schematic characteristic clause set}
$\CL(\Psi,\Omega)=|\Theta(\Psi,\Omega)|$
and
$\CL(\Psi)=\CL(\Psi,\emptyset)$.
%
\begin{example}~\label{running_example_cs}
Let's consider the characteristic term schema defined in Example~\ref{running_example_ct_schema}. Then the sequence of $\CL(\Psi)\!\downarrow_0, \CL(\Psi)\!\downarrow_1, \CL(\Psi)\!\downarrow_2, \ldots$ is: %
$$\begin{array}{l}
\{ P(x_0) \seq P(x_0) \; ; \; \seq P(c) \; ; \; P(c) \seq \}, \\ [1ex]
\{ P(x_0) \seq P(x_0) \; ; \; P(f(x_1)) \seq P(f(x_1)) \; ; \; P(x_1) \seq P(f(x_1)) \; ; \; \seq P(c) \; ; \; P(f(c)) \seq \}, \\ [1ex]
\{ P(x_0) \seq P(x_0) \; ; \; P(f(x_1)) \seq P(f(x_1)) \; ; \; P(f(f(x_2))) \seq P(f(f(x_2))) \; ; \\
\qquad P(x_1) \seq P(f(x_1)) \; ; \; P(f(x_2)) \seq P(f(f(x_2))) \; ; \; \seq P(c) \; ; \; P(f(f(c))) \seq \}, \ldots 
\end{array}$$
After tautology deletion and subsumption the sequence of $\CL(\Psi)\!\downarrow_\gamma$ for $\gamma > 0$ boils down to $\{ P(x_1) \seq P(f(x_1)); \; \seq P(c); \; P(f^\gamma(c)) \seq \}$
\end{example}
%
%
Now we prove the main result about the characteristic clause set and lift it to the schematic case.

\begin{proposition}\label{prop:cl_unsat}
Let $\pi$ be a normal {\LKS}-proof. Then $\CL(\pi)$ is unsatisfiable.
\end{proposition}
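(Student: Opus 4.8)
The plan is to reduce the schematic statement to the known non-schematic result, Proposition~\ref{prop.uns}, which asserts that $\CL(\varphi)$ is unsatisfiable for any ordinary {\LK}-derivation $\varphi$. The bridge is the fact that a normal {\LKS}-proof $\pi$ is, by definition, free of proof links, clause-set symbols, and defined function and predicate symbols. Hence $\pi$ contains only ordinary {\LK}-rules together with the equational rule $\Ecal$. Since the equational theory $\Ecal$ consists of the (oriented) rewrite rules for defined symbols, and $\pi$ is already normal, the $\Ecal$-inferences in $\pi$ act as identities up to the congruence generated by $\Ecal$; in particular, stripping them (or treating $\Ecal$-equal atoms as identical) turns $\pi$ into a genuine {\LK}-derivation $\pi^\ast$ whose characteristic term $\Theta(\pi^\ast,\emptyset)$ coincides with $\Theta(\pi,\emptyset)$ up to the equational congruence.

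First I would check that the inductive definition of $\Theta_\rho(\pi,\Omega)$ for $\Omega=\emptyset$ (Definition~\ref{def:char_term}) literally specializes to the classical Definition~\ref{def.CHCT} on proof-link-free proofs: the axiom clause picks out the cut-ancestor atoms, unary rules pass the term through, binary rules use $\oplus$ if the auxiliary formulas are cut-ancestors and $\otimes$ otherwise, and the proof-link case simply does not arise. The only genuinely new inference is $\Ecal$, which is unary and therefore does not change the clause-set term; so $\Theta(\pi)=\Theta(\pi,\emptyset)$ is exactly the characteristic term computed by the classical construction, modulo the renaming of atoms effected by the $\Ecal$-inferences. Consequently $\CL(\pi)=|\Theta(\pi)|$ is, up to $\Ecal$-equality of atoms, the classical characteristic clause set of the underlying {\LK}-derivation.

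Then I would invoke Proposition~\ref{prop.uns}: the classical characteristic clause set of any {\LK}-derivation is unsatisfiable. Unsatisfiability is preserved under replacing atoms by $\Ecal$-equal atoms (one works in the quotient of the term algebra by the congruence generated by $\Ecal$, or equivalently adds the equational axioms of $\Ecal$ to the clause set, which are satisfied in every $\Ecal$-model), so $\CL(\pi)$ is unsatisfiable as well. More precisely, one argues contrapositively: a model of $\CL(\pi)$ respecting $\Ecal$ would yield a model of the classical characteristic clause set of $\pi^\ast$, contradicting Proposition~\ref{prop.uns}.

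The main obstacle I anticipate is the careful bookkeeping around the $\Ecal$-rule and the normality hypothesis: one must be sure that ``normal'' (no defined symbols, no clause-set symbols) genuinely forces the proof to be an ordinary {\LK}-derivation up to the $\Ecal$-congruence, and that the $\Ecal$-inferences cannot affect the cut-ancestor bookkeeping in a way that changes which atoms land in the characteristic clauses. Since $\Ecal$ is unary and the ancestor relation treats it like any other unary rule, this is routine but must be stated; everything else is an appeal to the cited non-schematic result. (In the schematic lift that follows, one would combine this proposition with Proposition~\ref{prop:main}, i.e.\ that evaluation commutes with extraction of the characteristic term, to conclude unsatisfiability of $\CL(\Psi)\!\downarrow_\gamma$ for every $\gamma$, but for the present proposition the non-schematic argument suffices.)
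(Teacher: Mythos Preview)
Your proposal is correct and follows the same approach as the paper: reduce to the non-schematic result by identifying normal {\LKS}-proofs with ordinary {\LK}-proofs and then invoke the known unsatisfiability of the characteristic clause set (the paper cites Proposition~3.2 of~\cite{Baaz2000}, which is what your Proposition~\ref{prop.uns} wraps). The paper's proof is a one-line appeal to this identification, whereas you spell out why the identification is legitimate---in particular, that in a normal proof the $\Ecal$-rule becomes vacuous (since $\Ecal$ consists solely of the rewrite rules for defined symbols, and a normal proof contains none), and that Definition~\ref{def:char_term} with $\Omega=\emptyset$ collapses to Definition~\ref{def.CHCT} once proof links are absent. One minor slip: clause-set symbols live in clause-set terms, not in {\LKS}-proofs, so they are not part of what ``normal {\LKS}-proof'' excludes; the relevant absences are proof links and defined symbols.
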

\begin{proof}
By the identification of normal {\LKS}-proofs with {\LK}-proofs,
the result follows from Proposition 3.2 in~\cite{Baaz2000}. \qed
\end{proof}
%
%
\begin{proposition}
$\CL(\Psi)\downarrow_\gamma$ is unsatisfiable for all $\gamma \in \nats$ (i.e. $\CL(\Psi)$ is unsatisfiable).
\end{proposition}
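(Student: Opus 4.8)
The plan is to reduce the schematic statement to the already-established non-schematic result, Proposition~\ref{prop:cl_unsat}, by means of the commutation property of Proposition~\ref{prop:main}. First I would fix an arbitrary $\gamma\in\nats$. By Proposition~\ref{prop:main} applied with the empty configuration, $\Theta(\Psi)\downarrow_\gamma = \Theta(\Psi\downarrow_\gamma,\emptyset)$, and hence, applying $|\cdot|$ to both sides, $\CL(\Psi)\downarrow_\gamma = |\Theta(\Psi)\downarrow_\gamma| = |\Theta(\Psi\downarrow_\gamma,\emptyset)| = \CL(\Psi\downarrow_\gamma)$. (One should check that $|\cdot|$ commutes with $\downarrow_\gamma$ here; this is immediate because $\Theta(\Psi)\downarrow_\gamma$ is a \emph{normal} clause-set term by the preceding proposition, so it contains no clause-set symbols and no defined function or predicate symbols, and the evaluation map $|\cdot|$ from Definition~\ref{def.cltermeval} is exactly what is applied after normalization in the definition of $\CL$.)

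Next I would invoke the soundness of proof-schema evaluation: by Proposition~\ref{prop:proof_schemata_sound}, $\Psi\downarrow_\gamma$ is (can be identified with) a first-order $\LKE$-proof, i.e.\ an $\LKS$-proof that has been fully normalized, so in particular it is a \emph{normal} $\LKS$-proof in the sense used in Proposition~\ref{prop:cl_unsat}. Applying Proposition~\ref{prop:cl_unsat} to $\pi := \Psi\downarrow_\gamma$ yields that $\CL(\Psi\downarrow_\gamma)$ is unsatisfiable. Combining this with the identity $\CL(\Psi)\downarrow_\gamma = \CL(\Psi\downarrow_\gamma)$ established above gives that $\CL(\Psi)\downarrow_\gamma$ is unsatisfiable. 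Since $\gamma$ was arbitrary, $\CL(\Psi)$ is unsatisfiable (as a clause-set schema, per Definition~\ref{def.css-semantics}), which is the claim.

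The only real subtlety — and the step I would expect to need the most care — is the identification of $\Psi\downarrow_\gamma$ with a genuine normal $\LKS$-proof to which Proposition~\ref{prop:cl_unsat} legitimately applies: one must be sure that the rewrite system for proof links together with the rewrites for defined function and predicate symbols terminates and produces an object with no residual proof links and no residual defined symbols, so that the characteristic term extracted from it coincides (as a \emph{normal} clause-set term) with $\Theta(\Psi)\downarrow_\gamma$. This is precisely what Proposition~\ref{prop:main} and the normal-form proposition preceding it guarantee, so the argument is really just a matter of citing those two results in the right order; no new combinatorial work is needed. Everything else is bookkeeping about the commutation of $|\cdot|$ with normalization, which is routine once normality is in hand.
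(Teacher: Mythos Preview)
Your proposal is correct and follows essentially the same route as the paper: invoke Proposition~\ref{prop:main} (with $\Omega=\emptyset$) to identify $\CL(\Psi)\!\downarrow_\gamma$ with $\CL(\Psi\!\downarrow_\gamma)$, use Proposition~\ref{prop:proof_schemata_sound} to ensure $\Psi\!\downarrow_\gamma$ is a normal {\LKS}-proof, and then apply Proposition~\ref{prop:cl_unsat}. Your additional discussion of why $|\cdot|$ commutes with normalization and why $\Psi\!\downarrow_\gamma$ is genuinely normal is more explicit than the paper's one-line chain of equalities, but the underlying argument is identical.
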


\begin{proof}
By Propositions~\ref{prop:main}~and~\ref{prop:proof_schemata_sound}
$\CL(\Psi)\downarrow_\alpha=\CL(\Psi,\emptyset)\downarrow_\alpha=
\CL(\Psi\downarrow_\alpha,\emptyset)=\CL(\Psi\downarrow_\alpha)$
which is unsatisfiable by Proposition~\ref{prop:cl_unsat}. \qed
\end{proof}
%
%
\subsection{Projections}\label{sec:projections}

The next step in the schematization of the CERES method consists in the definition
of schematic proof projections. The aim is, in analogy with the preceding section,
to construct a {\em schematic projection term} that can be evaluated to a set
of normal {\LKS}-proofs. As before, we introduce formal symbols representing sets
of proofs, and again the notion of {\LKS}-proof is not closed under the rewrite
rules for these symbols, which is the reason for introducing the notion
of projection term.

For our term notation we assume for every rule $\rho$ of {\LKS}
a corresponding {\em rule symbol} that, by abuse of notation, we also denote by $\rho$.
Given a unary rule $\rho$ and an {\LKS}-proof $\pi$, there are different ways to apply
$\rho$ to the end-sequent of $\pi$: namely, the choice of auxiliary formulas is free.
Formally, the projection terms we construct will
include this information so that evaluation is always well-defined,
but we will surpress it in the notation since the choice of auxiliary
formulas will always be clear from the context.

For every proof symbol $\psi$ and configuration $\Omega$, we assume a unique
proof symbol $\pr^{\psi,\Omega}$.
Now, a {\em projection term} is a term built inductively from sequents and 
terms $\pr^{\psi,\Omega}(a)$, for some arithmetic expression $a$, 
using unary rule symbols, unary symbols $w^{\Gamma\seq\Delta}$ for all sequents $\Gamma\seq\Delta$
and binary symbols $\oplus, \otimes_\sigma$ for
all binary rules $\sigma$. 
The symbols $\pr^{\psi,\Omega}$ are called {\em projection symbols}.
The intended interpretation of $\pr^{\psi,\Omega}(a)$ is
``the set of characteristic projections of $\psi(a)$, with the configuration $\Omega$''.
\begin{df}[Characteristic projection term]
Let $\pi$ be an $\LKS$-proof and $\Omega$ an arbitrary configuration for $\pi$. Let $\Gamma_\Omega, \Delta_\Omega$ and $\Gamma_C, \Delta_C$ be multisets of formulas corresponding to $\Omega$- and cut-ancestors respectively. We define 
a projection term
$\Xi_\rho(\pi, \Omega)$ inductively:
\begin{itemize}
 \item If $\rho$ corresponds to an initial sequent $S$, then we define $\Xi_\rho(\pi, \Omega) = S.$
 \item If $\rho$ is a proof link in $\pi$ of the form:
 \AxiomC{$(\psi(a, x_1,\ldots,x_\alpha))$}
 \dashedLine
 \UnaryInfC{$\Gamma_\Omega, \Gamma_C, \Gamma \seq \Delta_\Omega, \Delta_C, \Delta$}
 \DisplayProof
then, letting $\Omega'$ be the set of formula occurrences from $\Gamma_\Omega, \Gamma_C \seq \Delta_\Omega, \Delta_C$,
define $\Xi_\rho(\pi,\Omega)=\pr^{\psi,\Omega'}(a, x_1,\ldots,x_\alpha)$.
 \item If $\rho$ is a unary inference with immediate predecessor $\rho'$, then:
  \begin{itemize}
   \item if $\rho$ is $\Ecal$ rule or the auxiliary formula(s) of $\rho$ are $\Omega$- or cut-ancestors, then $\Xi_\rho(\pi,\Omega) = \Xi_{\rho'}(\pi,\Omega),$
   \item otherwise $\Xi_\rho(\pi,\Omega) = \rho( \Xi_{\rho'}(\pi,\Omega) ).$
  \end{itemize}
 \item If $\sigma$ is a binary inference with immediate predecessors $\rho_1$ and $\rho_2$, then:
  \begin{itemize}
   \item if the auxiliary formulas of $\sigma$ are $\Omega$- or cut-ancestors, let $\Gamma_i \seq \Delta_i$ be the ancestors of the end-sequent in the conclusion of $\rho_i$, for $i=1,2$, and define:
    $\Xi_\sigma(\pi,\Omega) = w^{\Gamma_2 \vdash \Delta_2}(\Xi_{\rho_1}(\pi,\Omega)) \oplus w^{\Gamma_1 \vdash \Delta_1}(\Xi_{\rho_2}(\pi,\Omega)),$
   \item otherwise
    $\Xi_\sigma(\pi,\Omega) = \Xi_{\rho_1}(\pi,\Omega) \otimes_{\sigma} \Xi_{\rho_2}(\pi,\Omega).$
  \end{itemize}
\end{itemize}
Define $\Xi(\pi, \Omega) = \Xi_{\rho_0}(\pi,\Omega)$, where $\rho_0$ is the last inference of $\pi$.
\end{df}
We say that a projection term is {\em normal} if it does not contain 
projection symbols.
\begin{example}~\label{running_example_pt}
Let's consider the proof schema $\Psi'$ 
and configurations defined in Example~\ref{running_example_ct}. We introduce the following abbreviations:
\begin{eqnarray*}
A & = & (\forall x)(P(x) \impl P(f(x))) \\
B(n) & = & (P(\hat{f}(n,c)) \impl P(g(n,c))) \impl (P(c) \impl P(g(n,c))) \\
B_1(n) & = & P(\hat{f}(n,c)) \impl P(g(n,c)) \\
B_2(n) & = & P(g(n,c)) 
\end{eqnarray*}
Then the projection terms of $\Psi'$ for those configurations are:
\begin{center}
$\begin{array}{lcl}
\Xi(\pi_1, \emptyset) & = & w^{\seq B(0)}(w_l(P(\hat{f}(0,x(0))) \seq P(\hat{f}(0,x(0))))) \oplus \\
& & w^{A \seq}(\impl_r(\impl_r(w^{B_1(0) \seq B_2(0)}(P(c) \seq P(c)) \oplus \\
& & \quad w^{P(c) \seq}(P(\hat{f}(0,c)) \seq P(\hat{f}(0,c)) \otimes_{\impl_l} P(g(0,c))\seq P(g(0,c)) ) ))) \\
\Xi(\nu_1(k), \emptyset) & = & w^{\seq B(k+1)}(\pr^{\psi, \Omega}(k+1)) \oplus \\
& & w^{A \seq}(\impl_r(\impl_r( w^{B_1(k+1) \seq B_2(k+1)}(P(c) \seq P(c)) \oplus \\
& & \quad w^{P(c) \seq}(P(\hat{f}(k+1,c)) \seq P(\hat{f}(k+1,c)) \otimes_{\impl_l} \\
& & \hskip10em P(g(k+1,c))\seq P(g(k+1,c)) ) ))) \\
\Xi(\pi_2, \Omega) & = & w_l(P(\hat{f}(0,x(0))) \seq P(\hat{f}(0,x(0)))) \\
\Xi(\nu_2(k), \Omega) & = & c_l(w^{A \seq}(\pr^{\psi, \Omega}(k)) \oplus w^{A \seq}( \\
& & \qquad w^{A \seq}(P(x(k+1)) \seq P(x(k+1))) \oplus \\
& & \qquad w^{\seq}(\forall_l(P(\hat{f}(k,x(k+1))) \seq P(\hat{f}(k,x(k+1))) \otimes_{\impl_l} \\
& & \hskip7em P(\hat{f}(k+1,x(k+1))) \seq P(\hat{f}(k+1,x(k+1)))) )) )
 \end{array}$
\end{center}
\end{example}
We now define the projection-set schema, which is compatible with the respective definition
for clause-set terms.
\begin{df}[Projection-set schema]
We define the rewrite rules for projection term symbols for all proof symbols $\psi_\beta$
and configurations $\Omega$:
\[
\pr^{\psi_\beta,\Omega}(0, x_1,\ldots,x_\alpha) \rightarrow \Xi(\pi_\beta,\Omega),\qquad
\pr^{\psi_\beta,\Omega}(k+1, x_1,\ldots,x_\alpha) \rightarrow \Xi(\nu_\beta(k),\Omega),
\]
for all $1\leq\beta\leq\alpha$. Next, let $\gamma\in\nats$
and let $\pr^{\psi_\beta,\Omega}\!\downarrow_\gamma$ be
a normal form of $\pr^{\psi_\beta,\Omega}(\gamma, x_1,\ldots,x_\alpha)$
under the rewrite system just given extended by rewrite rules for defined function and predicate symbols.
Then define $\Xi(\psi_\beta,\Omega)=\pr^{\psi_\beta,\Omega}$ 
and $\Xi(\Psi, \Omega)=\Xi(\psi_1,\Omega)$
and finally the {\em schematic projection term}
$\Xi(\Psi)=\Xi(\Psi,\emptyset)$.

%
\end{df}
%
%
\begin{proposition}~\label{prop:proj_main}
Let $\Omega$ be a configuration and $\gamma\in\nats$. Then $\Xi(\Psi\!\downarrow_\gamma, \Omega) = \Xi(\Psi, \Omega)\!\downarrow_\gamma$. 
\end{proposition}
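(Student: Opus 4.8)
The plan is to mirror the proof of Proposition~\ref{prop:main} exactly, since the statement is the analogous commutation result for projection terms instead of characteristic terms. First I would proceed by induction on $\gamma$. In the base case $\gamma = 0$, evaluating the proof schema gives $\Psi\!\downarrow_0 = \pi_1$ (modulo normalization of defined function and predicate symbols), and the rewrite rule for the projection symbol gives $\pr^{\psi_1,\Omega}(0) \to \Xi(\pi_1,\Omega)$; since $\pi_1$ contains no proof links, $\Xi(\pi_1,\Omega)$ contains no projection symbols, and both sides reduce to $\Xi(\pi_1,\Omega)\!\downarrow$. Hence $\Xi(\Psi\!\downarrow_0,\Omega) = \Xi(\Psi,\Omega)\!\downarrow_0$.

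For the induction step, assume $\gamma > 0$ and that $\Xi(\Psi\!\downarrow_\beta,\Omega) = \Xi(\Psi,\Omega)\!\downarrow_\beta$ for all $\beta < \gamma$ and all configurations $\Omega$ (this is IH1, and it must be stated for all $\Omega$ since proof links change the configuration). Then I would perform a secondary induction on the number $\alpha$ of proof symbols in $\Psi$. When $\alpha = 1$: the definition of $\Xi_\rho$ acts identically on all inferences in $\nu_1(k)$ except at a proof link $(\psi_1(k,x_1,\ldots,x_l))$, where $\Xi$ produces the subterm $\pr^{\psi_1,\Omega'}(k)$. On the left-hand side, evaluating the proof schema replaces this proof link by $\Psi\!\downarrow_\beta$ (for the appropriate $\beta < \gamma$ determined by the arithmetic argument after normalization), and then $\Xi(\Psi\!\downarrow_\gamma,\Omega)$ contains $\Xi(\Psi\!\downarrow_\beta,\Omega')$ as a subterm. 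On the right-hand side, evaluating the projection-set schema rewrites $\pr^{\psi_1,\Omega'}$ and $\Xi(\Psi,\Omega)\!\downarrow_\gamma$ contains $\Xi(\Psi,\Omega')\!\downarrow_\beta$. By IH1 these coincide, so the two whole terms coincide. For $\alpha > 1$, the only new situation is a proof link to some $\psi_\iota$ with $1 < \iota \leq \alpha$; here $\Xi(\Psi\!\downarrow_\gamma,\Omega)$ contains $\Xi(\psi_\iota\!\downarrow_\lambda,\Omega')$ and $\Xi(\Psi,\Omega)\!\downarrow_\gamma$ contains $\Xi(\Phi,\Omega')\!\downarrow_\lambda$, where $\Phi = \langle (\pi_\iota,\nu_\iota(k)),\ldots,(\pi_\alpha,\nu_\alpha(k)) \rangle$ has fewer than $\alpha$ proof symbols, so the inner induction hypothesis (IH2) on $\alpha$ applies.

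The one point requiring slightly more care than in Proposition~\ref{prop:main} is that the projection term construction $\Xi_\rho$ introduces weakening symbols $w^{\Gamma\seq\Delta}$ whose superscripts depend on which formula occurrences are $\Omega$- or cut-ancestors at a binary inference, and likewise the choice of auxiliary formulas for unary rule symbols. I would note that this dependence is purely local to each inference of $\nu_\beta(k)$ and is unaffected by whether the proof link subterm has been evaluated or not: the ancestor relation in $\nu_\beta(k)$, and hence the multisets $\Gamma_i \seq \Delta_i$ appearing in the weakening superscripts, are determined by the syntactic shape of $\nu_\beta(k)$ together with the configuration $\Omega$, exactly the same data used on both sides of the equation. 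Therefore the construction of $\Xi(\Psi\!\downarrow_\gamma,\Omega)$ and $\Xi(\Psi,\Omega)\!\downarrow_\gamma$ differ \emph{only} at proof-link positions, and the argument reduces to the identity of those subterms as above. I expect this bookkeeping about the stability of the weakening superscripts to be the main (minor) obstacle; the inductive skeleton itself is routine once Proposition~\ref{prop:main} is in hand. Concluding: by the two nested inductions, $\Xi(\Psi\!\downarrow_\gamma,\Omega) = \Xi(\Psi,\Omega)\!\downarrow_\gamma$ for all $\gamma$ and $\Omega$. \qed
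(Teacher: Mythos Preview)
Your proposal is correct and matches the paper's approach exactly: the paper's proof consists of the single sentence ``We proceed as in the proof of Proposition~\ref{prop:main},'' and you have spelled out precisely that argument. Your additional observation about the stability of the weakening superscripts $w^{\Gamma\seq\Delta}$ is a valid point of care that the paper leaves implicit, but it does not alter the structure of the proof.
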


\begin{proof}
We proceed as in the proof of Proposition~\ref{prop:main}. \qed
\end{proof}

We will define a map from normal projection terms to sets of normal {\LKS}-proofs. For this, we need some auxiliary notation.
The discussion
regarding the notation for the application of rules from the beginning of this section
applies here.
\begin{df}
Let $\rho$ be a unary and $\sigma$ a binary rule. Let $\varphi, \pi$ be \LKS-proofs, then $\rho(\varphi)$ is the \LKS-proof obtained from $\varphi$ by applying $\rho$, and $\sigma(\varphi,\pi)$ is the proof obtained from the proofs $\varphi$ and $\pi$ by applying $\sigma$.

Let $P, Q$ be sets of \LKS-proofs. Then $\rho(P) = \{ \rho(\pi) \mid \pi \in P \}$, $P^{\Gamma \seq \Delta} = \{ \pi^{\Gamma \seq \Delta} \mid \pi \in P\}$, where $\pi^{\Gamma \seq \Delta}$ is $\pi$ followed by weakenings adding $\Gamma \seq \Delta$,
and $P \times_\sigma Q = \{ \sigma(\varphi, \pi) \mid \varphi \in P, \pi \in Q \}$. 
\end{df}

\begin{df}~\label{def:proj_eval}
Let $\Xi$ be a normal projection term. Then we define a set of normal {\LKS}-proofs $|\Xi|$ in the following way:
\begin{itemize}
 \item $| A \seq A| = \{ A \seq A \}$,
 \item $|\rho(\Xi)| = \rho(|\Xi|)$ for unary rule symbols $\rho$,
 \item $|w^{\Gamma \seq \Delta}(\Xi)| = |\Xi|^{\Gamma \seq \Delta}$,
 \item $|\Xi_1 \oplus \Xi_2| = |\Xi_1| \cup |\Xi_2|$,
 \item $|\Xi_1 \otimes_\sigma \Xi_2| = |\Xi_1| \times_\sigma |\Xi_2|$ for binary rule symbols $\sigma$.
\end{itemize}
For normal {\LKS}-proofs $\pi$ and configurations $\Omega$ we define $\PR(\pi, \Omega) = |\Xi(\pi,\Omega)|$
and the {\em standard projection set} $\PR(\pi)=\PR(\pi,\emptyset)$. 
For $\gamma\in\nats$ we define $\PR(\Psi)\downarrow_\gamma=|\Xi(\Psi)\downarrow_\gamma|$.
\end{df}
%
%
The following result describes the relation between the standard projection set and
characteristic clause set in the normal case. It will allow us to construct, together with a resolution refutation of $\CL(\Psi)$, essentially cut-free proofs of $S(\gamma)$ for all $\gamma \in \nats$.
Finally, the result is lifted to the schematic case.
\begin{proposition}\label{prop:proj_corr}
Let $\pi$ be a normal {\LKS}-proof with end-sequent $S$, then for all clauses $C \in \CL(\pi)$, there exists a normal
{\LKS}-proof $\pi \in \PR(\pi)$ with end-sequent $S \circ C$.
\end{proposition}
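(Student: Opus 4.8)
The plan is to reduce this to the corresponding first-order statement about CERES projections, which is already available in the literature. First I would recall that the standard (non-schematic) CERES method guarantees exactly this: if $\pi$ is a first-order $\LK$-proof of $S$ and $C \in \CL(\pi)$, then there is a cut-free proof $\pi(C) \in \PR(\pi)$ of $S \circ C$ (see~\cite{Baaz2000,Baaz2006a}). Since a normal $\LKS$-proof is identified with an ordinary $\LK$-proof (over the underlying equational theory $\Ecal$, handled by the $\Ecal$-rule, which is skipped in the projection construction exactly as other non-essential unary inferences), the construction of $\Xi(\pi,\Omega)$ in the normal case coincides with the first-order characteristic projection term construction, and the map $|\cdot|$ on normal projection terms coincides with the usual set-of-projections semantics. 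Hence the statement for $\Omega = \emptyset$ is just the first-order result transported across this identification.

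More precisely, I would argue by induction on $\pi$ that $|\Xi(\pi,\Omega)|$ is exactly the set of characteristic projections of $\pi$ with respect to the configuration $\Omega$, where by the latter I mean the proofs obtained by pruning $\pi$ of all inferences operating on $\Omega$- or cut-ancestors and adding weakenings. The base case is an axiom $A \seq A$, which is a projection of itself with the appropriate residue clause. For a unary inference not on $\Omega$- or cut-ancestors (and not $\Ecal$), we apply $\rho$ to each projection, and $|\rho(\Xi_{\rho'})| = \rho(|\Xi_{\rho'}|)$ by Definition~\ref{def:proj_eval}; for an $\Ecal$-inference or one on $\Omega$- or cut-ancestors, the projection term is unchanged, matching the pruning. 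For a binary inference $\sigma$: if the auxiliary formulas are $\Omega$- or cut-ancestors, $\sigma$ is skipped and we take the disjoint union of the two projection sets, weakening each side by the other's end-sequent ancestors, which is precisely the $w^{\Gamma_2 \seq \Delta_2}(\cdot) \oplus w^{\Gamma_1 \seq \Delta_1}(\cdot)$ clause; otherwise $\sigma$ is retained and we form $\times_\sigma$, matching $\otimes_\sigma$. Throughout, the residue clauses accumulated in the end-sequents are exactly the members of $|\Theta(\pi,\Omega)| = \CL(\pi,\Omega)$, by comparing the two inductive definitions clause by clause ($[\cdot]$ for axioms, $\oplus$ for skipped binary inferences on ancestors, $\otimes$ otherwise). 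Taking $\Omega = \emptyset$ gives the claim.

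The main technical point — and the one place where care is needed — is the bookkeeping of which atoms from the axioms end up in the residue clause versus the end-sequent, and correspondingly that the weakenings introduced by the $w^{\Gamma \seq \Delta}$ symbols produce exactly a proof of $S \circ C$ and nothing more or less. This is the same bookkeeping that makes Proposition~\ref{prop.uns} and its first-order analogue work, so I expect no genuine obstacle, only the need to state the correspondence between $\Xi$-construction and $\Theta$-construction precisely enough that the two inductions can be run in parallel. Once that correspondence is in place, the proposition follows, and it will then be lifted to the schematic case (via Proposition~\ref{prop:proj_main} and Proposition~\ref{prop:main}) in the same way the unsatisfiability of $\CL(\Psi)$ was lifted from Proposition~\ref{prop:cl_unsat}.
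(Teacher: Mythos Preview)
Your proposal is correct and takes essentially the same approach as the paper: both reduce the statement to the first-order CERES projection result via the identification of normal {\LKS}-proofs with {\LK}-proofs, citing~\cite{Baaz2000}. You go further than the paper by sketching the underlying induction that the paper simply defers to the reference, but the route is the same.
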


\begin{proof}
By the identification of normal {\LKS}-proofs with {\LK}-proofs,
the result follows from Definition~\ref{def:proj_eval} and Lemma 3.1 in~\cite{Baaz2000}. \qed 
\end{proof}
%
%
\begin{proposition}\label{prop:proj_set_main}
Let $\gamma\in\nats$, then $\PR(\Psi\!\downarrow_\gamma) = \PR(\Psi)\downarrow_\gamma$.
\end{proposition}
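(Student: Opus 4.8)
For all $\gamma\in\nats$, $\PR(\Psi\!\downarrow_\gamma) = \PR(\Psi)\downarrow_\gamma$.

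The plan is to reduce this statement to Proposition~\ref{prop:proj_main} together with the fact that the evaluation map $|\cdot|$ on normal projection terms commutes with the $\downarrow_\gamma$ rewriting. By definition, $\PR(\Psi\!\downarrow_\gamma) = \PR(\Psi\!\downarrow_\gamma, \emptyset) = |\Xi(\Psi\!\downarrow_\gamma, \emptyset)|$, since $\Psi\!\downarrow_\gamma$ is by Proposition~\ref{prop:proof_schemata_sound} (essentially) a first-order {\LKS}-proof without proof links, so extracting its characteristic projection term yields a \emph{normal} projection term. On the other side, $\PR(\Psi)\downarrow_\gamma = |\Xi(\Psi)\downarrow_\gamma|$ by Definition~\ref{def:proj_eval}. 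Hence it suffices to show $\Xi(\Psi\!\downarrow_\gamma, \emptyset) = \Xi(\Psi, \emptyset)\!\downarrow_\gamma$, and this is precisely Proposition~\ref{prop:proj_main}.

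First I would spell out that $\Xi(\Psi)\downarrow_\gamma$ is indeed a normal projection term, i.e.~that it contains no projection symbols $\pr^{\psi,\Omega}$: this follows by the same strong-normalization argument used for characteristic terms (the dependency relation among the $\pr^{\psi_\beta,\Omega}$ is well-founded for the same reasons as for $\cl^{\psi_\beta,\Omega}$, and the arithmetic parameter strictly decreases on recursive calls), so $|\cdot|$ is applicable to it. Then I would invoke Proposition~\ref{prop:proj_main} with $\Omega = \emptyset$ to get the syntactic equality of the two normal projection terms, and apply $|\cdot|$ to both sides. Since $|\cdot|$ is a function on normal projection terms, equal inputs give equal outputs, yielding $\PR(\Psi\!\downarrow_\gamma) = \PR(\Psi)\downarrow_\gamma$.

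The only subtlety — and the place where one must be slightly careful rather than the place where real work is needed — is the identification between $\Psi\!\downarrow_\gamma$ as produced by the proof-schema evaluation and the {\LKS}-proof to which the characteristic projection term extraction is applied: one must check that evaluating proof links and then extracting $\Xi$ gives the same normal term as extracting the schematic $\Xi(\Psi)$ and then evaluating it, which is exactly what Proposition~\ref{prop:proj_main} asserts (proved, in turn, by the same induction on $\gamma$ and on the number of proof symbols as in Proposition~\ref{prop:main}). Beyond citing that, there is essentially nothing to do; the statement is a routine corollary packaging Proposition~\ref{prop:proj_main} through the evaluation map. I do not expect any genuine obstacle here.
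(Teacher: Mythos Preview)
Your proposal is correct and follows essentially the same approach as the paper: the paper's proof simply states that the result follows directly from Proposition~\ref{prop:proj_main}, and you have spelled out the unfolding of definitions and the application of $|\cdot|$ that makes this immediate. Your added remarks on normality of $\Xi(\Psi)\!\downarrow_\gamma$ and the identification of $\Psi\!\downarrow_\gamma$ as an {\LKS}-proof are reasonable elaborations but not required beyond what the paper's one-line proof presumes.
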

\begin{proof}
This result follows directly from Proposition~\ref{prop:proj_main}.
\qed
\end{proof}
\begin{proposition}
Let $\gamma\in\nats$, then for every clause $C\in \CL(\Psi)\!\downarrow_\gamma$
there exists a normal {\LKS}-proof $\pi\in \PR(\Psi)\!\downarrow_\gamma$
with end-sequent $C\circ S(\gamma)$.
\end{proposition}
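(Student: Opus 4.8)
The plan is to reduce this schematic statement to the non-schematic correspondence of Proposition~\ref{prop:proj_corr}, using the commutation lemmas already established — exactly in the style of the reduction used for the previous unsatisfiability proposition (which reduced to Proposition~\ref{prop:cl_unsat}).

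First I would fix $\gamma\in\nats$ and observe, via Proposition~\ref{prop:proof_schemata_sound} together with the remark that normalization affects only defined function and predicate symbols, that $\Psi\!\downarrow_\gamma$ is a first-order proof and hence — under the standing identification of first-order proofs with normal {\LKS}-proofs — a normal {\LKS}-proof; write $S(\gamma)$ for its (normalized) end-sequent, as in the statement. Next, as in the proof of the previous proposition, I would note $\CL(\Psi)\!\downarrow_\gamma=\CL(\Psi,\emptyset)\!\downarrow_\gamma=\CL(\Psi\!\downarrow_\gamma,\emptyset)=\CL(\Psi\!\downarrow_\gamma)$, where the middle equality is obtained by applying $|\cdot|$ to both sides of Proposition~\ref{prop:main} (with $\Omega=\emptyset$). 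Thus the clause $C$ of the statement belongs to $\CL(\Psi\!\downarrow_\gamma)$.

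Now Proposition~\ref{prop:proj_corr} applies to the normal {\LKS}-proof $\Psi\!\downarrow_\gamma$: for the clause $C\in\CL(\Psi\!\downarrow_\gamma)$ there is a normal {\LKS}-proof $\chi\in\PR(\Psi\!\downarrow_\gamma)$ with end-sequent $C\circ S(\gamma)$. By Proposition~\ref{prop:proj_set_main} we have $\PR(\Psi\!\downarrow_\gamma)=\PR(\Psi)\!\downarrow_\gamma$, so $\chi\in\PR(\Psi)\!\downarrow_\gamma$ and $\chi$ is the required proof. The argument is essentially bookkeeping; the only points requiring care are (i) justifying that $\Psi\!\downarrow_\gamma$ may be treated as a bona fide normal {\LKS}-proof — which is precisely what Proposition~\ref{prop:proof_schemata_sound} (plus the identification of first-order proofs with normal {\LKS}-proofs) provides — and (ii) the commutation of the evaluation $\downarrow_\gamma$ with the extraction of characteristic clause sets and of projection sets, which are Propositions~\ref{prop:main} and~\ref{prop:proj_set_main}. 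I do not expect any genuine obstacle here, since the substantive work was already invested in those commutation lemmas and in the non-schematic Proposition~\ref{prop:proj_corr}.
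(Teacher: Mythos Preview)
Your proposal is correct and follows essentially the same route as the paper: use Proposition~\ref{prop:main} to identify $\CL(\Psi)\!\downarrow_\gamma$ with $\CL(\Psi\!\downarrow_\gamma)$, use Proposition~\ref{prop:proj_set_main} to identify $\PR(\Psi)\!\downarrow_\gamma$ with $\PR(\Psi\!\downarrow_\gamma)$, and then invoke the non-schematic Proposition~\ref{prop:proj_corr} on the normal {\LKS}-proof $\Psi\!\downarrow_\gamma$. Your additional care in point~(i), explicitly invoking Proposition~\ref{prop:proof_schemata_sound} to justify that $\Psi\!\downarrow_\gamma$ is a normal {\LKS}-proof with end-sequent $S(\gamma)$, is a welcome clarification that the paper handles with a terse ``by definition''.
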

\begin{proof}
By Proposition~\ref{prop:main}, $\CL(\Psi)\!\downarrow_\gamma = \CL(\Psi\!\downarrow_\gamma)$,
and by Proposition~\ref{prop:proj_set_main}, $\PR(\Psi)\!\downarrow_\gamma$ $= \PR(\Psi\!\downarrow_\gamma)$.
Then the result follows from Proposition~\ref{prop:proj_corr}, since $\Psi\!\downarrow_\gamma$
has end-sequent $S(\gamma)$ by definition.
\qed
\end{proof}

\subsection{ACNF Schema}

To produce an Atomic Cut Normal Form, we need to transform a resolution refutation into an \LKS-proof skeleton. Then the ACNF is produced by substituting each clause at the leaf nodes of this skeleton by the corresponding projections and appending necessary contractions at the end of the proof.

\begin{df}[Transformation]
Let $\varrho$ be a ground resolution refutation. Then the transformation $TR(\varrho)$ is defined inductively:
\begin{itemize}
 \item if $\varrho = C$ for a clause $C$, then $TR(\varrho) = C$,
 \item if $\varrho = r(\varrho_1;\varrho_2;P)$, then $TR(\varrho)$ is:
\begin{prooftree}
 \AxiomC{$(TR(\varrho_1))$}
\noLine
\UnaryInfC{$\Gamma \seq \Delta, P, \ldots,P$}
\doubleLine
\RightLabel{$c \colon r*$}
\UnaryInfC{$\Gamma \seq \Delta,P$}

 \AxiomC{$(TR(\varrho_2))$}
\noLine
\UnaryInfC{$P, \ldots,P, \Pi \seq \Lambda$}
\doubleLine
\RightLabel{$c \colon l*$}
\UnaryInfC{$P, \Pi \seq \Lambda$}

\RightLabel{$cut$}
\BinaryInfC{$\Gamma, \Pi \seq \Delta, \Lambda$}
\end{prooftree}
\end{itemize}
\end{df}

\begin{example}
Let us compute the ACNF of the proof schema $\Psi$ defined in Example~\ref{running_example}. First we should give a resolution refutation schema for the characteristic clause set defined in Example~\ref{running_example_cs}. Let $R = ((\varrho, \delta), \Rcal)$, where $\Rcal$ is the following rewriting system: 

\begin{eqnarray*}
\varrho(0,x) & \ra & r(\delta(0,x); \; P(\hat{f}(0,c)) \seq; \; P(\hat{f}(0,c))), \\
\varrho(k+1,x) & \ra & r(\delta(k+1,x); \; P(\hat{f}(k+1,c)) \seq; \; P(\hat{f}(k+1,c))), \\
\delta(0,x) & \ra & \seq P(c), \\
\delta(k+1,x) & \ra & r(\delta(k,x); \; P(x(k+1)) \seq P(f(x(k+1))); \; P(\hat{f}(k,c)))
\end{eqnarray*}

Next we define a predecessor function. Let $pre \colon \omega \to \omega$ be a defined function symbol, then we define function $pre(n)$ with the rewrite rules $pre(0) \to 0$ and $pre(k+1) \to k$. Now we define the $\Vso$-substitution $\theta = \{x \la \lambda k. \hat{f}(pre(k),c)\}$; then $\varrho(n,x)\theta\!\downarrow_\gamma$ is a resolution refutation schema for all $\gamma \in \nats$. Finally, we compute $TR(\varrho(n,x)\theta\!\downarrow_0)$:
\begin{prooftree}
 \AxiomC{$\seq P(c)$}

 \AxiomC{$P(c) \seq$}
\RightLabel{$cut$}
\BinaryInfC{$\seq$}
\end{prooftree}
and ACNF of $\Psi$ for $n=0$ is (we denote the sequent $S(0) \colon (\forall x) (P(x) \impl P(f(x))) \seq (P(c) \impl P(g(0,c))) \impl (P(c) \impl P(g(0,c)))$ with $A \seq B$):
\begin{prooftree}
\scriptsize
\AxiomC{$P(c) \seq P(c)$}
\doubleLine
\RightLabel{$w \colon l,r$}
\UnaryInfC{$P(c) \impl P(g(0,c)), P(c) \seq P(c), P(g(0,c))$}
\RightLabel{$\impl \colon r$}
\UnaryInfC{$P(c) \impl P(g(0,c)) \seq P(c), P(c) \impl P(g(0,c))$}
\RightLabel{$\impl \colon r$}
\UnaryInfC{$\seq P(c), B$}
\RightLabel{$w \colon l$}
\UnaryInfC{$A \seq P(c), B$}

\AxiomC{$P(c) \seq P(c)$}
\AxiomC{$P(g(0,c)) \seq P(g(0,c))$}
\RightLabel{$\impl \colon l$}
\BinaryInfC{$P(c) \impl P(g(0,c)), P(c) \seq P(g(0,c))$}
\RightLabel{$w \colon l$}
\UnaryInfC{$P(c), P(c) \impl P(g(0,c)), P(c) \seq P(g(0,c))$}
\RightLabel{$\impl \colon r$}
\UnaryInfC{$P(c) \impl P(g(0,c)), P(c) \seq P(c) \impl P(g(0,c))$}
\RightLabel{$\impl \colon r$}
\UnaryInfC{$P(c) \seq B$}
\RightLabel{$w \colon l$}
\UnaryInfC{$P(c), A \seq B$}
 
\RightLabel{$cut$}
\BinaryInfC{$A, A \seq B, B$}
\doubleLine
\RightLabel{$c \colon l,r$}
\UnaryInfC{$(\forall x) (P(x) \impl P(f(x))) \seq (P(c) \impl P(g(0,c))) \impl (P(c) \impl P(g(0,c)))$}
\end{prooftree}
\end{example}

Finally, we can summarize the CERES method of cut-elimination for proof schemata by defining the whole CERES-procedure CERES-s on schemata (where $\Psi$ is a proof schema):
\\[1ex]
Phase 1 of CERES-s: (schematic construction)
\begin{itemize}
\item compute $\CL(\Psi)$;
\item compute $\PR(\Psi)$;
\item construct a resolution refutation schema $R = ((\varrho_1,\ldots,\varrho_\beta),\Rcal)$ of $\CL(\Psi)$, and a $V_c$-substitution $\lambda$ and a $\Vso$-substitution $\theta$ according to Definition~\ref{def.rps-semantics}.
\end{itemize}
Phase 2 of CERES-s: (evaluation, given a number $\alpha$)
\begin{itemize}
\item compute $\CL(\Psi)\!\downarrow_\alpha$;
\item compute $\PR(\Psi)\!\downarrow_\alpha$;
\item compute $\varrho_1(n,\bar{x},\bar{X})\lambda\theta\!\downarrow_\alpha$ and $T_\alpha \colon TR(\varrho_1(n,\bar{x},\bar{X})\lambda\theta\!\downarrow_\alpha)$;
\item append the corresponding projections in $\PR(\Psi)\!\downarrow_\alpha$ to $T_\alpha$ and propagate the contexts down in the proof.
\end{itemize}

\begin{theorem}\label{thm:main}
Let $\Psi$ be a proof schema with end-sequent $S(n)$. 
Then the evaluation of CERES-s
produces for all $\alpha\in\nats$ a ground {\LKS}-proof 
$\pi$ of $S(\alpha)$ with at most atomic cuts such that 
its size $|\pi|$ polynomial in $|\varrho_1(n,\bar{x},\bar{X})\lambda\theta\!\downarrow_\alpha| \cdot |\PR(\Psi)\!\downarrow_\alpha|$.
\end{theorem}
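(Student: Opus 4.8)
The plan is to fix $\alpha\in\nats$ and reduce the whole statement to the correctness and complexity analysis of ordinary (non-schematic) CERES applied to the first-order proof $\Psi\downarrow_\alpha$. First I would collect the commutation facts already available. By Proposition~\ref{prop:main} (and the routine observation that the map $|\cdot|$ commutes with the rewriting $\downarrow_\alpha$) we have $\CL(\Psi)\downarrow_\alpha=\CL(\Psi\downarrow_\alpha)$, and by Propositions~\ref{prop:proj_main} and~\ref{prop:proj_set_main} we have $\PR(\Psi)\downarrow_\alpha=\PR(\Psi\downarrow_\alpha)$. Moreover, by Definition~\ref{def.rps-semantics} together with Theorem~\ref{the.res-schema-sound}, the object $\gamma_\alpha := \varrho_1(n,\bar{x},\bar{X})\lambda\theta\downarrow_\alpha$ computed in Phase~2 is a ground resolution refutation of $\vcst^\ast$ evaluated at $\alpha$, which is exactly $\CL(\Psi)\downarrow_\alpha=\CL(\Psi\downarrow_\alpha)$. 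Since $\Psi\downarrow_\alpha$ is, after normalisation of defined function and predicate symbols, an ordinary $\LK$-proof of $S(\alpha)$ (its end-sequent being $S(\alpha)\downarrow$), the data fed into Phase~2 of $\CERESs$ coincides with the data of ordinary CERES run on $\Psi\downarrow_\alpha$.

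For correctness I would then follow the standard CERES construction. The transformation $TR$ turns $\gamma_\alpha$ into an $\LKS$-proof skeleton $T_\alpha$ whose only cuts are on the atoms resolved upon in $\gamma_\alpha$ and whose leaves are labelled by ground clauses of $\CL(\Psi\downarrow_\alpha)$. For each such leaf clause $C$, the schematic projection-correctness result (the last proposition of Section~\ref{sec:projections}, equivalently Proposition~\ref{prop:proj_corr} applied to $\Psi\downarrow_\alpha$) yields a normal, cut-free $\LKS$-proof $\pi_C\in\PR(\Psi)\downarrow_\alpha$ with end-sequent $C\circ S(\alpha)$; as $\gamma_\alpha$ is ground we take the appropriate ground instance. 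Replacing each leaf of $T_\alpha$ by the matching $\pi_C$, propagating the additional copies of $S(\alpha)$ downward through the (atomic) cuts of the skeleton, and finally appending contractions to fuse them, produces an $\LKS$-proof $\pi$ of $S(\alpha)$ in which every cut is atomic. That this glued object really is an $\LKS$-proof with the claimed end-sequent is the content of the corresponding step of ordinary CERES and follows, via the identification of normal $\LKS$-proofs with $\LK$-proofs, from \cite{Baaz2000,Baaz2006a}.

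For the size bound I would estimate three contributions. First, $|T_\alpha|=|TR(\gamma_\alpha)|$ is polynomial in $|\gamma_\alpha|$: each resolution node becomes a cut preceded by at most linearly many contractions (linear in the sizes of the two premise clauses, which are part of $|\gamma_\alpha|$), exactly as in the remark following Definition~\ref{def.treetrans}; in particular the number of leaves of $T_\alpha$, hence the number of projections inserted, is at most $|\gamma_\alpha|$. Second, each inserted projection has size at most $|\PR(\Psi)\downarrow_\alpha|$. Third, propagating the contexts $S(\alpha)$ downward leaves the number of inferences of $T_\alpha$ unchanged but enlarges each sequent occurring in it by at most one copy of $S(\alpha)$ per leaf above it, i.e.\ by a factor of at most $|\gamma_\alpha|\cdot|S(\alpha)|$, and $|S(\alpha)|\le|\PR(\Psi)\downarrow_\alpha|$ since every projection already contains an occurrence of $S(\alpha)$ in its end-sequent. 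Multiplying these bounds gives $|\pi|$ polynomial in $|\gamma_\alpha|\cdot|\PR(\Psi)\downarrow_\alpha|=|\varrho_1(n,\bar{x},\bar{X})\lambda\theta\downarrow_\alpha|\cdot|\PR(\Psi)\downarrow_\alpha|$.

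I expect the main obstacle to be the first step: making precise that Phase~2 of $\CERESs$, evaluated at $\alpha$, produces the same object as ordinary CERES run on $\Psi\downarrow_\alpha$ with refutation $\gamma_\alpha$ and projections $\PR(\Psi\downarrow_\alpha)$. This rests on the commutation propositions already established (Propositions~\ref{prop:main}, \ref{prop:proj_main}, \ref{prop:proj_set_main}), on the semantics of resolution refutation schemata (Definition~\ref{def.rps-semantics}) and their soundness (Theorem~\ref{the.res-schema-sound}), and on some care with the normalisation of defined symbols and with the $\Ecal$-inferences (which are skipped in the projection terms) so that the identification with ordinary $\LK$/$\LKE$-proofs is legitimate. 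Once this identification is in place, both the atomic-cut property and the polynomial size bound are inherited directly from the non-schematic CERES analysis.
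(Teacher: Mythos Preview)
Your proposal is correct and follows essentially the same approach as the paper: fix $\alpha$, use the commutation results (Propositions~\ref{prop:main} and~\ref{prop:proj_set_main}) and the semantics of resolution refutation schemata to reduce to the standard CERES construction on $\Psi\!\downarrow_\alpha$, then read off the atomic-cut property and the polynomial size bound from the skeleton $T_\alpha$ and the projections. Your write-up is in fact considerably more detailed than the paper's own proof, which is quite terse---in particular your three-part size analysis (skeleton, projections, context propagation) makes explicit what the paper leaves to a ``clearly''.
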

\begin{proof}
Let $\alpha \in \nats$. By Proposition~\ref{prop:proj_set_main} we obtain for any clause in $\Ccal_\alpha\colon \CL(\Psi)\!\downarrow_\alpha$ a corresponding
projection of the ground proof $\psi_\alpha$ in $\PR(\Psi)\!\downarrow_\alpha$. Let $R = ((\varrho_1,\ldots,\varrho_\beta),\Rcal)$ be a resolution refutation
schema for $\CL(\Psi)$ constructed in phase 1 of CERES-s and $T_\alpha$ the corresponding tree. Clearly the length of any projection is at most $|\PR(\Psi)\!\downarrow_\alpha|$ and $|T_\alpha|$ is polynomial in $\varrho_1(n,\bar{x},\bar{X})\lambda\theta\!\downarrow_\alpha$. Moreover, the resulting proof $\pi_\alpha$ of $S(\alpha)$ obtained in the last step of phase 2 contains at most atomic cuts.
\end{proof}

\section{Open Problems}
The current results obtained by $\CERESs$ can be considered as a first step of performing cut-elimination in inductive proofs. Currently our formalism admits just one parameter and thus is not capable of modeling nested inductions. Hence a generalization of the method to several parameters is highly desirable. While the construction of the schematic characteristic clause sets and of the schematic proof projections is fully mechanizable (and already implemented\footnote{\url{http://www.logic.at/ceres/system/gapt.html}}) a fully automated construction of schematic resolution refutations is impossible even in principle. However, for practical proof analysis of nontrivial proofs an interactive use of the schematic resolution calculus and a formal verification of the obtained proofs would be vital. The current schematic resolution method is very strong and the computation of the schematic most general unifiers is undecidable. It would be useful to also search for weaker systems admitting a higher degree of automation which are still capable of formalizing relevant problems. 
Towards an interpretation of the datastructures of CERES in practical applications, further work
needs to be done: In~\cite{HLWWP08c} it was shown that {\em Herbrand sequents} are a useful tool for the interpretation
of formal proofs by humans. In the context of the present work, a suitable notion of schematic Herbrand sequent
should be defined and it should be shown how to extract such a sequent from the data structures used by CERES
(i.e.~from a resolution refutation schema of $\CL(\Psi)$, and from $\PR(\Psi)$).
A concrete application of $\CERESs$ would be the full formalization and verification of the schematic analysis of F\"urstenberg's proof of the infinity of primes shown in~\cite{Baaz2008}. In this paper the CERES method was applied to an infinite sequence of proofs; the sequence of characteristic clause sets was found empirically  and the refutation of the infinite sequence of characteristic clause sets was performed on the mathematical meta-level. In contrast $\CERESs$ is capable of defining the schematic characteristic clause sets $\Ccal_n$ and the projections fully automatic and provides a formalism for refuting $\Ccal_n$ formally. We believe that (the current implementation of) $\CERESs$ can also serve as tool for a semi-automated development of proof schemata by mathematicians.

\bibliographystyle{plain}
\bibliography{literatur}

\begin{thebibliography}{10}

\bibitem{Aigner1999}
M.~Aigner and G.~Ziegler.
\newblock {\em Proofs from {THE} {BOOK}}.
\newblock Springer, 1999.

\bibitem{Aravantinos2011}
V.~Aravantinos, R.~Caferra, and N.~Peltier.
\newblock Decidability and undecidability results for propositional schemata.
\newblock {\em Journal of Artificial Intelligence Research}, 40:599--656, 2011.

\bibitem{Aravantinos2009}
Vincent Aravantinos, Ricardo Caferra, and Nicolas Peltier.
\newblock A schemata calculus for propositional logic.
\newblock In {\em Automated Reasoning with Analytic Tableaux and Related
  Methods}, volume 5607 of {\em Lecture Notes in Computer Science}, pages
  32--46, 2009.

\bibitem{Baaz2008}
Matthias Baaz, Stefan Hetzl, Alexander Leitsch, Clemens Richter, and Hendrik
  Spohr.
\newblock {CERES}: An analysis of {F}\"urstenberg's proof of the infinity of
  primes.
\newblock {\em Theoretical Computer Science}, 403:160--175, 2008.

\bibitem{Baaz2000}
Matthias Baaz and Alexander Leitsch.
\newblock Cut-elimination and redundancy-elimination by resolution.
\newblock {\em Journal of Symbolic Computation}, 29(2):149--176, 2000.

\bibitem{Baaz2006a}
Matthias Baaz and Alexander Leitsch.
\newblock Towards a clausal analysis of cut-elimination.
\newblock {\em Journal of Symbolic Computation}, 41(3-4):381--410, 2006.

\bibitem{Baelde2007}
David Baelde and Dale Miller.
\newblock Least and greatest fixed points in linear logic.
\newblock In {\em LPAR 2007}, volume 4790 of {\em LNCS}, pages 92--106, 2007.

\bibitem{Brainerd1974}
Walter~S. Brainerd and Lawrence~H. Landweber.
\newblock {\em Theory of Computation}.
\newblock John Wiley \& Sons, Inc., New York, NY, USA, 1974.

\bibitem{Brotherston2005}
James Brotherston.
\newblock Cyclic proofs for first-order logic with inductive definitions.
\newblock In B.~Beckert, editor, {\em Automated Reasoning with Analytic
  Tableaux and Related Methods}, volume 3702 of {\em Lecture Notes in Computer
  Science}, pages 78--92, 2005.

\bibitem{Gentzen1935}
Gerhard Gentzen.
\newblock {U}ntersuchungen \"uber das logische {S}chlie{\ss}en {I}.
\newblock {\em Mathematische Zeitschrift}, 39(1):176--210, December 1935.

\bibitem{HLWWP08c}
Stefan Hetzl, Alexander Leitsch, Daniel Weller, and Bruno Woltzenlogel~Paleo.
\newblock Herbrand sequent extraction.
\newblock In Serge Autexier, John Campbell, Julio Rubio, Volker Sorge, Masakazu
  Suzuki, and Freek Wiedijk, editors, {\em Intelligent Computer Mathematics},
  volume 5144 of {\em Lecture Notes in Computer Science}, pages 462--477.
  Springer Berlin, 2008.

\bibitem{McDowell2000}
Raymond McDowell and Dale Miller.
\newblock Cut-elimination for a logic with definitions and induction.
\newblock {\em Theoretical Computer Science}, 232(1--2):91--119, 2000.

\bibitem{Robinson1965}
J.~A. Robinson.
\newblock A machine-oriented logic based on the resolution principle.
\newblock {\em Journal of the ACM}, 12(1):23--41, 1965.

\bibitem{Sprenger2003}
Christoph Sprenger and Mads Dam.
\newblock On the structure of inductive reasoning: Circular and tree-shaped
  proofs in the $\mu$-calculus.
\newblock In {\em FOSSACS 2003}, volume 2620 of {\em LNCS}, pages 425--440,
  2003.

\bibitem{Tait1968}
William~W. Tait.
\newblock Normal derivability in classical logic.
\newblock In {\em The Syntax and Semantics of Infinitary Languages}, volume~72
  of {\em Lecture Notes in Mathematics}, pages 204--236. Springer Berlin, 1968.

\bibitem{Takeuti1975}
Gaisi Takeuti.
\newblock {\em Proof Theory}.
\newblock North-Holland/American Elsevier, 1975.

\end{thebibliography}

\end{document}